\documentclass[11pt]{article}

\usepackage[margin=1in]{geometry}

\usepackage{amsmath,amssymb,amsthm,mathtools}
\usepackage{enumitem}
\usepackage{booktabs}

\usepackage{algorithm}
\usepackage{algpseudocode}
\floatstyle{ruled}
\restylefloat{algorithm}

\usepackage{tikz}
\usetikzlibrary{arrows.meta,positioning}

\usepackage[hidelinks]{hyperref}
\usepackage{csquotes}
\usepackage[numbers,sort&compress]{natbib}
\newcommand{\textcite}[1]{\citet{#1}}
\newcommand{\parencite}[1]{\citep{#1}}

\usepackage{microtype}

\newcommand{\instsize}[1]{\lVert #1\rVert}

\theoremstyle{definition}
\newtheorem{definition}{Definition}[section]
\newtheorem{problem}{Problem}[section]
\newtheorem{remark}{Remark}[section]
\newtheorem{example}{Example}[section]

\theoremstyle{plain}
\newtheorem{theorem}[definition]{Theorem}
\newtheorem{lemma}[definition]{Lemma}
\newtheorem{proposition}[definition]{Proposition}
\newtheorem{corollary}[definition]{Corollary}

\newcommand{\A}{\mathsf{A}}

\newcommand{\Vars}{\mathsf{V}}
\newcommand{\Fun}{\mathsf{F}}
\newcommand{\Sol}{\mathrm{Sol}}
\newcommand{\Img}{\mathrm{Im}}
\newcommand{\Disp}{\mathrm{Disp}}
\newcommand{\PerfSpec}{\operatorname{Spec}}
\newcommand{\Sn}{S_n}
\DeclareMathOperator{\Guess}{Guess}
\DeclareMathOperator{\Def}{Def}
\DeclareMathOperator{\Src}{Src}

\title{\textbf{Term Coding and Dispersion:}\\
\textbf{Exact and Asymptotic Decision Problems}}

\author{S\o ren Riis\\
Queen Mary University of London\\
\texttt{s.riis@qmul.ac.uk}}

\date{}

\begin{document}
\maketitle

\begin{abstract}
Let $\mathbf t=(t_1,\ldots,t_r)$ be a tuple of terms defining, under an interpretation on an
$n$-element alphabet, a map $A^k\to A^r$.  We study the decision theory of its maximum image size
$\Disp_n(\mathbf t)$, separating exact perfect dispersion from asymptotic rate.

Building on the term-cut theorem of Riis and Gadouleau, we prove that every eventual threshold strictly
between consecutive integer powers is decidable in polynomial time: if
$n^d<h_d(n)=o(n^{d+1})$, then $\Disp_n(\mathbf t)\ge h_d(n)$ eventually if and only if the term-cut
exponent satisfies $D(\mathbf t)\ge d+1$.  For the exact problem, we introduce the perfect-alphabet spectrum
and prove that it is multiplicatively closed, that a nonempty spectrum forces full rate, and that the converse
fails.  We completely characterize the one-output case.

On square instances, perfect dispersion is precisely finite square term bijectivity.  We give explicit
linear-size padding reductions from three-dimensional square bijectivity to perfect dispersion with every fixed
output dimension $r\ge3$.  We also characterize scalar-linear witnesses by a determinant polynomial, obtaining
decidability over fixed finite fields, over extensions of a fixed characteristic, and over arbitrary finite
fields.  General square bijectivity remains open.  The principal mathematical results have been machine-checked
in Lean.
\end{abstract}

\paragraph{Keywords.}
Term coding; dispersion; network coding; guessing number; graph entropy; max-flow/min-cut; square bijectivity; complexity landscape.

\section{Introduction}
\label{sec:intro}
Many problems in coding theory, network coding, and extremal combinatorics can be phrased as a
\emph{code design} problem:
we are given a finite set of local functional constraints, and we must choose the local functions
so as to maximise the number of global configurations consistent with all constraints.
Term coding provides a uniform equational language for this optimisation problem.  The present paper studies
its dispersion subclass, in which a tuple of terms defines a finite map and the objective is its maximum image
size.  The main point is a sharp separation between two questions that initially look similar: asymptotic growth
is governed by a polynomial-time min-cut invariant, whereas exact surjectivity leads to multiplicative spectra,
finite square bijectivity, and a substantial open decision problem.  Our contributions make this separation
precise through threshold, spectrum, transfer, and scalar-linear classification theorems.

\subsection{The computational problem}
Fix a finite alphabet $\A$ of size $n$.
(When convenient---for example, when we partition $\A$ into blocks---we identify $\A$ with
$[n]:=\{0,1,\dots,n-1\}$.)
A \emph{term-coding instance} consists of a finite set of variables
$\Vars=\{x_1,\dots,x_v\}$, a finite signature $\Fun$ of function symbols with specified arities,
and a finite set of constraints given as term equations
\[
\Gamma=\{\, s_1=t_1,\dots,s_m=t_m\,\},
\]
where each $s_i,t_i$ is a term built from $\Vars$ and symbols in $\Fun$.
An \emph{interpretation} $\mathcal I$ assigns to each $k$-ary symbol $f\in\Fun$ a concrete total function
$f^\mathcal I:\A^k\to \A$.

The variables and function symbols play different mathematical roles.  The variables in $\Vars$ range over
$\A$ and determine the assignments counted as codewords, whereas the optimiser chooses the interpretations of
the symbols in $\Fun$.  Thus the local encoding and decoding functions are part of a feasible design, as in
network coding.

Given an assignment $\mathbf a\in\A^{\Vars}$, every term $u$ over $(\Vars,\Fun)$ evaluates to an element
$u^{\mathcal I}(\mathbf a)\in\A$ by the usual recursion on the syntax of $u$ (variables are read from $\mathbf a$
and function symbols are applied using $\mathcal I$). We write $\mathcal I \models s[\mathbf a]=t[\mathbf a]$
to mean $s^{\mathcal I}(\mathbf a)=t^{\mathcal I}(\mathbf a)$.

For a fixed $\mathcal I$ we write
\[
\Sol_\mathcal I(\Gamma;n)
:=\bigl\{\mathbf a\in \A^{\Vars} : \mathcal I \models s_i[\mathbf a]=t_i[\mathbf a]\text{ for all }i\bigr\}
\]
for the set of satisfying assignments (codewords).
The central optimisation quantity is
\[
\Sn(\Gamma)\ :=\ \max_{\mathcal I}\ |\Sol_\mathcal I(\Gamma;n)|.
\]
\begin{problem}[TERM-CODING-MAX]
\label{prob:termcoding-max}
\textbf{Input:} a finite term-coding instance $\Gamma$ and an alphabet size $n\ge 2$.
\textbf{Output:} the maximum code size $S_n(\Gamma)$.
\end{problem}
\begin{example}[A toy index-coding instance]
\label{ex:indexcoding}
Let $\Vars=\{x_1,x_2,x_3,y\}$ and let $\Fun$ contain a ternary symbol $f$ (an \emph{encoder}) and three binary
symbols $h_1,h_2,h_3$ (\emph{decoders}).  Consider the equations
\[
  y=f(x_1,x_2,x_3),\qquad
  x_1=h_1(y,x_2),\qquad
  x_2=h_2(y,x_3),\qquad
  x_3=h_3(y,x_1).
\]
An interpretation $\mathcal I$ chooses the broadcast rule $f^{\mathcal I}:\A^3\to\A$ and the local decoding
rules $h_i^{\mathcal I}:\A^2\to\A$.
For a fixed alphabet size $n$, the optimised value $S_n(\Gamma)$ asks: \emph{how many message triples}
$(x_1,x_2,x_3)\in\A^3$ can be supported by a single broadcast symbol $y\in\A$ when receiver~$i$ knows
$x_{i+1}$ as side information (indices modulo $3$)?
This is a small term-coding instance of the same flavour as index coding: one chooses an
encoding and decoding scheme to maximise the set of simultaneously valid global configurations.
\end{example}

\begin{remark}[Combinatorial graded feasibility (omitted here)]
Term coding also yields graded versions of classical existence problems in algebraic combinatorics
(e.g.\ Steiner quasigroup identities, Latin squares, and related design axioms), by maximising the fraction of
assignments on which a set of identities holds.
We do not pursue those examples here, since they are not needed for the dispersion and complexity results below.
See \parencite{riis2026termcoding} for related graded-feasibility examples and a broader entropic viewpoint.
\end{remark}

\subsection{A structural invariant: dependency graphs and guessing numbers}
A key step is to convert $\Gamma$ into a normal form that exposes its functional dependencies
(replacing nested terms by auxiliary variables), and then into a \emph{diversified} form where each
function occurrence is given its own symbol.
From the diversified normal form we extract a directed \emph{dependency graph}
$G_\Gamma$ whose vertices are variables and whose edges encode the relation ``this variable is functionally determined by
those variables''.
This makes $\Sn(\Gamma)$ comparable to the number of winning configurations in a \emph{guessing game} on
$G_\Gamma$, and hence to the graph's \emph{guessing number}.
Guessing numbers are central in network and index coding
\parencite{riis2006information,riis2007graph,gadouleau2011graph} and naturally quantify the maximum joint entropy
consistent with deterministic dependencies.

\subsection{The complexity landscape for dispersion}
We then focus on \emph{dispersion}, where the constraints describe the image size of a term-defined map.
A (single-sorted) dispersion instance is given by $k$ input variables
$\mathbf x=(x_1,\dots,x_k)$ and a list of $r$ output terms
$\mathbf t(\mathbf x)=(t_1(\mathbf x),\dots,t_r(\mathbf x))$, inducing under interpretation $\mathcal I$ the map
\[
\Theta^\mathcal I:\A^k\to \A^r,\qquad
\mathbf a\mapsto \bigl(t_1^\mathcal I(\mathbf a),\dots,t_r^\mathcal I(\mathbf a)\bigr).
\]
The dispersion objective is
\[
\Disp_n(\mathbf t)\ :=\ \max_{\mathcal I}\ |\Img(\Theta^\mathcal I)|.
\]

We also consider the associated \emph{dispersion exponent}
\[
D(\mathbf t)\ :=\ \lim_{n\to\infty}\frac{\log \Disp_n(\mathbf t)}{\log n},
\]
which exists for every dispersion instance and is an integer equal to a minimum term-cut
(Theorem~\ref{thm:disp-exponent}).

\begin{example}[A single-function \emph{diamond} gadget]
\label{ex:diamond-intro}
Riis and Gadouleau~\parencite{riis2019max} exhibited a related four-term, single-function instance with the
same full-rate-but-not-perfect phenomenon.  The following variant admits a particularly short collision proof.
Let $x,y,z,w\in\A$ be inputs and let $f$ be a \emph{single} binary function symbol.
Consider the $4$-tuple of output terms
\[
\mathbf t(x,y,z,w)\ :=\ \bigl( f(x,y),\, f(x,z),\, f(y,w),\, f(z,w)\bigr),
\]
which induces under an interpretation $\mathcal I$ a map $\Theta^{\mathcal I}:\A^4\to\A^4$.
For $n\ge2$, this map can \emph{never} be bijective (hence never surjective): restrict to inputs with $y=z$.
There are $n^3$ such inputs $(x,y,z,w)=(x,y,y,w)$, but the corresponding outputs have the repeated form
$(a,a,b,b)$ and therefore take at most $n^2$ distinct values.
Collisions are thus unavoidable by the pigeonhole principle.
On the other hand, the term-cut exponent of this instance is full, $D(\mathbf t)=4$
(see Theorem~\ref{thm:disp-exponent} and Lemma~\ref{lem:diamond-maxflow}).
So at the level of coarse asymptotic rates the gadget behaves as though it has full dimension, even though perfect
dispersion is impossible.
\end{example}

Dispersion exhibits a sharp contrast between the computability of its asymptotic and exact questions:
\begin{itemize}[leftmargin=*]
\item \textbf{Every gap between consecutive integer powers has a polynomial-time threshold test.}
The term-cut theorem makes $D(\mathbf t)$ an integer computable by max-flow.  We turn this structural fact into
an exact decision theorem: for every $h_d$ with $n^d<h_d(n)=o(n^{d+1})$, the eventual inequality
$\Disp_n(\mathbf t)\ge h_d(n)$ holds exactly when $D(\mathbf t)\ge d+1$
(Theorem~\ref{thm:ptime-threshold}).

\item \textbf{Perfect dispersion has an arithmetic spectrum.}
The set of alphabet sizes on which a tuple is perfectly dispersive is closed under multiplication.  If this
spectrum is nonempty, then the tuple has full asymptotic rate; the converse fails.  For one output we determine
the spectrum completely (Theorem~\ref{thm:perfect-spectrum-structure} and
Theorem~\ref{thm:one-output}).

\item \textbf{Square perfect dispersion is square bijectivity.}
On instances with $k$ inputs and $k$ outputs, perfect dispersion is equivalent to the existence of a finite
interpretation making the induced map $A^k\to A^k$ bijective.  Explicit linear-size padding reductions transfer
$\textsc{BIJ}_3$ to every fixed output dimension $r\ge3$ (Theorem~\ref{thm:reduction-nr}).

\item \textbf{The scalar-linear fragment has an exact algebraic criterion.}
For square tuples, scalar-linear bijectivity is equivalent to the nonvanishing of a determinant polynomial.
This gives decision procedures over each fixed finite field, over extensions of a fixed characteristic, and
over some finite field of arbitrary characteristic (Theorem~\ref{prop:scalar-det-criterion-main} and
Corollary~\ref{cor:scalar-linear-classification}).
\end{itemize}

\subsection{Main results}
\label{subsec:new-results}
The main contributions fall into five groups.  Together they distinguish the asymptotic problem, controlled by
a term-cut, from the exact problem, controlled by finite-alphabet structure.  Sections~\ref{sec:tc}--\ref{sec:dispersion}
develop the foundations used in their proofs; the relation of those foundations to earlier work is stated in
Section~\ref{subsec:provenance}.
\begin{enumerate}[label=(\textbf{N\arabic*}), leftmargin=*]
\item \emph{Threshold classification.}
      Theorem~\ref{thm:ptime-threshold} classifies every eventual threshold strictly between $n^d$ and
      $n^{d+1}$ and yields a polynomial-time decision algorithm.  The exact constant-one term-cut upper bound is
      essential at the lower edge $n^d+1$.
\item \emph{Perfect-alphabet spectra.}
      Definition~\ref{def:perfect-spectrum} packages exact surjectivity across all finite alphabet sizes.
      Theorem~\ref{thm:perfect-spectrum-structure} proves multiplicative closure and the implication from a
      nonempty spectrum to full rate, while the diamond proves that full rate is insufficient.  Theorem~\ref{thm:one-output}
      gives the complete one-output classification.
\item \emph{Square-bijectivity transfer.}
      Lemma~\ref{lem:square-perfect-equals-bij} identifies the square exact problem, and
      Theorems~\ref{thm:reduction-kk} and~\ref{thm:reduction-nr} give explicit linear-size padding reductions
      from three-dimensional square bijectivity to every fixed output dimension $r\ge3$.  These are unconditional
      equivalence and reduction theorems; undecidability is not assumed.
\item \emph{Scalar-linear classification.}
      Theorem~\ref{prop:scalar-det-criterion-main} associates a determinant polynomial to every square tuple.
      Corollary~\ref{cor:scalar-linear-classification} then resolves scalar-linear existence over fixed finite
      fields, fixed-characteristic extensions, and finite fields of arbitrary characteristic.
\item \emph{Explicit least-alphabet results.}
      Proposition~\ref{prop:small-spectra} gives two two-variable tuples whose least perfect alphabet sizes are
      respectively three and four.  These examples show that the exact spectrum contains information invisible
      to the exponent alone.
\end{enumerate}

\subsection{Relation to earlier work and self-contained scope}
\label{subsec:provenance}
Two earlier sources provide the foundations used here.  Riis and Gadouleau~\parencite{riis2019max} introduced
dispersion for term-defined communication networks and proved the integral term-cut exponent theorem and its
max-flow computation.  They also exhibited a related four-term, single-function instance with full asymptotic
rate but no perfect interpretation.  The present paper is a self-contained continuation of the Term Coding
framework developed in the earlier paper~\parencite{riis2026termcoding}.  That paper establishes the general
equational framework, the normalisation and diversification pipeline, the guessing-game sandwich, the existence
of term-coding rates, and the embedding of dispersion into that framework.

For self-containment, Sections~\ref{sec:tc}--\ref{sec:dispersion} recall these ingredients, give complete proofs
of the forms used here, and make the max-flow construction explicit.  Relative to the 2019 paper, the genuinely
new material is the threshold decision theorem, the exact spectrum and one-output results, the square-padding
transfer, and the scalar-linear and finite-spectrum analysis.  Relative to the earlier Term Coding paper, the
genuinely new material is the exact-versus-asymptotic decision theory developed in
Sections~\ref{sec:complexity-dichotomy} and~\ref{sec:bij-status}.  This division of provenance is maintained
throughout the statements and citations.

The paper makes no converse reduction from arbitrary rectangular perfect dispersion to square bijectivity and
no unconditional undecidability claim.  The decidability of unrestricted square bijectivity and of
$\textsc{BIJ}_k$ for fixed $k\ge2$ remains open.

\subsection{Machine verification}
\label{subsec:verification}
The principal mathematical results of this paper have been formalised and machine-checked in Lean~4 against
Mathlib.  The development builds on the machine-checked formalisation for the earlier Term Coding paper
\parencite{riis2026termcoding}, available at
\url{https://github.com/SR123/term-coding-lean}
(DOI \texttt{10.5281/zenodo.22132608}).  The development for the present paper extends it and is archived at:
\begin{quote}
\noindent\textbf{GitHub:} \url{https://github.com/SR123/term-coding-dispersion-lean}\\
\textbf{DOI:} \href{https://doi.org/10.5281/zenodo.22228344}{\texttt{10.5281/zenodo.22228344}}
\end{quote}
The term-cut upper bound and the tagged-alphabet routing lower bound used in
Theorem~\ref{thm:disp-exponent} have also been checked in this development, so the theorem is machine-checked
modulo Menger's theorem.  The product-alphabet construction underlying
Theorem~\ref{thm:perfect-spectrum-structure}\textit{(i)} has likewise been checked.  The polynomial-time cost
analyses, the matrix-linear perspective in Section~\ref{subsec:matrix-linear}, and the literature-status statements in
Section~\ref{sec:bij-status} are outside the machine-checked scope.

\subsection{Roadmap}
Section~\ref{sec:tc} defines term coding formally.
Section~\ref{sec:preprocess} describes the preprocessing pipeline and the dependency graph.
Section~\ref{sec:guess} recalls guessing games and states the guessing-number sandwich bound.
Section~\ref{sec:dispersion} introduces dispersion, proves the term-cut exponent theorem, and shows how
dispersion embeds into term coding.
Section~\ref{sec:complexity-dichotomy} proves the threshold and spectrum theorems, identifies square perfect
dispersion with square bijectivity, and gives the padding reductions.
Section~\ref{sec:bij-status} proves the low-dimensional and scalar-linear classifications and then delineates
the unresolved square-bijectivity frontier.

\subsection{Notation quick reference}
\label{sec:notation}
\begin{center}
\begin{tabular}{@{}ll@{}}
\toprule
Symbol & Meaning \\
\midrule
$\A$ & finite alphabet, $|\A|=n$ \\
$\Vars$ & set of term-coding variables, $|\Vars|=v$ \\
$\Fun$ & signature of function symbols (symbols + arities) \\
$\Gamma$ & term-coding instance (set of equations over $\Vars$ and $\Fun$) \\
$\Sol_{\mathcal I}(\Gamma;n)$ & solutions over alphabet size $n$ under interpretation $\mathcal I$ \\
$S_n(\Gamma)=\Sn(\Gamma)$ & maximum code size (optimised over interpretations) \\
$\Gamma^{\mathrm{cf}}$ & collision-free normal form instance \\
$\Gamma^{\mathrm{div}}$ & diversified instance \\
$G_\Gamma$ & dependency digraph of a functional instance \\
$\Src(\Gamma)$ & sources (variables not defined by any equation) \\
$W(G,S;n)$ & winning configurations in guessing game on $(G,S)$ \\
$\Guess(G,S;n)$ & guessing number $\log_n W(G,S;n)$ \\
$\mathbf t(\mathbf x)$ & dispersion instance (output terms in input variables $\mathbf x$) \\
$\Disp_n(\mathbf t)$ & maximum image size (dispersion) at alphabet size $n$ \\
$\PerfSpec(\mathbf t)$ & perfect-alphabet spectrum $\{n\ge2:\Disp_n(\mathbf t)=n^r\}$ \\
$D(\mathbf t)$ & dispersion exponent (integer; minimum term-cut, computable by max-flow) \\
\bottomrule
\end{tabular}
\end{center}

\section{Term coding instances}
\label{sec:tc}

We work in a finitary single-sorted signature in the sense of universal algebra/first-order logic.
For consistency with standard universal-algebra conventions, we adopt the following:
the signature includes $0$-ary symbols (constants) and is part of the instance description.

\subsection{Terms and instances}
A \emph{signature} $\Fun$ consists of function symbols, each with a fixed arity (including arity $0$).
The set of \emph{terms} over $\Fun$ and variables $\Vars$ is defined inductively:
variables are terms, and if $f\in\Fun$ is $k$-ary and $u_1,\dots,u_k$ are terms then
$f(u_1,\dots,u_k)$ is a term.

\begin{definition}[Term-coding instance]
A \emph{term-coding instance} consists of:
\begin{itemize}[leftmargin=*]
\item a finite set of variables $\Vars$;
\item a finite signature $\Fun$ (symbols together with their arities);
\item a finite set of term equations $\Gamma=\{s_i=t_i: i\in[m]\}$ over $\Vars$ and $\Fun$.
\end{itemize}
When $\Vars$ and $\Fun$ are clear from context we refer to the instance simply by $\Gamma$.
\end{definition}

\begin{definition}[Syntactic size of an instance]
\label{def:instance-size}
Let $\Gamma=\{\,s_i=t_i : i\in[m]\,\}$ be a term-coding instance.
We define the \emph{instance size} $\instsize{\Gamma}$ to be the total number of symbol occurrences in
all terms $s_i,t_i$ (counting each variable and each function-symbol occurrence once) plus the number of
equations $m$.
Equivalently, if each term is represented as a rooted syntax tree and we take a disjoint union of all
those trees, then $\instsize{\Gamma}$ is the number of nodes in that forest plus $m$.

For a dispersion instance $\mathbf t=(t_1,\dots,t_r)$ presented as a shared term DAG, we define
$\instsize{\mathbf t}$ to be the number of DAG nodes plus $r$; a syntax tree is the special case with no
sharing.  The two measures are not polynomially related in general: if $u_0=x$ and
$u_{j+1}=f(u_j,u_j)$, then $u_j$ has tree size $2^{j+1}-1$ but only $j+1$ distinct subterms.
All polynomial-time statements below use the shared-DAG measure.
\end{definition}

\begin{remark}[Why we fix an explicit size measure]
Whenever we claim an algorithm runs in polynomial time, the underlying input length is measured by
$\instsize{\Gamma}$ (or $\instsize{\mathbf t}$ for dispersion instances).
Input term tuples are assumed to be supplied as shared term DAGs; a syntax tree is a special case.
\end{remark}

\begin{definition}[Interpretation, solutions, and maximum code size]
Fix $n\ge 1$ and an alphabet $\A$ of size $n$.
An \emph{interpretation} $\mathcal I$ assigns to each $k$-ary $f\in\Fun$ a total function
$f^\mathcal I:\A^k\to\A$.
Given an assignment $\mathbf a\in\A^{\Vars}$, each term $u$ evaluates to a value
$u^{\mathcal I}(\mathbf a)\in\A$ by recursion on $u$.
An assignment $\mathbf a$ satisfies an equation $s=t$ if $s^{\mathcal I}(\mathbf a)=t^{\mathcal I}(\mathbf a)$.

The solution set $\Sol_\mathcal I(\Gamma;n)\subseteq \A^{\Vars}$ consists of all assignments of values in
$\A$ to $\Vars$ that satisfy every equation in $\Gamma$ under $\mathcal I$.
The \emph{maximum code size} is
\[
\Sn(\Gamma)\ :=\ \max_{\mathcal I} |\Sol_\mathcal I(\Gamma;n)|.
\]
\end{definition}

\subsection{Entropy normalisation (intuition)}
If we choose a solution uniformly at random from $\Sol_\mathcal I(\Gamma;n)$ and write the induced random
variables as $(X_x)_{x\in\Vars}$, then
\[
\log_n |\Sol_\mathcal I(\Gamma;n)| \;=\; H(X_{\Vars}),
\]
where $H$ is Shannon entropy normalized to base $n$.
Each equation of the form $z=f(u_1,\dots,u_k)$ enforces a deterministic dependence
$H(X_z\mid X_{u_1},\dots,X_{u_k})=0$.
This motivates the graph-entropy viewpoint.

\smallskip
\noindent
\textbf{Why the equality holds.}
If a random variable is uniformly distributed on a finite set of size $M$, then its Shannon entropy is $\log M$
(in whatever logarithmic base is used).
Thus, when we normalize logarithms to base $n$, uniform choice from the solution set gives
$H(X_{\Vars})=\log_n|\Sol_\mathcal I(\Gamma;n)|$.

\smallskip
\noindent
\textbf{Why term equations correspond to zero conditional entropy.}
An equation $z=f(u_1,\dots,u_k)$ asserts that $X_z$ is a deterministic function of $(X_{u_1},\dots,X_{u_k})$,
hence $H(X_z\mid X_{u_1},\dots,X_{u_k})=0$.
Conversely, any deterministic dependence of $X_z$ on $(X_{u_1},\dots,X_{u_k})$ can be represented by introducing an
appropriate $k$-ary function symbol and the corresponding equation.

\smallskip
\noindent
We emphasise that the results in this paper are stated and proved combinatorially; the entropy language merely
provides intuition for why max-flow/min-cut certificates appear in dispersion problems.

\section{Preprocessing: normalisation, quotienting, and diversification}
\label{sec:preprocess}

This section describes the pipeline that turns an arbitrary instance into a canonical graph-based
object.  The transformations are standard, but we state them explicitly because later sections use them
as a black box.  Concretely, we will use:
\begin{enumerate}[label=(\roman*)]
  \item flattening to normal form (Proposition~\ref{prop:flatten});
  \item quotienting explicit equalities (Lemma~\ref{lem:quotient});
  \item \emph{collision quotienting} to enforce unique left-hand sides (Lemma~\ref{lem:collision-quotient});
  \item (for the graph-based viewpoint) restricting attention to functional instances (Definition~\ref{def:fnf}),
        which are the natural format for deterministic dependency graphs and include all dispersion encodings; and
  \item diversification (Subsection~\ref{subsec:div}).
\end{enumerate}
Steps (i)--(iii) preserve $\Sn(\Gamma)$ for every alphabet size $n$ (up to the obvious projection on auxiliary
variables).  Diversification does \emph{not} preserve $S_n(\Gamma)$ in general; instead it provides a monotone upper bound
(Lemma~\ref{lem:div-easy}) and, for collision-free functional instances, a converse bound after shrinking the
alphabet by a factor $v=|\Vars|$ (Theorem~\ref{thm:sandwich}).

\begin{remark}[Effectiveness and size blow-up]
All preprocessing steps above are effective and can be carried out in time polynomial in the size of the
syntactic description of $\Gamma$.
Flattening introduces at most one auxiliary variable per subterm occurrence,
the quotienting steps (Lemmas~\ref{lem:quotient} and \ref{lem:collision-quotient}) can be implemented with a
union--find data structure, and diversification duplicates each defining equation once.
In particular, the dependency graph $G_\Gamma$ can be constructed in polynomial time from $\Gamma$.
\end{remark}

\subsection{Normal form (flattening)}
A term-coding instance is in \emph{normal form} if every non-equality equation has the shape
\begin{equation}
\label{eq:normalform}
f(u_1,\dots,u_k)=v\qquad (k\ge 0),
\end{equation}
where $v\in\Vars$ is a variable, the $u_i$ are variables, and $f$ is a function symbol of arity $k$.
We allow $k=0$ (so $f$ is a \emph{constant} symbol and the equation is simply $f()=v$).

\begin{proposition}[Flattening to normal form]
\label{prop:flatten}
For every term-coding instance $\Gamma$ there exists an instance $\Gamma^{\mathrm{nf}}$ over the \emph{same}
signature whose set of variables is $X\cup Z$ (where $X$ are the original variables and $Z$ are fresh
auxiliary variables) such that:
\begin{enumerate}[label=(\alph*)]
\item every equation of $\Gamma^{\mathrm{nf}}$ is either a variable equality $u=v$ or has depth~$1$ and the form
      $f(u_1,\dots,u_k)=v$ with $u_1,\dots,u_k,v\in X\cup Z$; and
\item for every finite alphabet $\A$ of size $n$ and every interpretation $\mathcal I$ of the function symbols on $\A$,
      restriction to the original variables induces a bijection
      $\Sol_\mathcal I(\Gamma^{\mathrm{nf}};n)\to \Sol_\mathcal I(\Gamma;n)$.
\end{enumerate}
In particular, $\Sn(\Gamma)=\Sn(\Gamma^{\mathrm{nf}})$ for all $n$.
\end{proposition}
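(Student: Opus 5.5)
The plan is to introduce one fresh auxiliary variable for each non-variable subterm occurrence and then show that, under any fixed interpretation, the values of these auxiliaries are forced. Fix $\Gamma=\{s_i=t_i:i\in[m]\}$ over variables $X$. For every occurrence of a non-variable subterm $u=f(u_1,\dots,u_k)$ (with $k\ge0$) in some $s_i$ or $t_i$, introduce a fresh variable $z_u\in Z$. For uniformity, set $z_u:=u$ when $u$ is a variable in $X$. The instance $\Gamma^{\mathrm{nf}}$ has two kinds of equations. For each such occurrence it contains the \emph{defining equation} $f(z_{u_1},\dots,z_{u_k})=z_u$. For each original equation it contains the variable equality $z_{s_i}=z_{t_i}$. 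The signature is unchanged, and every equation has the shape required in (a). The number of new variables is at most the total number of symbol occurrences, so the construction is effective. Treating each occurrence separately, rather than sharing equal subterms, keeps the bookkeeping trivial and is harmless.

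For (b), fix $\A$ with $|\A|=n$ and an interpretation $\mathcal I$. The key claim, proved by structural induction on the subterm $u$, is as follows. If $\mathbf b\in\A^{X\cup Z}$ satisfies all defining equations, then $b_{z_u}=u^{\mathcal I}(\mathbf b|_X)$. The base case is a variable, where $z_u=u$. In the inductive step, the defining equation gives $b_{z_u}=f^{\mathcal I}(b_{z_{u_1}},\dots,b_{z_{u_k}})$, and the induction hypothesis applies to each $u_j$. Constants ($k=0$) are covered by the same step.

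From the claim, both halves of the bijection follow.

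\emph{Well-defined map into $\Sol_{\mathcal I}(\Gamma;n)$.} If $\mathbf b\in\Sol_{\mathcal I}(\Gamma^{\mathrm{nf}};n)$, then the equality $z_{s_i}=z_{t_i}$ together with the claim gives $s_i^{\mathcal I}(\mathbf b|_X)=t_i^{\mathcal I}(\mathbf b|_X)$. Hence $\mathbf b|_X\in\Sol_{\mathcal I}(\Gamma;n)$.

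\emph{Injectivity.} By the claim, every auxiliary coordinate of $\mathbf b$ is determined by $\mathbf b|_X$.

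\emph{Surjectivity.} Given $\mathbf a\in\Sol_{\mathcal I}(\Gamma;n)$, define $b_{z_u}:=u^{\mathcal I}(\mathbf a)$ and $\mathbf b|_X:=\mathbf a$. The defining equations hold by the recursive definition of term evaluation. The equalities hold because $\mathbf a$ satisfies $\Gamma$.

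Since the bijection exists for every $\mathcal I$, taking maxima over $\mathcal I$ gives $\Sn(\Gamma)=\Sn(\Gamma^{\mathrm{nf}})$.

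No step is a real obstacle. The only points needing care are notational: indexing auxiliaries by occurrences so that distinct occurrences never conflict, and handling top-level variables and constants uniformly, both of which the convention $z_u=u$ for $u\in X$ takes care of. The structural induction is the heart of the argument, and it is routine.
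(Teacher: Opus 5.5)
Your proposal is correct and takes essentially the same route as the paper: auxiliary variables with depth-one defining equations, a structural induction showing that every solution of $\Gamma^{\mathrm{nf}}$ gives each auxiliary the value of its subterm evaluated on the restriction to $X$, and restriction and extension as mutually inverse maps. The differences are cosmetic. The paper introduces one auxiliary per distinct subterm rather than one per occurrence; either choice works here, since left-hand collisions are handled later by collision quotienting. Your explicit variable equalities $z_{s_i}=z_{t_i}$ spell out the paper's rewriting of the original equations.
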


\begin{proof}
Process every term that occurs in $\Gamma$ from the leaves upward.
Whenever a (non-variable) subterm $s=f(s_1,\dots,s_k)$ occurs, introduce a fresh auxiliary variable $z_s$
(if it has not been introduced already), and add the normal-form equation
\[
f(z_{s_1},\dots,z_{s_k}) = z_s,
\]
where for a variable $x$ we set $z_x:=x$.
Finally, in every original equation $s=t$ of $\Gamma$, replace each maximal proper subterm $u$ by its
auxiliary variable $z_u$.
The resulting finite system $\Gamma^{\mathrm{nf}}$ is in normal form by construction.

Fix $n$, an alphabet $\A$ of size $n$, and an interpretation $\mathcal I$ on $\A$.
If $\alpha:X\to\A$ satisfies $\Gamma$, extend it to $\widehat\alpha:X\cup Z\to\A$ by setting
$\widehat\alpha(z_s)=s^{\mathcal I}(\alpha)$ for every auxiliary variable $z_s$.
Then $\widehat\alpha$ satisfies all added equations, and the rewritten original equations hold because every
occurrence of a subterm was replaced by a variable constrained to equal that subterm.

Conversely, let $\beta\in\Sol_\mathcal I(\Gamma^{\mathrm{nf}};n)$ and write $\beta_X=\beta|_X$.
By induction on term depth, the added equations force $\beta(z_s)=s^{\mathcal I}(\beta_X)$ for every auxiliary
variable $z_s$. Substituting this identity into the rewritten original equations shows that $\beta_X$ satisfies
$\Gamma$.
The two maps $\alpha\mapsto\widehat\alpha$ and $\beta\mapsto\beta|_X$ are inverse bijections.
\end{proof}

\subsection{Quotienting forced variable equalities}
Flattening may introduce explicit variable equalities (e.g.\ $x_i=x_j$) in addition to equations of the
shape \eqref{eq:normalform}.  These can be eliminated by quotienting variables.

\begin{lemma}[Quotienting variable equalities]
\label{lem:quotient}
Let $\Gamma$ be in normal form except that it may additionally contain equations of the form $x_i=x_j$
between variables.
Let $\sim$ be the equivalence relation on $\Vars$ generated by these equalities, and form the quotient
instance $\Gamma/\!\sim$ by replacing each variable by a chosen representative of its $\sim$-class and then
deleting all variable-equality equations.
Then for every $n$, every alphabet $\A$ of size $n$, and every interpretation $\mathcal I$ of the non-variable symbols on $\A$, restriction
to representatives induces a bijection
\[
\Sol_\mathcal I(\Gamma;n)\ \cong\ \Sol_\mathcal I(\Gamma/\!\sim;n),
\]
and hence $\Sn(\Gamma)=\Sn(\Gamma/\!\sim)$.
\end{lemma}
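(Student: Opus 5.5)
Fix $n$, an alphabet $\A$ of size $n$, and an interpretation $\mathcal I$. Let $\rho:\Vars\to\Vars$ send each variable to the chosen representative of its $\sim$-class, and let $R=\rho(\Vars)$ be the set of representatives, which is the variable set of $\Gamma/\!\sim$. For an assignment $\alpha\in\A^{\Vars}$ write $\alpha|_R$ for its restriction. In the other direction, for $\beta\in\A^R$ define the \emph{lift} $\widehat\beta:=\beta\circ\rho\in\A^{\Vars}$, which is constant on classes. The plan is to show that restriction and lifting are mutually inverse maps between the two solution sets.

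First, restriction maps $\Sol_\mathcal I(\Gamma;n)$ into $\Sol_\mathcal I(\Gamma/\!\sim;n)$. If $\alpha$ satisfies every equality $x_i=x_j$ in $\Gamma$, then $\alpha$ is constant on each generating pair. Hence it is constant on every $\sim$-class, by a routine induction on the length of a chain of generating equalities, using that the set of pairs on which $\alpha$ agrees is itself an equivalence relation. So $\alpha=\alpha|_R\circ\rho$.

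Each normal-form equation $f(u_1,\dots,u_k)=v$ of $\Gamma$ becomes $f(\rho u_1,\dots,\rho u_k)=\rho v$ in the quotient. Evaluating the latter at $\alpha|_R$ gives exactly the evaluation of the former at $\alpha|_R\circ\rho=\alpha$. So $\alpha|_R$ satisfies $\Gamma/\!\sim$.

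Conversely, if $\beta$ satisfies $\Gamma/\!\sim$, then $\widehat\beta$ is constant on classes, so it satisfies every equality $x_i=x_j$, since $x_i\sim x_j$. It also satisfies each normal-form equation of $\Gamma$, by the same substitution identity read backwards.

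The two maps are inverse. $\widehat\beta|_R=\beta$ because $\rho$ fixes $R$, and $\widehat{\alpha|_R}=\alpha$ by the class-constancy shown above. This gives the bijection for every $\mathcal I$. Because the signature is unchanged, interpretations of $\Gamma$ and of $\Gamma/\!\sim$ are the same objects, so taking the maximum over $\mathcal I$ yields $\Sn(\Gamma)=\Sn(\Gamma/\!\sim)$.

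There is no real obstacle here. The only point needing care is the class-constancy step, which is where the \emph{generated} equivalence relation, rather than the raw list of equalities, is used.
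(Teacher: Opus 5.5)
Your proof is correct and follows essentially the same route as the paper: restriction to the representatives and the class-constant extension $\beta\mapsto\beta\circ\rho$ are mutually inverse on the solution sets, because the quotient equations are the original ones with each variable replaced by its representative. You also check explicitly that every solution of $\Gamma$ is constant on $\sim$-classes. That fact is needed both for $\alpha|_R$ to satisfy $\Gamma/\!\sim$ and for restriction to be injective, and the paper leaves it implicit.
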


\begin{proof}
Let $R\subseteq \Vars$ be a set of representatives, one from each $\sim$-class.
Given any solution assignment $\alpha:\Vars\to\A$ of $\Gamma$, its restriction $\alpha|_R$ is a solution of
$\Gamma/\!\sim$ because every occurrence of a variable is replaced by its representative.
Conversely, any assignment $\beta:R\to\A$ extends uniquely to an assignment $\bar\beta:\Vars\to\A$ constant on
$\sim$-classes.  Since $\Gamma$ and $\Gamma/\!\sim$ have identical defining equations after replacing
variables by representatives, $\bar\beta$ satisfies $\Gamma$ if and only if $\beta$ satisfies $\Gamma/\!\sim$.
Thus restriction to $R$ is a bijection on solutions, and taking maxima over $\mathcal I$ yields the claim.
\end{proof}

\subsection{Eliminating left-hand collisions}
\label{subsec:collisions}
Normal form does not prevent the \emph{same} left-hand side term from occurring multiple times, e.g.
\(f(u_1,\dots,u_k)=v\) and \(f(u_1,\dots,u_k)=w\).
In the original instance these equations force $v=w$ for \emph{every} interpretation, but after
\emph{diversification} the two right-hand sides would become independent.
Since our lower-bound simulation in Theorem~\ref{thm:sandwich} relies on merging diversified symbols back
into a single symbol, we must quotient out these ``implicit'' equalities first.

\begin{definition}[Left-hand collisions and collision-free normal form]
\label{def:collisionfree}
Two normal-form equations
\(f(u_1,\dots,u_k)=v\) and \(f(u_1,\dots,u_k)=w\) with the same function symbol $f$ and the same ordered
argument tuple $(u_1,\dots,u_k)$ form a \emph{left-hand collision} (including the case $k=0$, i.e.\ a constant
symbol $f$).
An instance is \emph{collision-free} if it has no left-hand collision with $v\neq w$; equivalently, for each
pair $(f,(u_1,\dots,u_k))$ there is at most one right-hand side variable.
This is sometimes called a ``Term-DAG'' or ``no-collision'' condition.
\end{definition}

\begin{lemma}[Collision quotienting]
\label{lem:collision-quotient}
Let $\Gamma$ be a normal-form instance with no explicit variable-equality equations.
Let $\approx$ be the least equivalence relation on $\Vars$ closed under the following rule:
whenever $\Gamma$ contains equations
$f(u_1,\dots,u_d)=v$ and $f(u'_1,\dots,u'_d)=w$ with $u_i\approx u'_i$ for every $i$, declare
$v\approx w$.
Let $\Gamma^{\mathrm{cf}}:=\Gamma/\!\approx$ be the quotient instance obtained by identifying each
$\approx$-class (as in Lemma~\ref{lem:quotient}) and then deleting duplicate equations.
Then for every $n$, every finite alphabet $\A$ of size $n$, and every interpretation $\mathcal I$ on $\A$,
restriction to representatives induces a
bijection
\[\Sol_{\mathcal I}(\Gamma;n)\ \cong\ \Sol_{\mathcal I}(\Gamma^{\mathrm{cf}};n).\]
In particular, $\Sn(\Gamma)=\Sn(\Gamma^{\mathrm{cf}})$, and $\Gamma^{\mathrm{cf}}$ is collision-free.
\end{lemma}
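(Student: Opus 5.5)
The plan is to reduce everything to Lemma~\ref{lem:quotient}. The key step is a soundness claim: the relation $\approx$ holds in every solution. Concretely, for every interpretation $\mathcal I$ on $\A$, every $\alpha\in\Sol_{\mathcal I}(\Gamma;n)$, and all variables, $v\approx w$ implies $\alpha(v)=\alpha(w)$.

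To prove the claim, let $E_\alpha:=\{(v,w):\alpha(v)=\alpha(w)\}$. This is an equivalence relation on $\Vars$. We check that it is closed under the generating rule. Suppose $f(u_1,\dots,u_d)=v$ and $f(u'_1,\dots,u'_d)=w$ are in $\Gamma$, with $(u_i,u'_i)\in E_\alpha$ for all $i$. Since $\alpha$ satisfies both equations,
\[
\alpha(v)=f^{\mathcal I}(\alpha(u_1),\dots,\alpha(u_d))=f^{\mathcal I}(\alpha(u'_1),\dots,\alpha(u'_d))=\alpha(w).
\]
The case $d=0$ is included, since then both sides equal $f^{\mathcal I}()$. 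Because $\approx$ is the least equivalence relation closed under the rule, we get ${\approx}\subseteq E_\alpha$. This minimality argument is the only real content of the proof.

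Next, form the auxiliary instance $\Gamma^{+}$ by adding the equalities $v=w$ for all $v\approx w$ (a generating set suffices). By the claim, $\Sol_{\mathcal I}(\Gamma^{+};n)=\Sol_{\mathcal I}(\Gamma;n)$: adding equations can only shrink the solution set, and every solution of $\Gamma$ already satisfies the new equalities. The equivalence relation generated by the added equalities is exactly $\approx$. Lemma~\ref{lem:quotient} therefore gives a bijection, by restriction to representatives, from $\Sol_{\mathcal I}(\Gamma^{+};n)$ onto the solution set of the quotient instance.

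That quotient differs from $\Gamma^{\mathrm{cf}}$ only in containing duplicate copies of equations. Deleting duplicates does not change the solution set, so we obtain the bijection with $\Sol_{\mathcal I}(\Gamma^{\mathrm{cf}};n)$. Taking the maximum over $\mathcal I$ then gives $S_n(\Gamma)=S_n(\Gamma^{\mathrm{cf}})$.

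Finally, collision-freeness. Suppose $\Gamma^{\mathrm{cf}}$ contains $f([u_1],\dots,[u_d])=[v]$ and $f([u_1],\dots,[u_d])=[w]$. These come from equations $f(u_1,\dots,u_d)=v$ and $f(u'_1,\dots,u'_d)=w$ of $\Gamma$ with $u_i\approx u'_i$. The closure rule then yields $v\approx w$, so $[v]=[w]$. After identical equations are merged, each pair $(f,\text{argument tuple})$ has a unique right-hand side.

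The main obstacle is bookkeeping rather than mathematics. It lies in verifying that "least equivalence relation closed under the rule" is well defined (an intersection of all such relations is again such a relation) and that the minimality argument applies. Both are routine.
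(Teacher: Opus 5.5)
Your proof is correct and takes essentially the same approach as the paper. The paper also shows that every solution is constant on $\approx$-classes (by induction over the generation of $\approx$, which your minimality argument with $E_\alpha$ expresses equivalently), proves collision-freeness by lifting two quotient equations with the same left-hand side and applying the closure rule, and differs only cosmetically in arguing the restriction/extension bijection directly rather than adjoining the equalities to form $\Gamma^{+}$ and invoking Lemma~\ref{lem:quotient}.
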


\begin{proof}
Every satisfying assignment $\alpha$ is constant on each $\approx$-class.  Indeed, this follows by induction
over the generation of $\approx$: if $u_i\approx u'_i$ for every $i$, then
$\alpha(u_i)=\alpha(u'_i)$ for every $i$, and therefore
\[
\alpha(v)=f^{\mathcal I}(\alpha(u_1),\dots,\alpha(u_d))
=f^{\mathcal I}(\alpha(u'_1),\dots,\alpha(u'_d))=\alpha(w).
\]
Thus any solution of $\Gamma$ is uniquely determined by its restriction to a set of class representatives, and
this restriction satisfies $\Gamma^{\mathrm{cf}}$.
Conversely, any solution of $\Gamma^{\mathrm{cf}}$ extends uniquely to a solution of $\Gamma$ by making it
constant on $\approx$-classes; duplicate equations in $\Gamma$ are redundant under such an assignment.
Finally, two quotient equations with the same left-hand side lift to equations of $\Gamma$ with
argumentwise $\approx$-equivalent tuples.  Their outputs are therefore $\approx$-equivalent, so the equations
coincide after deduplication.  Hence $\Gamma^{\mathrm{cf}}$ is collision-free.
\end{proof}

\begin{remark}[Why closure is necessary]
The collisions present before quotienting do not suffice.  For example,
$f(x)=v$, $f(x)=w$, $g(v)=a$, and $g(w)=b$ first force $v\approx w$; after that identification the last two
equations form a new collision.  The closed relation above also identifies $a$ and $b$.
\end{remark}

\subsection{Sources and functional normal form}
After flattening and quotienting, it is convenient to distinguish \emph{source} variables (inputs) from
\emph{defined} variables (computed by an equation).

\begin{definition}[Defined and source variables]
For a normal-form instance $\Gamma$ on variables $\Vars$, let
\[
\Def(\Gamma):=\{\,v\in\Vars : \text{there exists an equation of the form $f(u_1,\dots,u_k)=v$ in $\Gamma$}\,\}
\]
and $\Src(\Gamma):=\Vars\setminus \Def(\Gamma)$.
\end{definition}

\begin{definition}[Functional normal form (FNF)]
\label{def:fnf}
A normal-form instance $\Gamma$ is in \emph{functional normal form (FNF)} if every non-source variable is
defined by \emph{exactly one} equation; i.e.\ for each $v\in\Def(\Gamma)$ there is a unique equation of the
form $f(u_1,\dots,u_k)=v$ in $\Gamma$.
\end{definition}

\begin{remark}[Why FNF?]
The cleanest reduction from term-coding constraints to a single guessing game assigns one local update rule
to each non-source variable.  This is exactly the FNF condition.
Dispersion and most network-coding encodings naturally yield FNF instances.
When a preprocessing step produces multiple defining equations for the same variable, one can either
(i) treat the system as a conjunction of local constraints (leading to a slightly richer game model), or
(ii) introduce additional variables to separate definitions and work with a functional core.
Option (ii) can be implemented in polynomial time: give each defining equation its own fresh output variable,
so that the functional core is in FNF, and keep the equalities identifying each copy with its original variable
as explicit side constraints.  Restriction to the original variables is then a bijection between the solution
sets, so $S_n(\Gamma)$ is preserved for every $n$.  Quotienting the copy equalities away would merely restore
the multiple definitions.
For the complexity results in Sections~\ref{sec:dispersion}--\ref{sec:complexity-dichotomy}, we stay within the FNF
setting.
\end{remark}

\begin{definition}[Collision-free functional normal form (CFNF)]
\label{def:cfnf}
A normal-form instance $\Gamma$ is in \emph{collision-free functional normal form (CFNF)} if it is in FNF
(Definition~\ref{def:fnf}) and collision-free (Definition~\ref{def:collisionfree}).
Equivalently, $\Gamma$ is CFNF if it is FNF and has no left-hand collisions.
\end{definition}

\subsection{Diversification}
\label{subsec:div}
Let $\Gamma$ be a normal-form instance.  Its \emph{diversification} $\Gamma^{\mathrm{div}}$ is obtained by
replacing each occurrence of a function symbol by a fresh symbol of the same arity (one new symbol per
equation occurrence).  The resulting instance has the same variables but a larger signature.

\begin{lemma}[Diversification increases solutions]
\label{lem:div-easy}
For every $n$ we have $\Sn(\Gamma)\le \Sn(\Gamma^{\mathrm{div}})$.
\end{lemma}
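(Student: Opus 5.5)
The plan is a direct simulation argument: every interpretation of the original signature induces an interpretation of the diversified signature with exactly the same solution set. Fix $n$ and an alphabet $\A$ of size $n$, and let $\mathcal I$ be an interpretation of $\Fun$ on $\A$ attaining $\Sn(\Gamma)$; such a maximiser exists because there are only finitely many interpretations. By construction, each fresh symbol $f_e$ of $\Gamma^{\mathrm{div}}$ arises from a unique occurrence $e$ of some original symbol $f$ of the same arity. I will define the interpretation $\mathcal J$ of the diversified signature by copying, setting $f_e^{\mathcal J}:=f^{\mathcal I}$ for every occurrence $e$. Any variable-equality equations are left untouched by diversification, so they need no attention.

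The key step is to check that $\Sol_{\mathcal J}(\Gamma^{\mathrm{div}};n)=\Sol_{\mathcal I}(\Gamma;n)$ as subsets of $\A^{\Vars}$. This comparison makes sense because diversification keeps the variable set unchanged. Each equation $f_e(u_1,\dots,u_k)=v$ of $\Gamma^{\mathrm{div}}$ corresponds to exactly one equation $f(u_1,\dots,u_k)=v$ of $\Gamma$, and every equation of $\Gamma$ arises this way. For any assignment $\mathbf a$, the two sides evaluate identically under $\mathcal J$ and $\mathcal I$ respectively, since $f_e^{\mathcal J}(\mathbf a(u_1),\dots,\mathbf a(u_k))=f^{\mathcal I}(\mathbf a(u_1),\dots,\mathbf a(u_k))$. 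Hence $\mathbf a$ satisfies all equations of one instance if and only if it satisfies all equations of the other.

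Consequently
\[
\Sn(\Gamma)=|\Sol_{\mathcal I}(\Gamma;n)|=|\Sol_{\mathcal J}(\Gamma^{\mathrm{div}};n)|\le \Sn(\Gamma^{\mathrm{div}}).
\]
No step here is a genuine obstacle. The only point needing care is bookkeeping: the correspondence between occurrences in $\Gamma$ and fresh symbols in $\Gamma^{\mathrm{div}}$ must be a bijection on equations, including constant ($0$-ary) symbols. This holds by the definition of diversification. The reverse inequality fails in general because $\mathcal J$ ranges only over the "diagonal" interpretations in which all copies of a symbol agree.
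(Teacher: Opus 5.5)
Your proposal is correct and follows the paper's argument: interpret every diversified copy $f^{(e)}$ as the original $f^{\mathcal I}$, so every solution of $\Gamma$ stays a solution of $\Gamma^{\mathrm{div}}$, and then maximize over interpretations. The only difference is that you prove the solution sets coincide, which is true since diversification keeps the variables and replaces one symbol per equation, whereas the paper uses only the inclusion $\Sol_{\mathcal I}(\Gamma;n)\subseteq\Sol_{\mathcal J}(\Gamma^{\mathrm{div}};n)$.
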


\begin{proof}
Given an interpretation $\mathcal I$ of $\Gamma$, interpret every diversified copy $f^{(i)}$ as the same
function $f^\mathcal I$.  Then every solution of $\Gamma$ is a solution of $\Gamma^{\mathrm{div}}$.
Maximizing over interpretations yields the claim.
\end{proof}

\subsection{Dependency graph}
\begin{definition}[Dependency graph with sources]
Let $\Gamma$ be an FNF instance in normal form.
Its \emph{dependency graph} $G_\Gamma$ is the directed graph with vertex set $\Vars$, and with an edge
$u\to v$ for each equation $f(u_1,\dots,u_k)=v$ in $\Gamma$ and each argument variable $u\in\{u_1,\dots,u_k\}$.
We also record the designated set of sources $\Src(\Gamma)$.
\end{definition}

Intuitively, a non-source variable $v$ is \emph{functionally determined} by its in-neighbours in $G_\Gamma$.

\section{Guessing games and the guessing-number sandwich}
\label{sec:guess}

Guessing games were introduced in the study of information flow on graphs and are now standard in network
coding and index coding \parencite{riis2006information,riis2007graph,gadouleau2011graph}.

\subsection{Guessing games with sources}
Let $G=(V,E)$ be a finite directed graph and let $S\subseteq V$ be a set of \emph{sources}.
Fix an alphabet $\A$ of size $n$.
A configuration is an assignment $\mathbf a\in \A^{V}$.
Each vertex $v\in V\setminus S$ sees the values on its in-neighbours $N^-(v)$ and must output a guess for its
own value.  Formally, a (deterministic) strategy consists of local functions
\[
g_v:\A^{N^-(v)}\to \A\qquad (v\in V\setminus S).
\]
A configuration $\mathbf a$ is \emph{winning} if $g_v(\mathbf a|_{N^-(v)})=a_v$ for every $v\in V\setminus S$.
Let $W(G,S;n)$ denote the maximum number of winning configurations over all strategies.

\begin{definition}[Guessing number]
The (normalized) guessing number is
\[
\Guess(G,S;n)\ :=\ \log_n W(G,S;n).
\]
\end{definition}

\begin{remark}[Sources and self-loops]
Some formulations of guessing games work only with loopless digraphs and do not distinguish a set of
sources.  Our model with sources can be converted to the usual ``no-sources'' setting by adding an
identity self-loop at each $s\in S$ (so $s$ sees its own value and can guess it trivially), and conversely.
We keep sources explicit because they match the input variables in term-coding instances.
\end{remark}

\subsection{From diversified FNF instances to guessing games}
If $\Gamma$ is diversified, normal-form, and in FNF, then each equation
$f(u_1,\dots,u_k)=v$ can be read as a local rule that computes $v$ from its in-neighbours.
Thus interpretations of $\Gamma^{\mathrm{div}}$ correspond exactly to guessing strategies on $G_\Gamma$, and
solutions correspond exactly to winning configurations.

\begin{proposition}[Solutions = winning configurations for diversified FNF]
\label{prop:solutions-winning}
Let $\Gamma$ be an FNF instance in normal form, and let $\Gamma^{\mathrm{div}}$ be its diversification.
Then for every $n$,
\[
\Sn(\Gamma^{\mathrm{div}})\ =\ W(G_\Gamma,\Src(\Gamma);n),
\]
and hence $\log_n \Sn(\Gamma^{\mathrm{div}})=\Guess(G_\Gamma,\Src(\Gamma);n)$.
\end{proposition}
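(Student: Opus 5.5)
The plan is to build an explicit bijection between interpretations of $\Gamma^{\mathrm{div}}$ and guessing strategies on $(G_\Gamma,\Src(\Gamma))$ under which solution sets and winning sets coincide as subsets of $\A^{\Vars}$; taking maxima on both sides then gives $\Sn(\Gamma^{\mathrm{div}})=W(G_\Gamma,\Src(\Gamma);n)$, and applying $\log_n$ gives the guessing-number form.

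\emph{From interpretations to strategies.} Since $\Gamma$ is in FNF, each defined variable $v\in\Def(\Gamma)$ has a unique defining equation $f(u_1,\dots,u_k)=v$. In $\Gamma^{\mathrm{div}}$ this occurrence carries its own symbol $f^{(v)}$, and no other equation uses it. By construction of $G_\Gamma$, the in-neighbourhood $N^-(v)$ is exactly the set $\{u_1,\dots,u_k\}$. Given an interpretation $\mathcal I$, I will define
\[
g_v(\mathbf b):=(f^{(v)})^{\mathcal I}(b_{u_1},\dots,b_{u_k})\qquad\text{for }\mathbf b\in\A^{N^-(v)}.
\]
This is well defined even when some argument variables repeat, because the value at a repeated position is simply read twice from $\mathbf b$. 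Every non-source vertex is a defined variable, so this specifies a full strategy. For every $\mathbf a\in\A^{\Vars}$, the equation for $v$ holds at $\mathbf a$ if and only if $g_v(\mathbf a|_{N^-(v)})=a_v$. Hence $\Sol_{\mathcal I}(\Gamma^{\mathrm{div}};n)$ equals the winning set of this strategy, which gives $\Sn(\Gamma^{\mathrm{div}})\le W$.

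\emph{From strategies to interpretations.} Given a strategy $(g_v)$, I will set
\[
(f^{(v)})^{\mathcal I}(c_1,\dots,c_k):=g_v(\mathbf b),
\]
where $\mathbf b\in\A^{N^-(v)}$ assigns $c_i$ to $u_i$. When the $u_i$ are not distinct, the tuple $(c_1,\dots,c_k)$ may be inconsistent, meaning some $u_i=u_j$ with $c_i\ne c_j$. On such tuples I define the function arbitrarily. Such tuples never arise from an actual assignment $\mathbf a$, so the equations still hold exactly on the winning configurations, which gives $W\le\Sn(\Gamma^{\mathrm{div}})$.

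The main point requiring care is this repeated-argument bookkeeping in the second direction, together with the fact that diversification makes the choices for different equations independent. Without diversification, one symbol $f$ occurring in two equations would force those equations to share a single function, and the correspondence would fail. Equations with $k=0$ correspond to a defined vertex with empty in-neighbourhood and a constant guess, and they fit the same scheme.
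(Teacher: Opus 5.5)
Your proof is correct and follows the same route as the paper: each diversified symbol is identified with the local guessing function $g_v$ of the unique variable $v$ it defines, so corresponding interpretations and strategies have the same solution set and winning set in $\A^{\Vars}$. Your separate treatment of the two directions handles repeated argument variables, a point the paper's one-line identification glosses over. In that case interpretations map onto strategies but not injectively, so the word ``bijection'' in your opening sentence is slightly too strong, though your two-direction argument does not depend on it.
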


\begin{proof}
Each equation in $\Gamma^{\mathrm{div}}$ has a unique left-hand function symbol and defines the value of its
right-hand side variable from the argument variables.
Interpreting each such symbol is the same as choosing the corresponding local guessing function $g_v$.
An assignment to $\Vars$ satisfies all equations iff every defined variable equals the value prescribed by
its local function on the in-neighbour values; this is exactly the winning condition.
\end{proof}

\subsection{The sandwich bound}
The next theorem is the technical bridge between term coding and guessing games.
The \emph{upper} bound is immediate from the definition of diversification.
The \emph{lower} bound is a block-encoding argument in which we reserve disjoint \emph{regions} of the alphabet
for each variable, and then define every original function symbol piecewise on those regions.
This is exactly where the collision quotienting of Section~\ref{subsec:collisions} is needed.

\begin{theorem}[Diversification sandwich bounds]
\label{thm:sandwich}
Let $\Gamma$ be a collision-free normal-form instance with $v:=|\Vars|\ge 1$ variables, and let
$\Gamma^{\mathrm{div}}$ be its diversification.
Then for every alphabet size $n$,
\[
\Sn(\Gamma)\ \le\ \Sn(\Gamma^{\mathrm{div}}).
\]
Moreover, for every $n\ge v$ and $m:=\lfloor n/v\rfloor$,
\[
\Sn(\Gamma)\ \ge\ S_m(\Gamma^{\mathrm{div}}).
\]
\end{theorem}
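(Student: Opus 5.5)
The upper bound $\Sn(\Gamma)\le\Sn(\Gamma^{\mathrm{div}})$ is exactly Lemma~\ref{lem:div-easy}: interpret every diversified copy $f^{(e)}$ as the common function $f^{\mathcal I}$. I will therefore spend the work on the lower bound. The plan is a block-encoding, or tagging, construction. It turns an optimal interpretation $\mathcal J$ of $\Gamma^{\mathrm{div}}$ on an $m$-element alphabet into an interpretation $\mathcal I$ of $\Gamma$ on the $n$-element alphabet. It does so by reserving for each variable its own tagged copy of the small alphabet.

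\textbf{The alphabet.} Write $\Vars=\{x_1,\dots,x_v\}$ and $m=\lfloor n/v\rfloor$, so that $vm\le n$. Fix an injection
\[
\iota:\Vars\times[m]\to\A
\]
whose image consists of the disjoint blocks $B_x:=\iota(\{x\}\times[m])$. Any leftover alphabet symbols are simply never used in the witnesses.

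\textbf{The interpretation.} Let $\mathcal J$ be an interpretation of $\Gamma^{\mathrm{div}}$ on $[m]$ attaining $S_m(\Gamma^{\mathrm{div}})$. For a $k$-ary symbol $f\in\Fun$ and arguments $a_1,\dots,a_k\in\A$, define $f^{\mathcal I}(a_1,\dots,a_k)$ as follows.
\begin{itemize}
\item Suppose every $a_i$ lies in some block, say $a_i=\iota(u_i,b_i)$, and $\Gamma$ contains an equation $e$ of the form $f(u_1,\dots,u_k)=w$. Then set
\[
f^{\mathcal I}(a_1,\dots,a_k):=\iota\bigl(w,\ (f^{(e)})^{\mathcal J}(b_1,\dots,b_k)\bigr).
\]
\item Otherwise, assign an arbitrary value.
\end{itemize}
The tags $u_i$ are read off from the blocks.

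\textbf{Well-definedness (the main obstacle).} This is the only point where collision-freeness is used, and I expect it to be the step requiring care. Given $f$ and the tag tuple $(u_1,\dots,u_k)$, there is at most one equation with left-hand side $f(u_1,\dots,u_k)$, so $w$ and the diversified symbol $f^{(e)}$ are uniquely determined. Without collision-freeness, two colliding equations would demand possibly different outputs on the same tagged input. Three cases need checking:
\begin{itemize}
\item constants ($k=0$), where each constant symbol heads at most one equation;
\item repeated argument variables, where tags are read position by position, so no conflict arises;
\item the hypothesis that $\Gamma$ is in normal form with no residual variable-equality equations. After Lemma~\ref{lem:quotient} we may assume this, since an equation $x=y$ between distinct tags could never hold under this encoding.
\end{itemize}

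\textbf{The solution map.} Send $\beta\in\Sol_{\mathcal J}(\Gamma^{\mathrm{div}};m)$ to the assignment $\alpha_\beta(x):=\iota(x,\beta(x))$.
\begin{itemize}
\item \emph{Injectivity.} This is immediate from the injectivity of $\iota$.
\item \emph{Each equation holds.} Take an equation $e$ of the form $f(u_1,\dots,u_k)=w$. By the definition of $f^{\mathcal I}$ on tagged inputs,
\[
f^{\mathcal I}(\alpha_\beta(u_1),\dots,\alpha_\beta(u_k))=\iota\bigl(w,(f^{(e)})^{\mathcal J}(\beta(u_1),\dots,\beta(u_k))\bigr)=\iota(w,\beta(w))=\alpha_\beta(w),
\]
where the middle equality holds because $\beta$ satisfies the diversified equation $e$.
\end{itemize}
Hence $|\Sol_{\mathcal I}(\Gamma;n)|\ge|\Sol_{\mathcal J}(\Gamma^{\mathrm{div}};m)|=S_m(\Gamma^{\mathrm{div}})$. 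Maximising over $\mathcal I$ gives $\Sn(\Gamma)\ge S_m(\Gamma^{\mathrm{div}})$.
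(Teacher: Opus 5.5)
Your proposal is correct and follows the paper's proof essentially step for step. The upper bound comes from interpreting every diversified copy as $f^{\mathcal I}$. The lower bound uses disjoint size-$m$ blocks indexed by variables, a piecewise definition of each $f^{\mathcal I}$ that is well defined exactly because of collision-freeness, and the injective lift $\beta\mapsto\alpha_\beta$ of solutions. Like the paper's proof, you only handle equations of the form $f(u_1,\dots,u_k)=w$, which is fine. However, your aside that one ``may assume'' no variable equalities after Lemma~\ref{lem:quotient} is not a valid reduction as stated. Quotienting changes $v$ and can create new left-hand collisions (e.g.\ $f(x)=a$, $f(y)=b$, $x=y$), so it is better to read the absence of variable equalities as part of the normal-form hypothesis.
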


\begin{proof}
\emph{Upper bound.}
Fix $n$, a finite alphabet $\A$ of size $n$, and an interpretation $\mathcal I$ of the symbols of $\Gamma$ on $\A$.
Define an interpretation $\mathcal I^{\mathrm{div}}$ of $\Gamma^{\mathrm{div}}$ on the same alphabet by
interpreting every diversified symbol $f^{(e)}$ as the original symbol $f$.
Then every equation of $\Gamma$ is literally one of the equations of $\Gamma^{\mathrm{div}}$ under
$\mathcal I^{\mathrm{div}}$, so
$\Sol_{\mathcal I}(\Gamma;n)\subseteq \Sol_{\mathcal I^{\mathrm{div}}}(\Gamma^{\mathrm{div}};n)$.
Taking cardinalities and then maximizing over $\mathcal I$ gives
$\Sn(\Gamma)\le \Sn(\Gamma^{\mathrm{div}})$.

\smallskip
\emph{Lower bound.}
Fix $n\ge v$ and set $m=\lfloor n/v\rfloor$.
For notational convenience identify a size-$n$ alphabet with $[n]$ and a size-$m$ alphabet with $[m]$.
Enumerate the variables as $\Vars=\{x_1,\dots,x_v\}$.
Choose pairwise disjoint blocks $B_1,\dots,B_v\subseteq [n]$ with $|B_i|=m$, and fix bijections
$\iota_i:[m]\to B_i$.
Let $\mathcal J$ be any interpretation of the diversified symbols on $[m]$.
We build an interpretation $\mathcal I$ of the original symbols on $[n]$ as follows.

Let $f$ be an original function symbol of arity $d$.
For each equation $e$ of $\Gamma$ of the form
$f(u_1,\dots,u_d)=v$ (so $u_1,\dots,u_d,v\in\Vars$), write $f^{(e)}$ for its diversified copy.

If $d=0$ (so $f$ is a constant symbol), then by the collision-free hypothesis there is at most
one equation $e:f()=x_j$ in $\Gamma$.
If such an equation exists, define the unique value of $f^{\mathcal I}:[n]^0\to[n]$ by
\[ f^{\mathcal I}() := \iota_j\bigl((f^{(e)})^{\mathcal J}()\bigr). \]
Otherwise define $f^{\mathcal I}():=0$.

For $d\ge 1$, define $f^{\mathcal I}:[n]^d\to [n]$ by the rule
\[
 f^{\mathcal I}(a_1,\dots,a_d)
 :=
 \begin{cases}
 \iota_j\Bigl((f^{(e)})^{\mathcal J}(\iota_{i_1}^{-1}(a_1),\dots,\iota_{i_d}^{-1}(a_d))\Bigr)
 &\text{if }\substack{\text{there is an equation }e: f(x_{i_1},\dots,x_{i_d})=x_j\\
                     a_\ell\in B_{i_\ell}\text{ for all }\ell},\\
 0 &\text{otherwise.}
 \end{cases}
\]
This is well-defined: if two equations $e,e'$ both match a tuple $(a_1,\dots,a_d)$, then for each coordinate
$\ell$ we would have $a_\ell\in B_{i_\ell}\cap B_{i'_\ell}$, forcing $i_\ell=i'_\ell$.
Thus $e$ and $e'$ have the same left-hand side, and since $\Gamma$ is collision-free we must have $e=e'$.
(We may define the ``otherwise'' value arbitrarily; the specific choice is irrelevant.)

Now take any solution $\alpha\in\Sol_{\mathcal J}(\Gamma^{\mathrm{div}};m)$.
Define an assignment $\widehat\alpha:\Vars\to [n]$ by $\widehat\alpha(x_i)=\iota_i(\alpha(x_i))$.
We claim $\widehat\alpha\in\Sol_{\mathcal I}(\Gamma;n)$.
Indeed, let $e$ be any equation of $\Gamma$.
If $e$ has arity $d\ge 1$, we can write it as $e:f(x_{i_1},\dots,x_{i_d})=x_j$.
Since $\widehat\alpha(x_{i_\ell})\in B_{i_\ell}$ for all $\ell$, the defining case of $f^{\mathcal I}$ applies and gives
\[
 f^{\mathcal I}(\widehat\alpha(x_{i_1}),\dots,\widehat\alpha(x_{i_d}))
 = \iota_j\Bigl((f^{(e)})^{\mathcal J}(\alpha(x_{i_1}),\dots,\alpha(x_{i_d}))\Bigr)
 = \iota_j\bigl(\alpha(x_j)\bigr)
 = \widehat\alpha(x_j),
\]
where we used that $\alpha$ satisfies the diversified equation $f^{(e)}(x_{i_1},\dots,x_{i_d})=x_j$.
If $e$ has arity $0$, then $e$ has the form $f()=x_j$ and similarly
$f^{\mathcal I}()=\iota_j((f^{(e)})^{\mathcal J}())=\iota_j(\alpha(x_j))=\widehat\alpha(x_j)$.
Thus $\widehat\alpha$ satisfies every equation of $\Gamma$.
The map $\alpha\mapsto\widehat\alpha$ is injective because each $\iota_i$ is injective, so
$|\Sol_{\mathcal I}(\Gamma;n)|\ge |\Sol_{\mathcal J}(\Gamma^{\mathrm{div}};m)|$.
Maximizing over $\mathcal J$ yields $\Sn(\Gamma)\ge S_m(\Gamma^{\mathrm{div}})$.
\end{proof}

\begin{corollary}[Asymptotic consequence]
\label{cor:sandwich-asympt}
Let $\Gamma$ be CFNF with $v=|\Vars|\ge 1$.
Then the two-sided bounds in Theorem~\ref{thm:sandwich} imply
\[
\limsup_{n\to\infty} \log_n \Sn(\Gamma)
\;=\;
\limsup_{n\to\infty} \log_n \Sn(\Gamma^{\mathrm{div}})
\;=\;
\limsup_{n\to\infty} \Guess(G_\Gamma,\Src(\Gamma);n).
\]
\end{corollary}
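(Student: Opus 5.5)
The plan is to prove the two equalities separately. The second equality needs nothing new: by Proposition~\ref{prop:solutions-winning}, applied to the CFNF (hence FNF, normal-form) instance $\Gamma$, we have $\log_n S_n(\Gamma^{\mathrm{div}})=\Guess(G_\Gamma,\Src(\Gamma);n)$ for every $n$. The two sequences therefore coincide term by term, and so do their limsups. The work is in the first equality, which I will get by comparing $\log_n S_n(\Gamma)$ with $\log_n S_n(\Gamma^{\mathrm{div}})$ through both halves of Theorem~\ref{thm:sandwich}.

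\emph{Upper inequality.} The upper half of Theorem~\ref{thm:sandwich} gives $S_n(\Gamma)\le S_n(\Gamma^{\mathrm{div}})$ for every $n$. Taking $\log_n$ and then limsup gives $\limsup_n\log_n S_n(\Gamma)\le\limsup_n\log_n S_n(\Gamma^{\mathrm{div}})$.

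\emph{Lower inequality.} The lower half gives $S_n(\Gamma)\ge S_{m}(\Gamma^{\mathrm{div}})$ with $m=\lfloor n/v\rfloor$. I reverse the roles: given any $m\ge1$, take $n:=mv$, so that $\lfloor n/v\rfloor=m$ exactly. This yields
\[
\log_{mv} S_{mv}(\Gamma)\ \ge\ \frac{\log S_m(\Gamma^{\mathrm{div}})}{\log m+\log v}.
\]
Now choose a subsequence $m_j\to\infty$ along which $\log_{m_j}S_{m_j}(\Gamma^{\mathrm{div}})\to L:=\limsup_m\log_m S_m(\Gamma^{\mathrm{div}})$. Since $v$ is fixed, $\log m_j/(\log m_j+\log v)\to1$, so the right-hand side tends to $L$ along $n_j=m_jv$. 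Hence $\limsup_n\log_n S_n(\Gamma)\ge L$.

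Two routine points need checking. First, $L$ is finite: it lies in $[0,v]$ because $1\le S_m(\Gamma^{\mathrm{div}})\le m^{v}$, and the lower bound $1$ holds once $m\ge1$, for instance via a constant-zero interpretation together with a compatible assignment. Second, I must handle the degenerate case where some $S_m$ vanishes, so that the log is $-\infty$. In the FNF setting every source assignment extends to a solution under any interpretation, because the dependency structure defines the non-source values. So all quantities are at least $1$ and all logs are nonnegative.

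The main obstacle is this reindexing step: the sandwich lower bound relates alphabet $n$ to the shrunken alphabet $\lfloor n/v\rfloor$, so the limsup must be transported along the multiples $n=mv$, with the constant factor $v$ absorbed in the base of the logarithm. The nonnegativity of the logs is what makes the ratio argument clean.
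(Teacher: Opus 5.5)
Your proof is correct and follows the paper's argument: a termwise comparison for the upper inequality, the reindexing $n=vm$ with $\log(vm)/\log m\to1$ for the lower inequality, and Proposition~\ref{prop:solutions-winning} for the second equality. Your subsequence and finiteness bookkeeping simply makes the paper's ``take $\limsup$'' step explicit.

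One side remark is false, though harmless. FNF does not guarantee that every source assignment extends to a solution, because $G_\Gamma$ may contain directed cycles. For example, $f(x)=y$, $g(y)=x$ with $f$ the identity and $g$ a fixed-point-free permutation has no solutions. Your constant-zero interpretation with the all-zero assignment already gives $S_m\ge1$, and that is all the argument needs.
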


\begin{proof}
The first inequality in Theorem~\ref{thm:sandwich} gives
$\log_n \Sn(\Gamma)\le \log_n \Sn(\Gamma^{\mathrm{div}})$ for every $n$.
For the reverse bound on the limsup, apply the second inequality with $n=vm$ to get
$S_{vm}(\Gamma)\ge S_m(\Gamma^{\mathrm{div}})$.
Hence
\[
\log_{vm} S_{vm}(\Gamma)\ \ge\ \log_{vm} S_m(\Gamma^{\mathrm{div}})
\ =\ \frac{\log_m S_m(\Gamma^{\mathrm{div}})}{\log(vm)/\log m}.
\]
Taking $\limsup_{m\to\infty}$ and noting that $\log(vm)/\log m\to 1$, we obtain
\(
\limsup_{n\to\infty} \log_n \Sn(\Gamma)
\ge
\limsup_{m\to\infty} \log_m S_m(\Gamma^{\mathrm{div}}).
\)
Finally, Proposition~\ref{prop:solutions-winning} identifies
$\log_n \Sn(\Gamma^{\mathrm{div}})=\Guess(G_\Gamma,\Src(\Gamma);n)$ for every $n$.
\end{proof}

The three sequences in Corollary~\ref{cor:sandwich-asympt} in fact converge, so the limsups are limits;
this follows from the product and monotonicity argument for general term coding
\parencite{riis2026termcoding}.

\begin{remark}
The collision-free hypothesis in Theorem~\ref{thm:sandwich} is essential for the lower bound.
If $\Gamma$ contains both $f(\mathbf u)=v$ and $f(\mathbf u)=w$ with $v\ne w$, then in the original system we
have the implicit constraint $v=w$, while in $\Gamma^{\mathrm{div}}$ the two copies of $f$ can behave
independently.
Lemma~\ref{lem:collision-quotient} shows how to quotient variables so that these implicit equalities are made
explicit before diversification.
\end{remark}

\section{Dispersion and term-cut exponents}
\label{sec:dispersion}

Dispersion isolates the image-size aspect of information flow and was introduced and analysed in
\textcite{riis2019max}.  We recall the definitions and the algorithmic consequences needed for the complexity analysis.

\subsection{Dispersion instances}
\begin{definition}[Dispersion instance]
\label{def:dispersion-multisorted}
Fix $k,r\ge 1$.
A \emph{dispersion instance} is a tuple of terms
$\mathbf t(\mathbf x)=(t_1(\mathbf x),\dots,t_r(\mathbf x))$ in input variables
$\mathbf x=(x_1,\dots,x_k)$ over a signature $\Fun$.
For an alphabet $\A$ of size $n$ and interpretation $\mathcal I$, it induces a map
$\Theta^\mathcal I:\A^k\to \A^r$ as in Section~\ref{sec:intro}.
The dispersion at alphabet size $n$ is
\[
\Disp_n(\mathbf t)\ :=\ \max_{\mathcal I} |\Img(\Theta^\mathcal I)|.
\]
\end{definition}

Throughout the remainder of the paper, $k,r\ge 1$ and every alphabet has size $n\ge 2$, unless a statement
explicitly says otherwise.  Dispersion depends only on the cardinality of the alphabet: transport an
interpretation along any bijection between two alphabets of the same size.  We may therefore identify an
$n$-element alphabet with $[n]$ without loss of generality.

\begin{remark}[Degenerate parameters]
The restriction $n\ge 2$ is essential for the decision problems: on a singleton alphabet,
$\Disp_1(\mathbf t)=1=1^r$ for every instance.  If zero outputs were allowed, every instance would be
perfect, whereas with no inputs and at least one output, perfect dispersion is impossible on an alphabet of
size at least two.  Our standing assumptions remove these uninformative cases.
\end{remark}

\subsection{The single-function diamond: full rate without perfection}
\label{sec:diamond}
Example~\ref{ex:diamond-intro} already shows, by a one-line pigeonhole argument, that certain natural
single-symbol dispersion gadgets cannot achieve perfect dispersion even though their term-cut exponent is full.
We restate the gadget here and isolate the source of the obstruction.

\begin{example}[Single-function diamond]
\label{ex:diamond}
Let $x,y,z,w\in\A$ be inputs and let $f$ be a single binary function symbol.
The dispersion instance is
\begin{equation}
\label{eq:diamond-terms}
\mathbf t(x,y,z,w)\ :=\ \bigl( f(x,y),\, f(x,z),\, f(y,w),\, f(z,w)\bigr).
\end{equation}
\end{example}

\begin{lemma}[Perfect dispersion fails for the diamond gadget]
\label{lem:diamond-not-bij}
For every $n\ge 2$ and every interpretation $f^\mathcal I:\A^2\to\A$, the induced map
$\Theta^\mathcal I:\A^4\to\A^4$ of \eqref{eq:diamond-terms} is not injective (hence not bijective/surjective).
\end{lemma}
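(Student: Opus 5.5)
The plan is a pigeonhole argument on the diagonal slice $y=z$, as already sketched in Example~\ref{ex:diamond-intro}. Fix $n\ge2$ and an arbitrary interpretation $f^{\mathcal I}:\A^2\to\A$. Consider the subset
\[
T:=\{(x,y,y,w): x,y,w\in\A\}\subseteq\A^4,
\]
which has exactly $n^3$ elements because the map $(x,y,w)\mapsto(x,y,y,w)$ is injective.

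Next, compute the image of $T$ under $\Theta^{\mathcal I}$. Substituting $z=y$ into \eqref{eq:diamond-terms} gives
\[
\Theta^{\mathcal I}(x,y,y,w)=\bigl(f^{\mathcal I}(x,y),\,f^{\mathcal I}(x,y),\,f^{\mathcal I}(y,w),\,f^{\mathcal I}(y,w)\bigr).
\]
So every point of $\Theta^{\mathcal I}(T)$ has the form $(a,a,b,b)$ with $a,b\in\A$. Hence $|\Theta^{\mathcal I}(T)|\le n^2$.

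Since $n\ge2$, we have $n^3>n^2$, so by the pigeonhole principle two distinct points of $T$ have the same image. Therefore $\Theta^{\mathcal I}$ is not injective.

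Finally, $\Theta^{\mathcal I}$ maps the finite set $\A^4$ to itself. For such a self-map, injectivity, surjectivity and bijectivity are equivalent. So $\Theta^{\mathcal I}$ is neither bijective nor surjective, which gives the parenthetical claim and shows $\Disp_n(\mathbf t)<n^4$.

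There is no genuine obstacle. The only point needing care is to use the strict inequality $n^3>n^2$, which is exactly where the standing hypothesis $n\ge2$ enters. For $n=1$ the map is trivially bijective, consistent with the remark on degenerate parameters.
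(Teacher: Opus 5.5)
Your proof is correct and follows the paper's argument exactly. You restrict to the $n^3$ inputs with $y=z$, observe that their outputs have the form $(a,a,b,b)$ and so take at most $n^2$ values, and conclude non-injectivity by pigeonhole since $n^3>n^2$ for $n\ge2$; your remark that injectivity, surjectivity and bijectivity coincide for a self-map of the finite set $\A^4$ just spells out the paper's parenthetical ``hence.''
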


\begin{proof}
Restrict to inputs with $y=z$.
There are $n^3$ such inputs $(x,y,z,w)=(x,y,y,w)$.
But for these inputs the output has the repeated form
$(f(x,y),f(x,y),f(y,w),f(y,w))=(a,a,b,b)$, and hence ranges over at most $n^2$ values.
Since $n^3>n^2$ for $n\ge 2$, two distinct inputs map to the same output.
\end{proof}

\begin{remark}[Exact obstruction versus asymptotic rate]
At the level of coarse rates, the diamond gadget has the full integer exponent $D(\mathbf t)=4$
(Theorem~\ref{thm:disp-exponent} together with the explicit min-cut computation in Lemma~\ref{lem:diamond-maxflow}).
Nevertheless, Lemma~\ref{lem:diamond-not-bij} shows that the exact ``perfect'' target $n^4$ is unreachable.
Thus the same instance has full asymptotic rate but an empty perfect-alphabet spectrum: the exponent records
coarse growth while omitting the finite-alphabet injectivity obstruction.
\end{remark}

\begin{remark}[Contrast: diversification removes the obstruction]
If one replaces the single symbol $f$ in \eqref{eq:diamond-terms} by four independent symbols
$f_{1},f_{2},f_{3},f_{4}$, then the resulting map can be made bijective by interpreting
$f_{1}(x,y)=x$, $f_{2}(x,z)=z$, $f_{3}(y,w)=y$, and $f_{4}(z,w)=w$.
In other words, it is the \emph{reuse} of the same local function that creates the global obstruction.
\end{remark}

\subsection{Embedding dispersion into term coding}
Dispersion can be expressed as a term-coding maximisation problem by adding \emph{decoder} function symbols.
Let $\mathbf y=(y_1,\dots,y_r)$ be fresh variables and introduce fresh function symbols
$\mathbf h=(h_1,\dots,h_k)$ where each $h_j$ is $r$-ary.

Consider the term-coding instance $\Gamma_{\mathbf t}$ on variables $\mathbf x,\mathbf y$ with equations
\begin{equation}
\label{eq:disp-to-tc}
y_i = t_i(\mathbf x)\quad (i\in[r]),
\qquad
x_j = h_j(\mathbf y)\quad (j\in[k]).
\end{equation}

Figure~\ref{fig:disp-embed} illustrates the idea of this embedding.

\begin{figure}[t]
\centering
\begin{tikzpicture}[>=Stealth, node distance=16mm and 14mm, every node/.style={font=\small}]
\node (x) {$\mathbf x\in\A^k$};
\node[draw,rounded corners, right=of x, inner sep=3pt] (t) {$\mathbf t$};
\node[right=of t] (y) {$\mathbf y\in\A^r$};

\node[draw,rounded corners, below=10mm of t, inner sep=3pt] (h) {$\mathbf h$};

\draw[->] (x) -- node[above] {$\mathbf y=\mathbf t(\mathbf x)$} (t);
\draw[->] (t) -- (y);
\draw[->] (y) -- node[right] {$\mathbf x=\mathbf h(\mathbf y)$} (h);
\draw[->] (h.west) -| (x.south);

\end{tikzpicture}
\caption{Dispersion as term coding via decoder symbols (Lemma~\ref{lem:disp-embed}).  Choosing $\mathbf h$ selects one preimage for each output $\mathbf y$ in the image of $\mathbf t$.}
\label{fig:disp-embed}
\end{figure}

\begin{lemma}[Dispersion equals maximum code size of the embedding]
\label{lem:disp-embed}
For every $n\ge 1$,
\[
\Disp_n(\mathbf t)\ =\ \Sn(\Gamma_{\mathbf t}).
\]
On the right, the maximum is over interpretations of the enlarged signature: the original symbols and all
fresh decoder symbols $h_j$ are chosen freely.
\end{lemma}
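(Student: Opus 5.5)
I will prove the equality by establishing the two inequalities separately, working with a fixed alphabet $\A$ of size $n$. The key structural observation is that the equations \eqref{eq:disp-to-tc} force $\mathbf y$ to be determined by $\mathbf x$, namely $\mathbf y=\Theta^{\mathcal I}(\mathbf x)$, and force $\mathbf x$ to be determined by $\mathbf y$, namely $\mathbf x=\mathbf h^{\mathcal I}(\mathbf y)$. Consequently, for any interpretation $\mathcal I$ of the enlarged signature, the projection $(\mathbf x,\mathbf y)\mapsto\mathbf y$ restricted to $\Sol_{\mathcal I}(\Gamma_{\mathbf t};n)$ is injective, and its image lies in $\Img(\Theta^{\mathcal I})$. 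Here $\Theta^{\mathcal I}$ depends only on the restriction of $\mathcal I$ to the original symbols.

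For the upper bound $\Sn(\Gamma_{\mathbf t})\le\Disp_n(\mathbf t)$, I fix an arbitrary $\mathcal I$. Let $(\mathbf a,\mathbf b)$ be a solution. Then $\mathbf b=\Theta^{\mathcal I}(\mathbf a)\in\Img(\Theta^{\mathcal I})$ and $\mathbf a=\mathbf h^{\mathcal I}(\mathbf b)$, so the solution is recovered from $\mathbf b$. Hence $|\Sol_{\mathcal I}(\Gamma_{\mathbf t};n)|\le|\Img(\Theta^{\mathcal I})|\le\Disp_n(\mathbf t)$, and I then maximise over $\mathcal I$.

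For the lower bound, I fix an interpretation $\mathcal I_0$ of the original symbols that attains $\Disp_n(\mathbf t)$; this is possible because there are only finitely many interpretations. For each $\mathbf b\in\Img(\Theta^{\mathcal I_0})$, I choose a preimage $\sigma(\mathbf b)\in\A^k$. I then define $h_j^{\mathcal I}(\mathbf b):=\sigma(\mathbf b)_j$ on the image, and an arbitrary value (say $0$) off the image; this step uses $n\ge1$, so that $\A$ is nonempty. Extending $\mathcal I_0$ by these decoders, each pair $(\sigma(\mathbf b),\mathbf b)$ satisfies all equations: $\Theta(\sigma(\mathbf b))=\mathbf b$ holds by the choice of preimage, and $\mathbf h(\mathbf b)=\sigma(\mathbf b)$ holds by definition. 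Distinct $\mathbf b$ give distinct solutions, so $\Sn(\Gamma_{\mathbf t})\ge|\Img(\Theta^{\mathcal I_0})|=\Disp_n(\mathbf t)$.

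Nothing here is a real obstacle. The only points needing care are two. First, the fresh decoder symbols are disjoint from $\Fun$, so they can be chosen independently of the original symbols. Second, the case where some $t_i$ or some subterm uses shared symbols causes no issue, since $\Theta^{\mathcal I}$ is defined purely through evaluation of the terms.
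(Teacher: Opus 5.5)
Your proof is correct and takes essentially the same approach as the paper. The upper bound comes from the injective projection of solutions onto $\mathbf y$, which lands in $\Img(\Theta^{\mathcal I})$, and the lower bound comes from choosing one preimage per image point to define the decoders $h_j$. The only cosmetic difference is that you fix a maximising interpretation first, whereas the paper builds the decoders for an arbitrary $\mathcal I$ and then maximises.
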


\begin{proof}
Fix an alphabet $\A$ of size $n$ and an interpretation $\mathcal I$ of the symbols in $\mathbf t$.
Let $Y:=\Img(\Theta^\mathcal I)\subseteq \A^r$.
Choose for each $\mathbf y\in Y$ one preimage $\mathbf x(\mathbf y)\in \A^k$ such that
$\Theta^\mathcal I(\mathbf x(\mathbf y))=\mathbf y$, and define $h_j^\mathcal I(\mathbf y):=x_j(\mathbf y)$
for $\mathbf y\in Y$ (and define $h_j^\mathcal I$ arbitrarily on $\A^r\setminus Y$).
Then the solutions of \eqref{eq:disp-to-tc} are exactly the pairs
$(\mathbf x(\mathbf y),\mathbf y)$ with $\mathbf y\in Y$, hence $|\Sol_\mathcal I(\Gamma_{\mathbf t};n)|=|Y|$.
Maximizing over $\mathcal I$ yields $\Sn(\Gamma_{\mathbf t})\ge \Disp_n(\mathbf t)$.

Conversely, for any interpretation $\mathcal I$ of all symbols (including $\mathbf h$), every solution
$(\mathbf x,\mathbf y)$ satisfies $\mathbf y=\Theta^\mathcal I(\mathbf x)$, so $\mathbf y$ lies in the image
of $\Theta^\mathcal I$.  Moreover, by the equations $x_j=h_j(\mathbf y)$, each $\mathbf y$ determines at most
one $\mathbf x$.  Hence the number of solutions is at most the image size:
$|\Sol_\mathcal I(\Gamma_{\mathbf t};n)|\le |\Img(\Theta^\mathcal I)|$.
Taking maxima yields $\Sn(\Gamma_{\mathbf t})\le \Disp_n(\mathbf t)$.
\end{proof}

The limit defining $D(\mathbf t)$ exists independently of its min-cut description.  Indeed,
Lemma~\ref{lem:disp-embed} identifies dispersion with the maximum solution count of a term-coding instance,
and the general term-coding convergence theorem applies to that instance
\parencite{riis2026termcoding}.  The term-cut theorem below adds integrality, an exact combinatorial formula,
and polynomial-time computability.

\subsection{Integer exponent and polynomial-time computability}
The following term-cut formulation is due to \textcite{riis2019max}.

\begin{definition}[Term-cut]
\label{def:term-cut}
Let $\Gamma_{\mathrm{sub}}(\mathbf t)$ be the set of distinct subterms of
$t_1,\dots,t_r$, including the input variables, and put
$s:=|\Gamma_{\mathrm{sub}}(\mathbf t)|$.  A \emph{term-cut} is a set
$C\subseteq\Gamma_{\mathrm{sub}}(\mathbf t)$ such that every output term $t_i$ can be built by applying
function symbols to elements of $C$; constants require no cut element.  Let
\[
\rho(\mathbf t):=\min\{\,|C|:C\text{ is a term-cut for }\mathbf t\,\}.
\]
Equivalently, $\rho(\mathbf t)$ is the minimum vertex cut separating the input variables from the output
terms in the shared term DAG when every subterm node, including every input variable, has unit capacity.
By vertex splitting, it is computable by a standard max-flow computation.
\end{definition}

The definition is syntactic: it counts distinct subterms, not occurrences.  Reuse of a function symbol is
handled in the lower-bound interpretation below by tagging alphabet values with subterms, so all occurrences
still use one common operation table.

\begin{theorem}[Integral dispersion exponent {\protect\citep{riis2019max}}]
\label{thm:disp-exponent}
Let $\mathbf t$ be a dispersion instance, let $s=|\Gamma_{\mathrm{sub}}(\mathbf t)|$, and put
$\rho=\rho(\mathbf t)$.  Then:
\begin{enumerate}[label=\textit{(\alph*)}, leftmargin=*]
\item $\Disp_n(\mathbf t)\le n^\rho$ for every $n\ge 1$;
\item $\Disp_{sm}(\mathbf t)\ge m^\rho$ for every $m\ge 1$, and consequently
      $\Disp_n(\mathbf t)\ge (2s)^{-\rho}n^\rho$ for every $n\ge 2s$.
\end{enumerate}
Therefore $\Disp_n(\mathbf t)=\Theta(n^\rho)$,
\[
D(\mathbf t)=\lim_{n\to\infty}\frac{\log\Disp_n(\mathbf t)}{\log n}=\rho(\mathbf t)\in\mathbb Z_{\ge0},
\]
and $D(\mathbf t)$ is computable in time polynomial in $\instsize{\mathbf t}$.
\end{theorem}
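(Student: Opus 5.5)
The proof splits into the upper bound (a), the tagged-alphabet lower bound (b), and a short deduction of the exponent and its computability from them.

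For (a), fix a minimum term-cut $C$ with $|C|=\rho$ and any interpretation $\mathcal I$ on an alphabet of size $n$. Every output term $t_i$ is obtained by applying function symbols to elements of $C$. Under $\mathcal I$, the value $t_i^{\mathcal I}(\mathbf a)$ is therefore a fixed function of the tuple $(c^{\mathcal I}(\mathbf a))_{c\in C}\in\A^C$; this function is defined by recursion on the part of the term DAG above $C$, and it may involve constants. Hence $\Theta^{\mathcal I}$ factors through $\A^C$, and $|\Img(\Theta^{\mathcal I})|\le n^\rho$. The one point needing care is that the recursion is well defined. I will take a term-cut to mean that every path in the DAG from an input variable to an output meets $C$. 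Then every output is a term over the subterms in $C$ and constants, and this is the same as the vertex-cut formulation in Definition~\ref{def:term-cut}.

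For (b), use Menger's theorem in the DAG with unit vertex capacities, including the input and output nodes. It gives $\rho$ vertex-disjoint paths $P_1,\dots,P_\rho$, each running from some input variable $x_{j_\ell}$ to some output term $t_{i_\ell}$. Take the alphabet $[s]\times[m]$ of size $sm$; a value is a pair (subterm tag, payload). Fix an assignment in which input $x_{j_\ell}$ takes the value $(x_{j_\ell},a_\ell)$ with $a_\ell\in[m]$ free, and every other input takes a fixed dummy value. Next interpret each symbol $f$ by a single table $f^{\mathcal I}$ that reads the tags of its arguments:
\begin{itemize}
\item if the argument tags are exactly the subterms $u_1,\dots,u_d$ and $u=f(u_1,\dots,u_d)$ lies on some path $P_\ell$ whose predecessor on that path is $u_q$, output $(u,\text{payload of argument }q)$;
\item if $u=f(u_1,\dots,u_d)$ is a subterm not on any path, output $(u,0)$;
\item otherwise output a fixed default value.
\end{itemize}
This table is well defined because the tuple of argument tags determines the subterm $u$ uniquely; distinct subterms correspond to distinct DAG nodes, so no collision arises even though a symbol is reused. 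Because the paths are vertex-disjoint, each node lies on at most one path, so the predecessor $u_q$ is unambiguous. By induction on the DAG, every subterm $u$ evaluates to $(u,\cdot)$, and along $P_\ell$ the payload $a_\ell$ is carried unchanged up to $t_{i_\ell}$. The outputs $t_{i_\ell}$ are distinct, so the $m^\rho$ choices of $(a_1,\dots,a_\rho)$ give distinct output tuples. This proves $\Disp_{sm}(\mathbf t)\ge m^\rho$.

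For general $n\ge 2s$, put $m=\lfloor n/s\rfloor\ge n/(2s)$. Dispersion is monotone in $n$: map the extra letters onto a fixed letter and embed the smaller alphabet, which preserves the image size. Hence $\Disp_n\ge\Disp_{sm}\ge m^\rho\ge(2s)^{-\rho}n^\rho$. Together with (a) this gives $\Disp_n(\mathbf t)=\Theta(n^\rho)$. Since the constant factors disappear after dividing $\log\Disp_n(\mathbf t)$ by $\log n$, the limit exists and equals $\rho$. Finally, $\rho$ is computed in polynomial time by max-flow on the vertex-split shared DAG, whose size is linear in $\instsize{\mathbf t}$.

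The main obstacle is the lower bound, specifically making a single operation table per symbol route payloads consistently. Both the tag-based disambiguation of subterms and the vertex-disjointness of the Menger paths are essential there.
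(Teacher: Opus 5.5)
Your proposal is correct and follows the paper's proof essentially step for step: factoring $\Theta^{\mathcal I}$ through the values on a minimum term-cut for (a); Menger paths, the subterm-tagged alphabet $\Gamma_{\mathrm{sub}}(\mathbf t)\times[m]$ and one shared table per symbol for (b); monotonicity in the alphabet size with $m=\lfloor n/s\rfloor$; and vertex-split max-flow for computability. The one detail to state explicitly is that each non-path input $x_v$ must get a value tagged with its own name, such as $(x_v,0)$ (the paper's $(x_v,\beta_0)$), because a single untagged dummy value would break your induction that every subterm $u$ evaluates to $(u,\cdot)$.
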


\begin{proof}
For \textit{(a)}, let $C=\{c_1,\dots,c_q\}$ be a term-cut.  Under any interpretation and input assignment,
each output value is determined by the tuple
$(c_1^\mathcal I,\dots,c_q^\mathcal I)$: this follows by structural induction on an expression of $t_i$ from
$C$.  The image of $\Theta^\mathcal I$ therefore has at most $n^q$ elements.  Minimise over $C$ and then
maximise over $\mathcal I$ to obtain $\Disp_n(\mathbf t)\le n^\rho$.

For \textit{(b)}, Menger's theorem \citep{menger1927} supplies $\rho$ pairwise vertex-disjoint paths
$P_1,\dots,P_\rho$ in the unit-node-capacity term DAG, with $P_a$ running from an input variable
$x_{v_a}$ to an output term $t_{i_a}$.  Let $B$ be an $m$-element set and use the tagged alphabet
\[
\A=\Gamma_{\mathrm{sub}}(\mathbf t)\times B,
\qquad |\A|=sm.
\]
Write $(u,\beta)$ for block value $\beta$ tagged by subterm $u$.  For each symbol $f$, define a single
operation table as follows.  Call an argument tuple
$(u_1,\beta_1),\dots,(u_d,\beta_d)$ well formed when
$u:=f(u_1,\dots,u_d)$ belongs to $\Gamma_{\mathrm{sub}}(\mathbf t)$.  On such a tuple, if $u$ lies on a path $P_a$,
output $(u,\beta_j)$, where $u_j$ is the predecessor of $u$ on $P_a$; vertex-disjointness makes $a$ unique.
Otherwise output $(u,\beta_0)$ for a fixed default $\beta_0\in B$, and define the table arbitrarily on
all other tuples.

Assign $(x_{v_a},b_a)$ to the source of $P_a$ and $(x_v,\beta_0)$ to every other input $x_v$.  Induction over
subterms shows that a subterm on $P_a$ evaluates with block value $b_a$.  Hence output $t_{i_a}$ records
$b_a$, and distinct tuples $(b_1,\dots,b_\rho)\in B^\rho$ give distinct output tuples.  Thus
$\Disp_{sm}(\mathbf t)\ge m^\rho$.  A single table serves every occurrence of $f$: the tags, rather than
diversification, route the information.

Finally, dispersion is monotone under enlargement of the alphabet.  Indeed, extend an interpretation from a
nonempty subalphabet by preserving every operation on tuples from that subalphabet and defining it arbitrarily
elsewhere.  For $n\ge2s$, take $m=\lfloor n/s\rfloor$; then $sm\le n$ and $m\ge n/(2s)$, which gives the
stated lower bound.  The two bounds imply the limit and its value $\rho$.  The min-cut is computable in
polynomial time by vertex splitting and max-flow.
\end{proof}

\begin{remark}[Origin and scope]
The term-cut invariant and routing argument are from \textcite{riis2019max}; the proof above records both
halves in the precise form needed here.  The only graph-theoretic input left implicit is Menger's theorem.
The R\'enyi-entropy results of that paper are not used.
\end{remark}

\subsection{Computing \texorpdfstring{$D(\mathbf t)$}{D(t)}: an explicit polynomial-time procedure}
\label{subsec:compute-exponent}

For completeness (and to make the polynomial-time side fully concrete), we spell out how
Theorem~\ref{thm:disp-exponent} yields an explicit algorithm for computing the integer exponent $D(\mathbf t)$.

Let $\instsize{\mathbf t}$ be the shared-DAG size fixed in Definition~\ref{def:instance-size}.
The term DAG has $O(\!\instsize{\mathbf t})$ nodes and edges.  Give every subterm node unit capacity, add a
supersource and supersink, and split each node into an in-node and an out-node joined by a unit-capacity edge.
The resulting ordinary flow network has polynomial size, and its minimum cut equals $\rho(\mathbf t)=D(\mathbf t)$.

Algorithm~\ref{alg:disp-exponent} gives the resulting deterministic procedure in pseudocode.  Its input is the
syntactic tuple $\mathbf t$ and its output is the integer $D(\mathbf t)$; the numbered lines specify the graph
construction and max-flow computation.

\begin{algorithm}[t]
\caption{Polynomial-time computation of $D(\mathbf t)$ by vertex-split max-flow.}
\label{alg:disp-exponent}
\begin{algorithmic}[1]
\raggedright
\State \textbf{Input:} a dispersion instance $\mathbf t(\mathbf x)=(t_1(\mathbf x),\dots,t_r(\mathbf x))$
\State \textbf{Output:} the integer exponent $D(\mathbf t)$
\State Construct the shared term DAG of $\mathbf t$: nodes are the distinct subterms, including input variables,
      and edges go from arguments to the subterm they form.
\State Give every subterm node unit capacity and apply standard vertex splitting; connect a supersource to the
      input variables and the output terms to a supersink with sufficiently large capacities.
\State Compute a maximum flow value $F$ from the supersource to the supersink in the vertex-split network.
\State \Return $F$ \Comment{$F=D(\mathbf t)$ by Theorem~\ref{thm:disp-exponent}.}
\end{algorithmic}
\end{algorithm}

\begin{remark}[Running time]
Since the vertex-split network has size polynomial in $\instsize{\mathbf t}$, any polynomial-time max-flow algorithm
computes $D(\mathbf t)$ in polynomial time.  This immediately yields a polynomial-time algorithm for computing $D(\mathbf t)$ and for the rate-threshold questions discussed in Section~\ref{sec:complexity-dichotomy}.
\end{remark}

\subsection{A worked min-cut computation: the diamond gadget}
\label{subsec:diamond-maxflow}

We now return to Example~\ref{ex:diamond} and verify explicitly that the term-cut value
(hence the dispersion exponent) is~$4$.

\begin{lemma}[Diamond gadget has min-cut $4$]
\label{lem:diamond-maxflow}
Let $\mathbf t=(f(x,y),f(x,z),f(y,w),f(z,w))$ be the diamond instance.
Then $\rho(\mathbf t)=4$, and consequently $D(\mathbf t)=4$.
\end{lemma}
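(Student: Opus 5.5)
The plan is to prove $\rho(\mathbf t)\le 4$ and $\rho(\mathbf t)\ge 4$ separately. Equality $D(\mathbf t)=4$ then follows directly from Theorem~\ref{thm:disp-exponent}.

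First I list the shared term DAG. Its distinct subterms are the four inputs $x,y,z,w$ and the four outputs $f(x,y)$, $f(x,z)$, $f(y,w)$, $f(z,w)$, so $s=8$. The edges run from each input to the outputs in which it occurs. The set $C=\{x,y,z,w\}$ is a term-cut, since every output is $f$ applied to two elements of $C$. This gives $\rho(\mathbf t)\le 4$; alternatively, the set of four output terms is also a cut.

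For the lower bound I use the max-flow/Menger characterisation from Definition~\ref{def:term-cut}. It suffices to exhibit four vertex-disjoint input-to-output paths in the unit-capacity DAG. I take the length-one paths $x\to f(x,y)$, $z\to f(x,z)$, $y\to f(y,w)$ and $w\to f(z,w)$. This is exactly the system of distinct representatives used in the diversification remark. The four paths use eight distinct nodes, so they are vertex-disjoint. Since every cut must meet each path in a distinct node, any term-cut has at least four elements, and hence $\rho(\mathbf t)\ge4$.

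One could also argue directly from the syntactic definition: a term-cut must contain, for each output, either the output itself or both of its arguments. A short case analysis then shows that fewer than four elements cannot cover all four outputs, because removing any output from $C$ forces both of its arguments into $C$. The only step requiring care is confirming that the syntactic term-cut notion agrees with the vertex-cut notion here. That agreement is immediate because the DAG has depth one, so I expect no real obstacle. The disjoint-path argument is the cleaner route and is the one I would write up.
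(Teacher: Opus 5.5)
Your proposal is correct and follows the paper's proof essentially verbatim: $\{x,y,z,w\}$ as a term-cut for the upper bound, and the same four vertex-disjoint paths $x\to f(x,y)$, $z\to f(x,z)$, $y\to f(y,w)$, $w\to f(z,w)$ for the lower bound, with $D(\mathbf t)=4$ then read off from Theorem~\ref{thm:disp-exponent}. One small correction of attribution: the lower bound does not invoke Menger or max-flow. Your actual argument, that every cut meets each disjoint path in a distinct node, is just weak duality, as the paper explicitly notes.
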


\begin{proof}
The four variables $\{x,y,z,w\}$ form a term-cut, so $\rho(\mathbf t)\le4$.
Conversely, the term DAG contains the four vertex-disjoint paths
\[
  x\to u_1,\qquad z\to u_2,\qquad y\to u_3,\qquad w\to u_4,
\]
where $u_1=f(x,y)$, $u_2=f(x,z)$, $u_3=f(y,w)$, and $u_4=f(z,w)$.
A term-cut separates the input variables from the output terms in the term DAG
(Definition~\ref{def:term-cut}), so it contains a vertex of each of these paths; vertex-disjointness
forces at least four elements.  (This direction is weak duality and does not use Menger's theorem.)  Thus
$\rho(\mathbf t)=4$, and Theorem~\ref{thm:disp-exponent} gives $D(\mathbf t)=4$.
\end{proof}

The routing construction can also be seen directly.  Let
$\A=\{x,y,z,w\}\times B$ and fix one tag, say $x$, for outputs.  Define the single shared table by
\[
f((i,\beta),(j,\beta'))=
\begin{cases}
(x,\beta), &(i,j)\in\{(x,y),(y,w)\},\\
(x,\beta'),&(i,j)\in\{(x,z),(z,w)\},
\end{cases}
\]
with arbitrary values in all other cases.  On inputs tagged by their variable names, the four outputs recover,
in order, $\beta_x,\beta_z,\beta_y,\beta_w$.  Hence the image has at least
$|B|^4=(|\A|/4)^4$ elements: sharing $f$ costs a constant factor but not the exponent.

The corresponding simplified term graph is shown in Figure~\ref{fig:diamond-termgraph}.

\begin{figure}[t]
\centering
\begin{tikzpicture}[>=Latex, thick]
  % input variables
  \node[draw,circle,inner sep=2pt] (x)  at (0, 1.2) {$x$};
  \node[draw,circle,inner sep=2pt] (y)  at (0, 0.4) {$y$};
  \node[draw,circle,inner sep=2pt] (z)  at (0,-0.4) {$z$};
  \node[draw,circle,inner sep=2pt] (w)  at (0,-1.2) {$w$};

  % operation/output nodes (one per occurrence of f)
  \node[draw,circle,inner sep=2pt] (u1) at (2, 1.0) {$u_{1}$};
  \node[draw,circle,inner sep=2pt] (u2) at (2, 0.3) {$u_{2}$};
  \node[draw,circle,inner sep=2pt] (u3) at (2,-0.3) {$u_{3}$};
  \node[draw,circle,inner sep=2pt] (u4) at (2,-1.0) {$u_{4}$};

  % sink
  \node (t) at (4,0) {$t$};

  % edges into operation nodes
  \draw[->] (x) -- (u1);
  \draw[->] (y) -- (u1);

  \draw[->] (x) -- (u2);
  \draw[->] (z) -- (u2);

  \draw[->] (y) -- (u3);
  \draw[->] (w) -- (u3);

  \draw[->] (z) -- (u4);
  \draw[->] (w) -- (u4);

  % output-to-sink edges
  \draw[->] (u1) -- (t);
  \draw[->] (u2) -- (t);
  \draw[->] (u3) -- (t);
  \draw[->] (u4) -- (t);
\end{tikzpicture}
\caption{The term DAG for the diamond gadget.  Every circled subterm node, including each input variable,
has unit capacity; the sink is uncapacitated.  The four disjoint paths used in
Lemma~\ref{lem:diamond-maxflow} certify a cut value of $4$.}
\label{fig:diamond-termgraph}
\end{figure}

\section{Exact and asymptotic decision theory}
\label{sec:complexity-dichotomy}

We now focus on the complexity of analysing dispersion problems.
Throughout this paper we work in the \emph{single-sorted} setting, meaning that all variables range over a single finite alphabet $\A$ with $|\A|=n$.
(A separate paper treats multi-sorted term coding, where variables may range over different finite sets, and where dispersion admits a corresponding multi-sorted generalisation.)
To state our results precisely, we first define five explicit decision problems with clean quantifier structure.

\begin{definition}[Decision problems for dispersion]
\label{def:decision-problems}
Let $\mathbf t(\mathbf x) = (t_1(\mathbf x),\dots,t_r(\mathbf x))$ be a dispersion instance with $k$ input variables and $r$ output terms.

\begin{itemize}[leftmargin=*]
\item \textbf{Problem $\textsc{Fixed-Perfect-Disp}(\mathbf t, n)$:}\\
\textbf{Input:} A dispersion instance $\mathbf t$ and an integer $n \ge 2$.\\
\textbf{Question:} Is $\Disp_n(\mathbf t) = n^r$?

\item \textbf{Problem $\textsc{Perfect-Disp}_r(\mathbf t)$:}\\
\textbf{Input:} A dispersion instance $\mathbf t$ with $r$ output terms.\\
\textbf{Question:} Does there exist an integer $n \ge 2$ and an interpretation $\mathcal I$ on a domain $A$ with $|A| = n$ such that $\Theta^{\mathcal I}$ is surjective (equivalently, $\Disp_n(\mathbf t) = n^r$)?

\item \textbf{Problem $\textsc{Perfect-Disp}(\mathbf t)$:}\\
\textbf{Input:} An arbitrary dispersion instance $\mathbf t$, with the number $r$ of output terms part of the input.\\
\textbf{Question:} Does there exist $n\ge 2$ such that $\Disp_n(\mathbf t)=n^r$?

\item \textbf{Problem $\textsc{Full-Rate-Disp}(\mathbf t)$:}\\
\textbf{Input:} An arbitrary dispersion instance $\mathbf t$ with $r$ output terms.\\
\textbf{Question:} Is $D(\mathbf t)=r$?

\item \textbf{Problem $\textsc{Asymp-Threshold}_{d,h_d}(\mathbf t)$:}\\
\textbf{Fixed parameters:} An integer $d\ge 0$ and a threshold function
$h_d:\mathbb N\to\mathbb N$ satisfying \eqref{eq:threshold-conditions} below.\\
\textbf{Input:} A dispersion instance $\mathbf t$.\\
\textbf{Question:} Does there exist $N \in \mathbb{N}$ such that for all $n \ge N$:
$\Disp_n(\mathbf t) \ge h_d(n)$?
\end{itemize}
\end{definition}

For each fixed $n$, \textsc{Fixed-Perfect-Disp} is decidable by brute force over the finitely many
interpretations.  The content of the exact problems above is the unbounded existential quantification over
the alphabet size.

\begin{remark}[Input representation for $\textsc{Asymp-Threshold}$]
In Problem $\textsc{Asymp-Threshold}_{d,h_d}$, both $d$ and $h_d$ are fixed parameters of the problem
variant, not part of the input. For example, one may take $h_d(n)=n^d+1$ for a fixed $d$. This avoids
ambiguity about the encoding and evaluation complexity of an input threshold function.
\end{remark}

A key property distinguishing dispersion problems within the broader Term Coding framework relates to their asymptotic behaviour over an $n$-element domain $A$.
Recall from Theorem~\ref{thm:disp-exponent} that for any dispersion instance $\mathbf t$:
\[
\Disp_n(\mathbf t) = \Theta\!\left(n^{D(\mathbf t)}\right)
\]
where the exponent $D(\mathbf t) = \lim_{n\to\infty} \frac{\log \Disp_n(\mathbf t)}{\log n}$ is always an
\textbf{integer} and equals the minimum term-cut $\rho(\mathbf t)$ of the shared term DAG.
This integer exponent property is fundamental to the tractability of the asymptotic question.
For the exact question, square instances are exactly square-bijectivity instances, while padding gives
hardness reductions from fixed-dimensional square bijectivity to fixed-output perfect dispersion.  These facts do
not amount to a reduction of every rectangular perfect-dispersion instance to square bijectivity.

\subsection{Square bijectivity and perfect dispersion}
\label{subsec:square-bijectivity}

\begin{definition}[Square bijectivity]
\label{def:bij3}
Fix $k\ge 1$.
The problem $\textsc{BIJ}_k$ is:
\begin{quote}
\textbf{Input:} a tuple of $k$ terms
\[
\mathbf t(\mathbf x)=(t_1(\mathbf x),\dots,t_k(\mathbf x))
\]
in exactly the variables
\[
\mathbf x=(x_1,\dots,x_k).
\]
\textbf{Question:} does there exist a finite set $A$ with $|A|\ge 2$ and an interpretation $\mathcal I$ of the
function symbols such that the induced map
\[
\Theta^\mathcal I:A^k\to A^k,\qquad
\mathbf a\mapsto \bigl(t_1^\mathcal I(\mathbf a),\dots,t_k^\mathcal I(\mathbf a)\bigr)
\]
is bijective?
\end{quote}
The unrestricted problem $\textsc{BIJ}$ has the same input, except that $k$ is part of the input rather than a
fixed parameter.
\end{definition}

\begin{lemma}[Square perfect dispersion equals square bijectivity]
\label{lem:square-perfect-equals-bij}
Let $\mathbf t(\mathbf x)=(t_1(\mathbf x),\dots,t_k(\mathbf x))$ be a dispersion instance with exactly $k$ input
variables and exactly $k$ output terms.
Then the following are equivalent:
\begin{enumerate}[label=\textit{(\roman*)}, leftmargin=*]
\item there exists $n\ge 2$ such that
\[
\Disp_n(\mathbf t)=n^k;
\]
\item there exists a finite set $A$ with $|A|\ge 2$ and an interpretation $\mathcal I$ such that
\[
\Theta^\mathcal I:A^k\to A^k
\]
is surjective;
\item there exists a finite set $A$ with $|A|\ge 2$ and an interpretation $\mathcal I$ such that
\[
\Theta^\mathcal I:A^k\to A^k
\]
is bijective.
\end{enumerate}
\end{lemma}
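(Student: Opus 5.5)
The plan is to prove the cycle of implications \textit{(i)}$\Rightarrow$\textit{(ii)}$\Rightarrow$\textit{(iii)}$\Rightarrow$\textit{(i)}. Two standing facts do all the work: the definition $\Disp_n(\mathbf t)=\max_{\mathcal I}|\Img(\Theta^{\mathcal I})|$, and the elementary fact that a self-map of a finite set is surjective if and only if it is bijective.

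For \textit{(i)}$\Rightarrow$\textit{(ii)}, fix $n\ge2$ with $\Disp_n(\mathbf t)=n^k$. Identify an $n$-element alphabet $A$ with $[n]$, which is allowed since dispersion depends only on cardinality. There are finitely many interpretations on $A$, so the maximum is attained by some $\mathcal I$. Then $\Img(\Theta^{\mathcal I})\subseteq A^k$ has exactly $n^k=|A^k|$ elements, so it equals $A^k$ and $\Theta^{\mathcal I}$ is surjective.

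For \textit{(ii)}$\Rightarrow$\textit{(iii)}, the domain and codomain of $\Theta^{\mathcal I}$ are the same finite set $A^k$. A surjection between finite sets of equal cardinality is injective, so $\Theta^{\mathcal I}$ is bijective. This is the only place where squareness ($k$ inputs, $k$ outputs) is used.

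For \textit{(iii)}$\Rightarrow$\textit{(i)}, set $n:=|A|\ge2$. A bijective $\Theta^{\mathcal I}$ has image of size $n^k$, so $\Disp_n(\mathbf t)\ge n^k$. The reverse inequality holds trivially because every image lies in $A^k$. Hence $\Disp_n(\mathbf t)=n^k$.

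None of these steps is a real obstacle. The only point needing care is that \textit{(ii)} and \textit{(iii)} speak of an arbitrary finite set $A$, while \textit{(i)} speaks of alphabet size; this is handled by transporting interpretations along a bijection $A\cong[n]$, as noted after Definition~\ref{def:dispersion-multisorted}.
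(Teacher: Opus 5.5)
Your proposal is correct and is essentially the paper's proof: the maximum over the finitely many interpretations is attained, which gives \textit{(i)}$\Rightarrow$\textit{(ii)}; finiteness with equal domain and codomain gives \textit{(ii)}$\Leftrightarrow$\textit{(iii)}; and the trivial bound $\Disp_n(\mathbf t)\le n^k$ closes the loop. The only cosmetic difference is that you close the cycle through \textit{(iii)}$\Rightarrow$\textit{(i)} where the paper uses \textit{(ii)}$\Rightarrow$\textit{(i)}, and this is immaterial because a bijection is in particular a surjection.
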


\begin{proof}
The equivalence of \textit{(ii)} and \textit{(iii)} is immediate because $A^k$ is finite and the domain and codomain
have the same size.

\textit{(i)} $\Rightarrow$ \textit{(ii)}:
if $\Disp_n(\mathbf t)=n^k$, the maximum defining $\Disp_n(\mathbf t)$ is over the finitely many
interpretations on a fixed finite alphabet and is therefore attained.  Hence there exists an interpretation
$\mathcal I$ on an $n$-element set
$A$ such that
\[
|\Img(\Theta^\mathcal I)|=n^k=|A^k|.
\]
Hence $\Theta^\mathcal I$ is surjective.

\textit{(ii)} $\Rightarrow$ \textit{(i)}:
if $\Theta^\mathcal I$ is surjective on a finite set $A$ of size $n$, then
\[
|\Img(\Theta^\mathcal I)|=|A^k|=n^k.
\]
Therefore
\[
\Disp_n(\mathbf t)\ge n^k.
\]
Since the codomain has size exactly $n^k$, we always have $\Disp_n(\mathbf t)\le n^k$, and hence
\[
\Disp_n(\mathbf t)=n^k.
\]
\end{proof}

\begin{corollary}[The square restriction of perfect dispersion]
\label{cor:bij-reduces-perfect}
On the subfamily with the same number of inputs and outputs,
$\textsc{Perfect-Disp}$ is exactly $\textsc{BIJ}$.  Consequently, the identity transformation is a
polynomial-time many-one reduction
\[
\textsc{BIJ}\ \le_m\ \textsc{Perfect-Disp}.
\]
For every fixed $k$, the analogous square restriction of
$\textsc{Perfect-Disp}_k$ is exactly $\textsc{BIJ}_k$.
\end{corollary}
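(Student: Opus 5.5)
This corollary follows almost directly from Lemma~\ref{lem:square-perfect-equals-bij}; the plan is to unfold the definitions and check that the question sets coincide on square inputs. Fix a square instance $\mathbf t$, with exactly $k$ input variables and $k$ output terms. As an instance of $\textsc{Perfect-Disp}$ (with $r=k$), it asks whether some $n\ge2$ satisfies $\Disp_n(\mathbf t)=n^k$. This is clause \textit{(i)} of Lemma~\ref{lem:square-perfect-equals-bij}. As an instance of $\textsc{BIJ}$, it asks whether some finite $A$ with $|A|\ge2$ and some interpretation make $\Theta^{\mathcal I}$ bijective. This is clause \textit{(iii)}. The lemma gives \textit{(i)}$\Leftrightarrow$\textit{(iii)}, so the two problems have the same yes-instances on the square subfamily. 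The only point needing care is that the input formats agree. A $\textsc{BIJ}$ input is a $k$-tuple of terms in exactly the variables $x_1,\dots,x_k$, which is literally a dispersion instance with $k$ inputs and $r=k$ outputs. The size measure $\instsize{\mathbf t}$ is also the same.

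Next I would derive the reduction. The map sending a $\textsc{BIJ}$ input to the same tuple, now read as a $\textsc{Perfect-Disp}$ input with $r=k$, is computable in linear time. By the previous paragraph, it maps yes-instances to yes-instances and no-instances to no-instances. Hence it is a polynomial-time many-one reduction $\textsc{BIJ}\le_m\textsc{Perfect-Disp}$. If the encoding of $\textsc{Perfect-Disp}$ requires $r$ to be written explicitly, the transformation just appends the number $k$, which takes polynomial time.

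For fixed $k$, I would repeat the argument with $r=k$ held fixed. A $\textsc{Perfect-Disp}_k$ input restricted to exactly $k$ input variables is precisely a $\textsc{BIJ}_k$ input, and Lemma~\ref{lem:square-perfect-equals-bij} again identifies the two questions. No step is a real obstacle. The one subtlety I would check is that the $|A|\ge2$ requirement in $\textsc{BIJ}$ matches $n\ge2$ in $\textsc{Perfect-Disp}$, so the degenerate singleton alphabet is excluded on both sides. I would also confirm that "exactly the variables $\mathbf x$" causes no mismatch: a dispersion instance whose terms omit some declared input is still a valid $\textsc{BIJ}_k$ input under the declared variable list.
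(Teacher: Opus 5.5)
Your proposal is correct and is essentially the paper's argument: the corollary is an immediate consequence of Lemma~\ref{lem:square-perfect-equals-bij}, identifying the $\textsc{Perfect-Disp}$ question with clause \textit{(i)} (or clause \textit{(ii)}, for the surjectivity phrasing of $\textsc{Perfect-Disp}_k$) and the $\textsc{BIJ}$ question with clause \textit{(iii)}, with the identity map serving as the reduction. Your additional checks on the input format, the $n\ge2$ convention and unused declared variables are sound; they make explicit what the paper leaves implicit.
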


\begin{remark}[Inverse-extension form of square bijectivity]
\label{rem:inverse-extension}
For a square tuple
\[
\mathbf t(\mathbf x)=(t_1(\mathbf x),\dots,t_k(\mathbf x)),
\]
adjoin genuinely fresh $k$-ary symbols
\[
s_1(\mathbf y),\dots,s_k(\mathbf y)
\]
and require
\[
s_i\bigl(t_1(\mathbf x),\dots,t_k(\mathbf x)\bigr)=x_i,
\qquad
t_i\bigl(s_1(\mathbf y),\dots,s_k(\mathbf y)\bigr)=y_i
\quad (1\le i\le k).
\]
If an interpretation of the enlarged signature satisfies both families of equations identically, then the
restriction to the original signature makes $\Theta$ bijective and the $s_i$ form its inverse.  Conversely,
every interpretation for which $\Theta$ is bijective extends, without changing any original symbol, by
interpreting the $s_i$ as the coordinates of $\Theta^{-1}$.
Thus the exact side of the complexity landscape is fundamentally a finite satisfiability problem for a purely equational
inverse-extension theory.
\end{remark}

\begin{problem}[Square bijectivity for $k=3$]
\label{prob:bij3-open}
Is the problem $\textsc{BIJ}_3$ decidable?
\end{problem}

\begin{remark}[Status of Problem~\ref{prob:bij3-open}]
\label{rem:bij-status-brief}
To the best of the author's knowledge, the decidability status of unrestricted $\textsc{BIJ}$ and of
$\textsc{BIJ}_k$ for every fixed $k\ge 2$ remains open.  If $\textsc{BIJ}_3$ turns out to be undecidable, then
the reductions below immediately yield undecidability of perfect dispersion for every fixed $r\ge 3$.  The
converse direction is not established in this paper.  See Section~\ref{sec:bij-status} for a discussion of what
is currently known.
\end{remark}

\subsection{Perfect-alphabet spectra and full asymptotic rate}
\label{subsec:perfect-implies-rate}

\begin{definition}[Perfect-alphabet spectrum]
\label{def:perfect-spectrum}
For a tuple $\mathbf t$ with $r$ output terms, define
\[
\PerfSpec(\mathbf t)
:=
\{\,n\in\mathbb N:n\ge2\text{ and }\Disp_n(\mathbf t)=n^r\,\}.
\]
Thus $\textsc{Perfect-Disp}(\mathbf t)$ asks whether $\PerfSpec(\mathbf t)$ is nonempty.
\end{definition}

\begin{lemma}[Product alphabets preserve surjectivity]
\label{lem:product-surjectivity}
Let $\mathcal I$ and $\mathcal J$ be interpretations of the same signature on finite alphabets $A$ and $B$.
If the term maps $\Theta^{\mathcal I}:A^k\to A^r$ and
$\Theta^{\mathcal J}:B^k\to B^r$ are surjective, then there is an interpretation
$\mathcal I\times\mathcal J$ on $A\times B$ for which
\[
\Theta^{\mathcal I\times\mathcal J}:(A\times B)^k\longrightarrow(A\times B)^r
\]
is surjective.
\end{lemma}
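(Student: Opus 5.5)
The plan is to use the componentwise product interpretation.

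\emph{Construction.} For each $d$-ary symbol $f$, define
\[
f^{\mathcal I\times\mathcal J}\bigl((a_1,b_1),\dots,(a_d,b_d)\bigr):=\bigl(f^{\mathcal I}(a_1,\dots,a_d),\,f^{\mathcal J}(b_1,\dots,b_d)\bigr).
\]
For a constant symbol this reads $f^{\mathcal I\times\mathcal J}()=(f^{\mathcal I}(),f^{\mathcal J}())$. The first key step is a structural induction on terms, with the input variables as the base case. It shows that for every term $u$ and every assignment $\bigl((a_1,b_1),\dots,(a_k,b_k)\bigr)$,
\[
u^{\mathcal I\times\mathcal J}\bigl((a_1,b_1),\dots,(a_k,b_k)\bigr)=\bigl(u^{\mathcal I}(\mathbf a),\,u^{\mathcal J}(\mathbf b)\bigr).
\]
The inductive step is immediate from the definition of the product table. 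Applying this to each output term gives, after identifying $(A\times B)^r$ with $A^r\times B^r$ by regrouping coordinates,
\[
\Theta^{\mathcal I\times\mathcal J}=\Theta^{\mathcal I}\times\Theta^{\mathcal J}.
\]
The same regrouping identifies the domain $(A\times B)^k$ with $A^k\times B^k$.

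\emph{Surjectivity.} Take a target $\bigl((c_1,e_1),\dots,(c_r,e_r)\bigr)$. Surjectivity of $\Theta^{\mathcal I}$ gives $\mathbf a\in A^k$ with $\Theta^{\mathcal I}(\mathbf a)=\mathbf c$. Surjectivity of $\Theta^{\mathcal J}$ gives $\mathbf b\in B^k$ with $\Theta^{\mathcal J}(\mathbf b)=\mathbf e$. The zipped input $\bigl((a_1,b_1),\dots,(a_k,b_k)\bigr)$ then maps to the target, by the factorisation above.

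\emph{Main point of care.} There is no real obstacle. The one thing to check is that the construction respects the single-table requirement: each symbol $f$ receives one operation table on $A\times B$, used at every occurrence of $f$, so repeated symbols cause no difficulty. This is exactly why the componentwise definition is used rather than any routing argument. Constants and nested or shared subterms are handled uniformly by the induction.
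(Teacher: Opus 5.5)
Your proposal is correct and follows the paper's proof essentially verbatim: the paper also defines each symbol (constants included) coordinatewise on $A\times B$, uses structural induction to show that terms evaluate coordinatewise, and identifies the resulting map with $\Theta^{\mathcal I}\times\Theta^{\mathcal J}$ under $(A\times B)^k\cong A^k\times B^k$ and $(A\times B)^r\cong A^r\times B^r$. Your explicit preimage-zipping step simply spells out why that product map is surjective.
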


\begin{proof}
For each $q$-ary symbol $f$, including $q=0$, define the operation on $A\times B$ coordinatewise by
\[
f^{\mathcal I\times\mathcal J}
  \bigl((a_1,b_1),\ldots,(a_q,b_q)\bigr)
=
\bigl(f^{\mathcal I}(a_1,\ldots,a_q),f^{\mathcal J}(b_1,\ldots,b_q)\bigr).
\]
A structural induction shows that every term is evaluated coordinatewise.  Under the canonical identifications
$(A\times B)^k\cong A^k\times B^k$ and $(A\times B)^r\cong A^r\times B^r$, the resulting term map is
$\Theta^{\mathcal I}\times\Theta^{\mathcal J}$, which is surjective.
\end{proof}

\begin{corollary}[Direct powers preserve surjectivity]
\label{lem:direct-power}
If an interpretation on $A$ makes $\Theta:A^k\to A^r$ surjective, then for every $m\ge1$ the same tuple has a
surjective coordinatewise interpretation on $A^m$.
\end{corollary}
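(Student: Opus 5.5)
The plan is to prove Corollary~\ref{lem:direct-power} by induction on $m$, using Lemma~\ref{lem:product-surjectivity} at each step. Fix an interpretation $\mathcal I$ on $A$ for which $\Theta^{\mathcal I}:A^k\to A^r$ is surjective. For $m=1$ there is nothing to prove: we take $\mathcal I$ itself, identifying $A^1$ with $A$.

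For the inductive step, suppose $\mathcal I_m$ is a coordinatewise interpretation on $A^m$ whose term map $(A^m)^k\to(A^m)^r$ is surjective. We apply Lemma~\ref{lem:product-surjectivity} with the two interpretations $\mathcal I_m$ on $A^m$ and $\mathcal I$ on $A$. This yields the interpretation $\mathcal I_m\times\mathcal I$ on $A^m\times A$, whose term map is surjective. Transporting along the canonical bijection $A^m\times A\cong A^{m+1}$ gives an interpretation $\mathcal I_{m+1}$ on $A^{m+1}$. The product construction acts coordinatewise by definition, so by induction every symbol of $\mathcal I_{m+1}$ acts by applying $f^{\mathcal I}$ separately in each of the $m+1$ coordinates; this is the coordinatewise interpretation the statement names.

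One step needs care: transport of an interpretation along a bijection preserves surjectivity of the term map. This is the same observation the paper already makes, that dispersion depends only on the cardinality of the alphabet. A bijection $\phi:A^m\times A\to A^{m+1}$ conjugates every operation, so the new term map equals $\phi^{\times r}\circ\Theta\circ(\phi^{-1})^{\times k}$, which is surjective whenever $\Theta$ is.

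None of these steps is a genuine obstacle; the argument is only bookkeeping. As a cross-check, one can describe the result directly. Define $f$ on $A^m$ coordinatewise from $f^{\mathcal I}$. A structural induction on terms then shows that, under the identifications $(A^m)^k\cong(A^k)^m$ and $(A^m)^r\cong(A^r)^m$, the term map is $(\Theta^{\mathcal I})^{\times m}$, and a product of surjections is surjective. In particular, the perfect-alphabet spectrum contains $|A|^m$ for every $m\ge1$.
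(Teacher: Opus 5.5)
Your proposal is correct and follows the paper's proof, which simply iterates Lemma~\ref{lem:product-surjectivity}; your induction on $m$, with the explicit transport along $A^m\times A\cong A^{m+1}$, is exactly that iteration written out. Your direct cross-check identifying the term map with $(\Theta^{\mathcal I})^{\times m}$ is also sound.
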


\begin{proof}
Iterate Lemma~\ref{lem:product-surjectivity}.
\end{proof}

\begin{theorem}[Structure of the perfect-alphabet spectrum]
\label{thm:perfect-spectrum-structure}
Let $\mathbf t$ have $r$ output terms.
\begin{enumerate}[label=\textit{(\roman*)}, leftmargin=*]
\item $\PerfSpec(\mathbf t)$ is multiplicatively closed; in particular, if $n\in\PerfSpec(\mathbf t)$, then
      $n^m\in\PerfSpec(\mathbf t)$ for every $m\ge1$.
\item If $\PerfSpec(\mathbf t)$ is nonempty, then $D(\mathbf t)=r$.
\item The converse to \textit{(ii)} fails: full asymptotic rate does not imply a nonempty perfect spectrum.
\end{enumerate}
\end{theorem}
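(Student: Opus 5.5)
The three parts are independent, and each reduces quickly to an earlier result.

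For \textit{(i)}, take $n,n'\in\PerfSpec(\mathbf t)$. The maximum defining $\Disp_n(\mathbf t)$ is over finitely many interpretations, so it is attained, as in the proof of Lemma~\ref{lem:square-perfect-equals-bij}. This gives interpretations $\mathcal I$ on an $n$-set $A$ and $\mathcal J$ on an $n'$-set $B$ whose term maps $A^k\to A^r$ and $B^k\to B^r$ are surjective. Lemma~\ref{lem:product-surjectivity} then yields a surjective term map on $A\times B$, which has size $nn'\ge 4\ge2$. Hence $\Disp_{nn'}(\mathbf t)\ge (nn')^r$. The reverse inequality is trivial because the codomain has $(nn')^r$ elements, so $nn'\in\PerfSpec(\mathbf t)$. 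The statement about powers $n^m$ follows by induction on $m$, or directly from Corollary~\ref{lem:direct-power}.

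For \textit{(ii)}, fix some $n_0\in\PerfSpec(\mathbf t)$. By \textit{(i)}, $n_0^m\in\PerfSpec(\mathbf t)$ for all $m\ge1$, so $\Disp_{n_0^m}(\mathbf t)=n_0^{mr}$. Theorem~\ref{thm:disp-exponent}\textit{(a)} gives $\Disp_{N}(\mathbf t)\le N^{\rho}$ for every $N$, with $\rho=\rho(\mathbf t)=D(\mathbf t)$. Taking $N=n_0$ (the case $m=1$ already suffices) gives $n_0^r\le n_0^\rho$, and since $n_0\ge2$ this forces $\rho\ge r$. Conversely, $\Disp_N(\mathbf t)\le N^r$ always holds because the codomain is $A^r$. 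Since the limit $D(\mathbf t)$ exists by Theorem~\ref{thm:disp-exponent}, this gives $D(\mathbf t)\le r$. Therefore $D(\mathbf t)=r$.

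For \textit{(iii)}, the diamond gadget of Example~\ref{ex:diamond} serves as the witness. Lemma~\ref{lem:diamond-maxflow} gives $D(\mathbf t)=4=r$. Lemma~\ref{lem:diamond-not-bij} shows that for every $n\ge2$ no interpretation makes $\Theta^{\mathcal I}:\A^4\to\A^4$ surjective, so $\PerfSpec(\mathbf t)=\emptyset$.

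None of these steps is a real obstacle. The one point needing care is in \textit{(ii)}: we must argue from the exact upper bound $\Disp_n(\mathbf t)\le n^\rho$, which holds with constant one, rather than from the asymptotic $\Theta(n^\rho)$ statement. With the exact bound, a single perfect alphabet already forces $\rho\ge r$, without any limiting argument.
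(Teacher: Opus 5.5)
Your proof is correct. Parts \textit{(i)} and \textit{(iii)} match the paper exactly, but your argument for \textit{(ii)} takes a different route.

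The paper never uses the term-cut bound there. It applies \textit{(i)} to get $n_0^m\in\PerfSpec(\mathbf t)$ for all $m$, so that $\log\Disp_{n_0^m}(\mathbf t)/\log(n_0^m)=r$ along a sequence tending to infinity. It then invokes the existence of the limit defining $D(\mathbf t)$, via the dispersion embedding (Lemma~\ref{lem:disp-embed}) and the general term-coding convergence theorem, to conclude that the whole limit equals $r$.

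You instead combine three ingredients: the constant-one cut bound of Theorem~\ref{thm:disp-exponent}\textit{(a)} at the single size $n_0$, the trivial bound $\Disp_N(\mathbf t)\le N^r$, and the identification $D(\mathbf t)=\rho(\mathbf t)$.

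What your route buys is a finite, syntactic certificate: one perfect alphabet already forces $\rho(\mathbf t)\ge r$, with no limit argument, so \textit{(i)} is not needed for \textit{(ii)}. The price is that turning $\rho(\mathbf t)\ge r$ into $D(\mathbf t)\ge r$ relies on the full term-cut theorem, including the Menger/tagged-routing lower bound \textit{(b)}. The paper's route needs only the existence of the limit and does not touch $\rho$ at all.

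Your closing remark that one \emph{must} use the constant-one bound is therefore overstated. It is needed only if you work from a single alphabet size. With multiplicative closure, even a bound $\Disp_N(\mathbf t)\le C N^{\rho}$ with $C>1$ would suffice, because $n_0^{mr}\le C\,n_0^{m\rho}$ for all $m$ already forces $\rho\ge r$.
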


\begin{proof}
For \textit{(i)}, if $a,b\in\PerfSpec(\mathbf t)$, choose surjective interpretations on alphabets $A,B$ of
those sizes and apply Lemma~\ref{lem:product-surjectivity}; the product alphabet has size $ab$.  The statement
about powers follows by induction.

For \textit{(ii)}, take $n\in\PerfSpec(\mathbf t)$.  By \textit{(i)},
$n^m\in\PerfSpec(\mathbf t)$ for every $m\ge1$, and therefore
\[
\frac{\log\Disp_{n^m}(\mathbf t)}{\log(n^m)}=r.
\]
The limit defining $D(\mathbf t)$ exists by the dispersion embedding
(Lemma~\ref{lem:disp-embed}) and the term-coding convergence theorem
\parencite{riis2026termcoding}.  Since $n^m\to\infty$, the full limit must equal $r$.

For \textit{(iii)}, the diamond tuple has $D(\mathbf t)=4$ by Lemma~\ref{lem:diamond-maxflow}, whereas
Lemma~\ref{lem:diamond-not-bij} shows that its perfect spectrum is empty.
\end{proof}

\begin{corollary}[Perfect dispersion implies full rate]
\label{thm:perfect-implies-full-rate}
\[
\textsc{Perfect-Disp}\ \subseteq\ \textsc{Full-Rate-Disp},
\]
and the inclusion is strict.
\end{corollary}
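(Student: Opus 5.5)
The statement to prove is Corollary~\ref{thm:perfect-implies-full-rate}: $\textsc{Perfect-Disp}\subseteq\textsc{Full-Rate-Disp}$, and the inclusion is strict. I will read both problems as sets of yes-instances, namely dispersion tuples $\mathbf t$. Membership in $\textsc{Perfect-Disp}$ means $\PerfSpec(\mathbf t)\neq\emptyset$ (Definition~\ref{def:perfect-spectrum}). Membership in $\textsc{Full-Rate-Disp}$ means $D(\mathbf t)=r$, where $r$ is the number of output terms. Both halves of the corollary then follow directly from Theorem~\ref{thm:perfect-spectrum-structure}.

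\textbf{Inclusion.} Let $\mathbf t$ be a yes-instance of $\textsc{Perfect-Disp}$. Then some $n\ge2$ satisfies $\Disp_n(\mathbf t)=n^r$, so $\PerfSpec(\mathbf t)$ is nonempty. Theorem~\ref{thm:perfect-spectrum-structure}\textit{(ii)} gives $D(\mathbf t)=r$, so $\mathbf t$ is a yes-instance of $\textsc{Full-Rate-Disp}$.

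There is also a route through Theorem~\ref{thm:disp-exponent} that avoids the general convergence theorem. Part \textit{(a)} gives $n^r=\Disp_n(\mathbf t)\le n^{\rho}$, so $\rho\ge r$. Conversely, the outputs themselves form a term-cut, so $\rho\le r$. Since $D(\mathbf t)=\rho(\mathbf t)$, this again gives $D(\mathbf t)=r$. I would mention this as a remark, because it needs only the term-cut upper bound and a single perfect alphabet size.

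\textbf{Strictness.} The diamond tuple of Example~\ref{ex:diamond} has $r=4$. Lemma~\ref{lem:diamond-maxflow} gives $D(\mathbf t)=4=r$, so the diamond is a yes-instance of $\textsc{Full-Rate-Disp}$. Lemma~\ref{lem:diamond-not-bij} shows that no interpretation on any alphabet with $n\ge2$ is surjective, so $\PerfSpec(\mathbf t)=\emptyset$ and the diamond is a no-instance of $\textsc{Perfect-Disp}$. This is exactly Theorem~\ref{thm:perfect-spectrum-structure}\textit{(iii)}.

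\textbf{Main obstacle.} The only point needing care is bookkeeping: the inclusion has to be between sets of instances that carry the same output count $r$. Since $r$ is read off from the tuple in both problems, this is immediate.
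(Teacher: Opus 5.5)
Your proof is correct and matches the paper's, which simply cites Theorem~\ref{thm:perfect-spectrum-structure}\textit{(ii)--(iii)}: part \textit{(ii)} gives the inclusion and the diamond gives strictness. Your side remark is also sound and more economical than the paper's route to \textit{(ii)}. The constant-one bound $n^r=\Disp_n(\mathbf t)\le n^{\rho}$ at a single $n\ge2$, together with the trivial cut $\{t_1,\dots,t_r\}$, pins down $\rho(\mathbf t)=r$ without multiplicative closure or the external convergence theorem. Identifying $\rho$ with $D$ still uses the Menger lower bound of Theorem~\ref{thm:disp-exponent}.
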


\begin{proof}
This is Theorem~\ref{thm:perfect-spectrum-structure}\textit{(ii)--(iii)}.
\end{proof}

\begin{corollary}[Full-rate dispersion is in polynomial time]
\label{cor:full-rate-ptime}
The problem $\textsc{Full-Rate-Disp}$ is decidable in polynomial time.
\end{corollary}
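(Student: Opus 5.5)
The statement reduces to computing the integer $D(\mathbf t)$ and comparing it with $r$. By Theorem~\ref{thm:disp-exponent}, $D(\mathbf t)=\rho(\mathbf t)$, the minimum vertex cut separating the input variables from the output terms in the shared term DAG with unit node capacities. So the decision procedure is: on input $\mathbf t$, run Algorithm~\ref{alg:disp-exponent} to obtain $F=D(\mathbf t)$, and accept if and only if $F=r$. Correctness is immediate from the definition of \textsc{Full-Rate-Disp}, since its question is literally whether $D(\mathbf t)=r$.

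For the running time, three steps are needed.
\begin{enumerate}[label=(\arabic*)]
\item Build the shared term DAG. Under the shared-DAG input convention of Definition~\ref{def:instance-size}, this takes time linear in $\instsize{\mathbf t}$. If the input were given as syntax trees, distinct subterms could be identified by hash-consing bottom-up, which is also polynomial.
\item Apply vertex splitting and attach a supersource and a supersink. This yields a network with $O(\instsize{\mathbf t})$ nodes and edges and unit or large integer capacities.
\item Run any polynomial-time max-flow algorithm, for example Edmonds--Karp. The flow value is at most the number of input variables, so even Ford--Fulkerson needs only polynomially many augmentations.
\end{enumerate}
Comparing $F$ with $r$ is trivial, and $r\le\instsize{\mathbf t}$.

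There is no real obstacle here; the content lies entirely in Theorem~\ref{thm:disp-exponent}. The one point to check is that the quantity computed, the vertex-split max-flow, equals the syntactic term-cut $\rho(\mathbf t)$. This needs two observations. First, input variables must carry unit capacity, which is why they are included as capacitated nodes. Second, constant subterms, being unreachable from the sources, contribute no cut element. Both match Definition~\ref{def:term-cut}, and the Remark on running time following Algorithm~\ref{alg:disp-exponent} already records the polynomial bound.
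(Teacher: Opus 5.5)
Your proposal is correct and follows the paper's proof, which simply computes the integer $D(\mathbf t)=\rho(\mathbf t)$ with the vertex-split max-flow of Theorem~\ref{thm:disp-exponent} and compares it with $r$. Your added running-time accounting (building the DAG, bounding the flow value by the number of inputs, and checking that constants and unit-capacity input nodes match Definition~\ref{def:term-cut}) only spells out what the paper leaves to the remark following Algorithm~\ref{alg:disp-exponent}.
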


\begin{proof}
Compute the integer $D(\mathbf t)=\rho(\mathbf t)$ by the vertex-split max-flow algorithm of
Theorem~\ref{thm:disp-exponent} and compare it with the number $r$ of output terms.
\end{proof}

\begin{theorem}[Complete one-output classification]
\label{thm:one-output}
Let $\mathbf t=(t_1)$ have one output.  If some input variable occurs in $t_1$, then
\[
\PerfSpec(\mathbf t)=\{2,3,4,\ldots\};
\]
if no input variable occurs, then $\PerfSpec(\mathbf t)=\varnothing$.
Consequently, $\textsc{Perfect-Disp}_1$ is decidable in linear time by scanning the term.
\end{theorem}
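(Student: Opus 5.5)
The proof splits according to whether $t_1$ contains an input variable.

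\emph{No input variable occurs.} Then $t_1$ is a ground term built only from constant symbols and function symbols. Under any interpretation it evaluates to a single element of $\A$, independent of $\mathbf a$. So $|\Img(\Theta^{\mathcal I})|=1<n$ for every $n\ge2$, and $\PerfSpec(\mathbf t)=\varnothing$. The same conclusion follows from Theorem~\ref{thm:disp-exponent}(a): the empty set is a term-cut, because constants need no cut element, so $\Disp_n(\mathbf t)\le n^0=1$.

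\emph{Some input variable $x_j$ occurs in $t_1$.} Fix any $n\ge2$ and identify $\A$ with $[n]$. I will exhibit an interpretation making $\Theta^{\mathcal I}:\A^k\to\A$ surjective. Fix one occurrence of $x_j$ in the syntax tree of $t_1$, and take the root-to-leaf path $t_1=u_0,u_1,\dots,u_\ell=x_j$. Each $u_{i+1}$ is an argument of $u_i=f_i(\dots)$ in some position $p_i$.

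The natural interpretation is that each symbol $f$ is a projection onto one fixed argument position, and every constant is $0$. The obstacle is that the same symbol $f$ may occur at several steps of the path with different positions $p_i$, so a single projection per symbol may not follow the chosen path. I would handle this by choosing the path well. Starting at the root, at each node $f(s_1,\dots,s_d)$ I use the projection position $\pi(f)$ if it is already fixed and $s_{\pi(f)}$ still contains a variable; otherwise I pick a new position.

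This greedy choice can get stuck, so a cleaner route is better. Interpret every symbol $f$ of arity $d\ge1$ as the projection $\pi_1$ onto its first argument, and every constant as $0$. Then $t_1$ evaluates to its leftmost leaf, which may be a constant even when a variable occurs elsewhere. To fix this, replace projections by a max-type operation. Interpret every constant as $0$ and every $f$ of arity $d\ge1$ as $f(a_1,\dots,a_d)=\max(a_1,\dots,a_d)$ on $[n]$.

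By structural induction, any term containing a variable evaluates to the maximum of the values of the variables occurring in it together with the constants $0$. Since every value is at least $0$, the constants do not affect the maximum. Now set $x_j=c$ and all other inputs to $0$. Then $t_1$ evaluates to $c$ for every $c\in[n]$, so $\Theta^{\mathcal I}$ is surjective and $n\in\PerfSpec(\mathbf t)$. This works uniformly for all $n\ge2$, and it needs no appeal to multiplicative closure.

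\emph{Decidability.} It remains to check that deciding whether some variable leaf occurs in $t_1$ takes linear time. In the shared-DAG representation this is a single traversal of the reachable nodes, so it is linear in $\instsize{\mathbf t}$. This gives the linear-time claim for $\textsc{Perfect-Disp}_1$.
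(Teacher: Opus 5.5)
Your proof is correct. The empty case and the linear-time scan match the paper. In the main case you interpret every constant as $0$ and every positive-arity symbol as $\max$ on $[n]$. Then each term evaluates to the maximum of its leaf values, so on the slice where every input other than $x_j$ is $0$, the term $t_1$ evaluates to $x_j$.

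The paper uses the same slice idea with a partial, term-dependent interpretation. For each occurrence $u=f(u_1,\dots,u_q)$, let $P$ be the set of child positions whose subterm contains $x_j$. The paper prescribes $f=a$ on the tuple carrying $a$ in positions $P$ and a fixed base value $c$ elsewhere, and $f=c$ on the all-$c$ tuple when $P=\varnothing$. It then checks that different occurrences of the same symbol never prescribe conflicting values, and extends each operation arbitrarily.

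Your $\max$ with $c=0$ is exactly a total extension of those prescriptions, since $\max(a,0,\dots,0)=a$. So the paper's one delicate step, the consistency check, disappears. That check addresses the obstacle that sank your projection attempt: one symbol occurring along the path with different relevant positions. Your single operation is independent of the term, of $j$, and of which occurrences contain $x_j$, so one interpretation per $n$ serves every one-output term that contains a variable.

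The paper's version is more general: it works for any base point $c$ and any extension off the prescribed tuples, and needs no order on $A$. Yours gives one canonical witness with less bookkeeping. In a write-up, drop the abandoned greedy and projection attempts and present the $\max$ construction directly.
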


\begin{proof}
If $t_1$ contains no variable, it evaluates to a constant under every interpretation, so its image has size one.
Conversely, suppose that $x_j$ occurs in $t_1$.  Fix any finite alphabet $A$ with $|A|\ge2$, choose $c\in A$,
and restrict to the input slice on which every variable except $x_j$ equals $c$.  For each occurrence
$u=f(u_1,\ldots,u_q)$, let $P$ be the child positions whose subterm contains $x_j$, and prescribe
\[
f^\mathcal I(\pi_P(a))=
\begin{cases}
a,&P\ne\varnothing,\\
c,&P=\varnothing,
\end{cases}
\qquad
\pi_P(a)_i=
\begin{cases}
a,&i\in P,\\
c,&i\notin P.
\end{cases}
\]
These prescriptions remain consistent when the same symbol occurs several times.  Indeed, if
$\pi_P(a)=\pi_Q(b)$, then either the non-$c$ coordinates determine $P=Q$ and $a=b$, or both sides are the
all-$c$ tuple and both prescribed outputs are $c$.  Extend each operation arbitrarily away from the prescribed
tuples.  Induction on subterms now gives $t_1^\mathcal I=x_j$ on the chosen slice, so $t_1^\mathcal I$ is
surjective.  The syntactic occurrence test is linear in the term size.
\end{proof}

\subsection{Padding lemmas}
\label{subsec:padding-lemmas}

\begin{lemma}[Padding outputs preserves surjectivity equivalence]
\label{lem:padding-outputs}
Let $A$ be a nonempty finite set, let $p: A^{k_0} \to A^{k_0}$ be a map, and let $k \ge r \ge k_0$.
Define $F: A^k \to A^r$ by
\[
F(a_1, \dots, a_k) := \bigl(p(a_1, \dots, a_{k_0}),\, a_{k_0+1}, \dots, a_r\bigr).
\]
Then $F$ is surjective if and only if $p$ is surjective.
\end{lemma}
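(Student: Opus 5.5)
The plan is to prove both directions directly from the definition of $F$, treating its output as a pair: the block $p(a_1,\dots,a_{k_0})\in A^{k_0}$ followed by the identity copy $(a_{k_0+1},\dots,a_r)\in A^{r-k_0}$. The coordinates $a_{r+1},\dots,a_k$ are ignored by $F$. That makes sense because $k\ge r$. The key observation is that the two output blocks depend on disjoint sets of input coordinates. Hence the image of $F$ is the product $\Img(p)\times A^{r-k_0}$.

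\emph{($\Leftarrow$).} Suppose $p$ is surjective, and take an arbitrary target $(\mathbf b,c_{k_0+1},\dots,c_r)$ with $\mathbf b\in A^{k_0}$.
\begin{enumerate}[label=(\arabic*)]
\item Choose $\mathbf a'\in A^{k_0}$ with $p(\mathbf a')=\mathbf b$.
\item Set $a_{k_0+1},\dots,a_r$ equal to $c_{k_0+1},\dots,c_r$.
\item Fill the remaining coordinates $a_{r+1},\dots,a_k$ with an arbitrary fixed element of $A$. This is possible because $A$ is nonempty.
\end{enumerate}
Then $F$ maps this input to the target.

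\emph{($\Rightarrow$).} Suppose $F$ is surjective, and let $\mathbf b\in A^{k_0}$. Extend $\mathbf b$ to an element of $A^r$ by appending any $r-k_0$ elements of $A$; again we use $A\neq\varnothing$. Surjectivity of $F$ gives a preimage $\mathbf a$, and reading off the first block yields $p(a_1,\dots,a_{k_0})=\mathbf b$. So $p$ is surjective.

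There is no real obstacle here; the only care needed is bookkeeping. The hypothesis $k\ge r\ge k_0$ guarantees that the index ranges $k_0+1..r$ and $r+1..k$ make sense, possibly being empty. Nonemptiness of $A$ is needed for the padding choices. The degenerate cases $r=k_0$ and $k=r$ are covered by the same argument with empty blocks.
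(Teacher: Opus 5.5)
Your proposal is correct and follows the paper's proof essentially verbatim. For ($\Rightarrow$) you pad a target of $p$ with arbitrary trailing coordinates and read off the first block of an $F$-preimage; for ($\Leftarrow$) you combine a $p$-preimage with the prescribed coordinates $a_{k_0+1},\dots,a_r$ and arbitrary values on the unused inputs, and your use of $A\neq\varnothing$ and the observation $\Img(F)=\Img(p)\times A^{r-k_0}$ are sound.
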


\begin{proof}
($\Rightarrow$) If $F$ is surjective, then for any $(b_1,\dots,b_{k_0})\in A^{k_0}$ we may choose arbitrary trailing
coordinates $b_{k_0+1},\dots,b_r$ and use surjectivity of $F$ to find
\[
(a_1,\dots,a_k)\in A^k
\]
with
\[
F(a_1,\dots,a_k)=(b_1,\dots,b_r).
\]
By the definition of $F$, this implies
\[
p(a_1,\dots,a_{k_0})=(b_1,\dots,b_{k_0}),
\]
so $p$ is surjective.

($\Leftarrow$) If $p$ is surjective, then for any $(b_1,\dots,b_r)\in A^r$ there exist
\[
(a_1,\dots,a_{k_0})\in A^{k_0}
\]
such that
\[
p(a_1,\dots,a_{k_0})=(b_1,\dots,b_{k_0}).
\]
Taking
\[
a_{k_0+1}=b_{k_0+1},\dots,a_r=b_r,
\]
and choosing the remaining inputs arbitrarily if $k>r$, we obtain
\[
F(a_1,\dots,a_k)=(b_1,\dots,b_r).
\]
Thus $F$ is surjective.
\end{proof}

\begin{lemma}[Unused inputs are irrelevant for surjectivity]
\label{lem:unused-inputs}
Let $A$ be a nonempty finite set and let $F: A^k \to A^r$ factor through the projection to the first
$r$ coordinates, i.e.
\[
F(a_1,\dots,a_k)=f(a_1,\dots,a_r)
\]
for some $f:A^r\to A^r$.
Then $F$ is surjective if and only if $f$ is surjective.
\end{lemma}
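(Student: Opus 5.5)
The plan is to prove the two directions separately, and in each direction to work directly from the factorization $F=f\circ\pi$. Here $\pi:A^k\to A^r$ is the projection onto the first $r$ coordinates; this implicitly requires $k\ge r$, as in the padding setting of Lemma~\ref{lem:padding-outputs}. The one fact about $\pi$ that we use is that it is surjective, because $A$ is nonempty. Concretely, any $(a_1,\dots,a_r)\in A^r$ extends to an element of $A^k$ by filling the remaining $k-r$ coordinates with an arbitrary fixed element of $A$.

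For the direction ($\Leftarrow$), assume $f$ is surjective and take any target $\mathbf b\in A^r$. Choose $(a_1,\dots,a_r)$ with $f(a_1,\dots,a_r)=\mathbf b$, and then pad it to a $k$-tuple using the surjectivity of $\pi$. By construction $F$ sends this $k$-tuple to $\mathbf b$, so $F$ is surjective.

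For the direction ($\Rightarrow$), assume $F$ is surjective and take any $\mathbf b\in A^r$. Some $\mathbf a\in A^k$ has $F(\mathbf a)=\mathbf b$. By the factorization, $f(\pi(\mathbf a))=\mathbf b$, so $\mathbf b$ lies in the image of $f$. This step does not use nonemptiness of $A$. Equivalently, $\Img F=\Img f$ as subsets of $A^r$, which gives both directions at once.

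There is no genuine obstacle in either direction. The only point requiring care is the nonemptiness of $A$, which is what makes $\pi$ surjective in the first direction. If one also wants the square-bijectivity reading, finiteness of $A^r$ upgrades surjectivity of $f$ to bijectivity, but the statement does not require this.
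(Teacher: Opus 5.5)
Your proof is correct and takes the same route as the paper, which simply notes that $\Img(F)=\Img(f)$. Your two directions are the two inclusions of that equality, and you rightly identify nonemptiness of $A$ (surjectivity of the projection, with $k\ge r$ implicit) as what gives $\Img(f)\subseteq\Img(F)$.
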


\begin{proof}
Immediate from
\[
\Img(F)=\Img(f).
\]
\end{proof}

\subsection{Reductions from square bijectivity to perfect dispersion}
\label{subsec:reductions}

\begin{theorem}[Reduction in the square case]
\label{thm:reduction-kk}
For every fixed $k\ge 3$, there is a polynomial-time many-one reduction
\[
\textsc{BIJ}_3 \le_m \textsc{Perfect-Disp}_k
\]
on square instances (instances with exactly $k$ input variables and $k$ output terms).
In particular, if $\textsc{BIJ}_3$ is undecidable then $\textsc{Perfect-Disp}_k$ is undecidable.
\end{theorem}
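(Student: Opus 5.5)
The reduction I would use is padding. Given a $\textsc{BIJ}_3$ instance $\mathbf t(x_1,x_2,x_3)=(t_1,t_2,t_3)$, map it to the square $k$-tuple
\[
\mathbf t'(x_1,\dots,x_k):=\bigl(t_1(x_1,x_2,x_3),\,t_2(x_1,x_2,x_3),\,t_3(x_1,x_2,x_3),\,x_4,\dots,x_k\bigr).
\]
It has exactly $k$ input variables and $k$ output terms, and the same signature as $\mathbf t$. The output is the shared term DAG of $\mathbf t$ plus $k-3$ variable nodes and $k-3$ output entries. For fixed $k$ it is therefore computable in linear time, and in particular in polynomial time. The remaining work is to show that $\mathbf t\in\textsc{BIJ}_3$ if and only if $\mathbf t'\in\textsc{Perfect-Disp}_k$.

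By Lemma~\ref{lem:square-perfect-equals-bij}, applied to both square tuples, this amounts to comparing surjectivity of their term maps. Fix a finite $A$ with $|A|\ge2$ and any interpretation $\mathcal I$ of the common signature. The padded variables $x_4,\dots,x_k$ do not occur in $t_1,t_2,t_3$, and the added outputs are bare variables, so they evaluate independently of $\mathcal I$. By structural induction on terms, the induced map satisfies
\[
\Theta'^{\mathcal I}(a_1,\dots,a_k)=\bigl(\Theta^{\mathcal I}(a_1,a_2,a_3),a_4,\dots,a_k\bigr).
\]
This is exactly the map $F$ of Lemma~\ref{lem:padding-outputs} with $k_0=3$, $r=k$ and $p=\Theta^{\mathcal I}$. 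That lemma then gives: $\Theta'^{\mathcal I}$ is surjective if and only if $\Theta^{\mathcal I}$ is.

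The two directions follow by using the same interpretation on both sides. If some $(A,\mathcal I)$ makes $\Theta^{\mathcal I}$ bijective, hence surjective, then $\Theta'^{\mathcal I}$ is surjective, so $|A|\in\PerfSpec(\mathbf t')$. Conversely, if $\Disp_n(\mathbf t')=n^k$, the maximum is attained by some interpretation on an $n$-set, and that interpretation makes $\Theta'$ surjective. Hence $\Theta$ is surjective on $A^3$, and therefore bijective because $A^3$ is finite. The undecidability transfer is then the standard consequence of a computable many-one reduction: a decider for $\textsc{Perfect-Disp}_k$ composed with the padding map would decide $\textsc{BIJ}_3$.

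I expect no substantive obstacle; the mathematical content is already in Lemma~\ref{lem:padding-outputs}. The step needing the most care is the bookkeeping that interpretations of $\mathbf t$ and $\mathbf t'$ are literally the same objects, with no fresh symbols introduced. One must also check that the identity-output coordinates impose no constraint. Both points hold because the padding uses only variables and no function symbols. A last point is that for $k=3$ the reduction is the identity, which is consistent with Corollary~\ref{cor:bij-reduces-perfect}.
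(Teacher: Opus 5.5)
Your proposal is correct and follows essentially the same route as the paper's proof. Both pad with the bare variables $x_4,\dots,x_k$, introduce no new symbols, apply Lemma~\ref{lem:padding-outputs} with $k_0=3$ and $r=k$ under one shared interpretation, and use Lemma~\ref{lem:square-perfect-equals-bij} to pass between perfect dispersion and bijectivity; your linear-size bookkeeping matches the paper's count $\instsize{\mathbf t}=\instsize{\mathbf p}+2(k-3)$.
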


\begin{proof}
Fix $k\ge 3$.
We reduce from $\textsc{BIJ}_3$.
Given an instance
\[
\mathbf p(x_1,x_2,x_3)=(p_1(x_1,x_2,x_3),p_2(x_1,x_2,x_3),p_3(x_1,x_2,x_3)),
\]
form the square $k$-tuple
\[
\mathbf t(x_1,\dots,x_k):=
\bigl(
p_1(x_1,x_2,x_3),\,
p_2(x_1,x_2,x_3),\,
p_3(x_1,x_2,x_3),\,
x_4,\dots,x_k
\bigr).
\]
The transformation introduces no new function symbols and has
$\instsize{\mathbf t}=\instsize{\mathbf p}+2(k-3)$, so it is computable in linear time and has linear output
size.  For every interpretation $\mathcal I$ on a finite set $A$, use the same $\mathcal I$ for the original
and transformed instances.  The induced map
\[
\Theta^\mathcal I:A^k\to A^k
\]
is of the form covered by Lemma~\ref{lem:padding-outputs}, with $k_0=3$ and $r=k$.
Hence $\Theta^\mathcal I$ is surjective if and only if the original map
\[
A^3\to A^3,\qquad
(a_1,a_2,a_3)\mapsto
\bigl(
p_1^\mathcal I(a_1,a_2,a_3),\,
p_2^\mathcal I(a_1,a_2,a_3),\,
p_3^\mathcal I(a_1,a_2,a_3)
\bigr)
\]
is surjective.
Since $A^3$ is finite, that is equivalent to bijectivity.
By Lemma~\ref{lem:square-perfect-equals-bij}, the constructed instance has perfect dispersion on some finite
alphabet if and only if the original $\textsc{BIJ}_3$ instance is a yes-instance.
Therefore
\[
\textsc{BIJ}_3 \le_m \textsc{Perfect-Disp}_k
\]
on square instances.
This establishes the many-one reduction $\textsc{BIJ}_3 \le_m \textsc{Perfect-Disp}_k$.
\end{proof}

\begin{remark}[The case $k<r$]
\label{rem:k-less-r}
If a dispersion instance has $k<r$, then for every alphabet size $n$ and every interpretation $\mathcal I$ we have
\[
|\Img(\Theta^\mathcal I)|\le |A^k|=n^k<n^r,
\]
so perfect dispersion is impossible.
Therefore all exact perfect-dispersion questions may be restricted to the case $k\ge r$.
\end{remark}

\begin{theorem}[Reduction for fixed output dimension]
\label{thm:reduction-nr}
For every fixed pair $k\ge r\ge 3$, there is a polynomial-time many-one reduction
\[
\textsc{BIJ}_3 \le_m \textsc{Perfect-Disp}_r
\]
on the subfamily of instances with exactly $k$ input variables and $r$ output terms.
Consequently, if $\textsc{BIJ}_3$ is undecidable then $\textsc{Perfect-Disp}_r$ is undecidable.
\end{theorem}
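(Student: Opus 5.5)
The plan is to mirror the proof of Theorem~\ref{thm:reduction-kk}, padding both the outputs and the inputs, and to use Lemma~\ref{lem:padding-outputs} in its full generality $k\ge r\ge k_0$ with $k_0=3$. Given a $\textsc{BIJ}_3$ instance $\mathbf p(x_1,x_2,x_3)=(p_1,p_2,p_3)$, I will output the tuple
\[
\mathbf t(x_1,\dots,x_k):=\bigl(p_1(x_1,x_2,x_3),\,p_2(x_1,x_2,x_3),\,p_3(x_1,x_2,x_3),\,x_4,\dots,x_r\bigr),
\]
viewed as a dispersion instance with input variables $x_1,\dots,x_k$ and $r$ output terms. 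The variables $x_{r+1},\dots,x_k$ are declared as inputs but do not occur in any term. No new function symbols are introduced. With the shared-DAG measure, the size grows by $r-3$ variable nodes, $r-3$ output slots, and the $k-r$ declared inputs. Since $k$ and $r$ are fixed, this is a linear-time, linear-size transformation.

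For correctness, fix any finite $A$ with $|A|\ge2$ and any interpretation $\mathcal I$, and use the same $\mathcal I$ on both instances. The induced map $\Theta^{\mathcal I}:A^k\to A^r$ has exactly the shape $F$ of Lemma~\ref{lem:padding-outputs}, with $p=\Theta^{\mathcal I}_{\mathbf p}:A^3\to A^3$. Hence $\Theta^{\mathcal I}$ is surjective if and only if $\Theta^{\mathcal I}_{\mathbf p}$ is surjective. Because $A^3$ is finite, the latter is equivalent to $\Theta^{\mathcal I}_{\mathbf p}$ being bijective. Alternatively, one can first drop the unused inputs using Lemma~\ref{lem:unused-inputs} and then apply the square case.

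Quantifying over all $A$ and $\mathcal I$, the following are equivalent:
\begin{itemize}[leftmargin=*]
\item some $n\ge2$ has $\Disp_n(\mathbf t)=n^r$;
\item some interpretation makes $\Theta^{\mathcal I}$ surjective (this step is the argument of (i)$\Leftrightarrow$(ii) in Lemma~\ref{lem:square-perfect-equals-bij}, which uses only that the maximum is attained and that the codomain has size $n^r$, so it applies to rectangular instances as well);
\item $\mathbf p$ is a yes-instance of $\textsc{BIJ}_3$.
\end{itemize}
So $\mathbf t\in\textsc{Perfect-Disp}_r$ if and only if $\mathbf p\in\textsc{BIJ}_3$, and the image of the reduction lies in the subfamily with $k$ inputs and $r$ outputs. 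The undecidability transfer then follows by contraposition.

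The only point needing care is bookkeeping. The first step is to check that the instance format permits declared but unused input variables, so that the input arity really is $k$ rather than $r$. The second is to check that the surjectivity-to-$\Disp$ equivalence does not rely on squareness. Neither step should present a real obstacle.
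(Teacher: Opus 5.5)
Your proposal is correct and takes essentially the same route as the paper. It uses the same padded tuple $(p_1,p_2,p_3,x_4,\dots,x_r)$ on inputs $x_1,\dots,x_k$, Lemma~\ref{lem:padding-outputs} with $k_0=3$, finiteness of $A^3$ to pass from surjectivity to bijectivity, and Lemma~\ref{lem:unused-inputs} for the unused inputs; your remark that the surjectivity-to-$\Disp$ equivalence does not need squareness makes explicit a step the paper leaves implicit.
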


\begin{proof}
Fix integers $k\ge r\ge 3$.
We again reduce from $\textsc{BIJ}_3$.
Given an instance
\[
\mathbf p(x_1,x_2,x_3)=(p_1(x_1,x_2,x_3),p_2(x_1,x_2,x_3),p_3(x_1,x_2,x_3)),
\]
construct a dispersion instance with $k$ inputs and $r$ outputs by
\[
t_i(\mathbf x)=
\begin{cases}
p_i(x_1,x_2,x_3) & \text{for } i=1,2,3,\\
x_i & \text{for } 4\le i\le r.
\end{cases}
\]
The remaining input variables $x_{r+1},\dots,x_k$ do not occur in the outputs; by
Lemma~\ref{lem:unused-inputs}, they do not affect surjectivity.  The construction introduces no new function
symbols and satisfies
$\instsize{\mathbf t}=\instsize{\mathbf p}+2(r-3)$, so it is computable in linear time and has linear output
size.

For every interpretation $\mathcal I$ on a finite set $A$, use the same $\mathcal I$ for both instances.  The induced map
\[
\Theta^\mathcal I:A^k\to A^r
\]
has the form treated by Lemma~\ref{lem:padding-outputs}, again with $k_0=3$.
Therefore $\Theta^\mathcal I$ is surjective if and only if the original map
\[
A^3\to A^3,\qquad
(a_1,a_2,a_3)\mapsto
\bigl(
p_1^\mathcal I(a_1,a_2,a_3),\,
p_2^\mathcal I(a_1,a_2,a_3),\,
p_3^\mathcal I(a_1,a_2,a_3)
\bigr)
\]
is surjective, hence bijective.
Thus the constructed instance has perfect dispersion on some finite alphabet if and only if the original
$\textsc{BIJ}_3$ instance is a yes-instance.
This gives a many-one reduction
\[
\textsc{BIJ}_3 \le_m \textsc{Perfect-Disp}_r
\]
on the subfamily with exactly $k$ inputs.
This establishes the many-one reduction $\textsc{BIJ}_3 \le_m \textsc{Perfect-Disp}_r$.
\end{proof}

\begin{remark}[The borderline output dimension $r=2$]
Theorem~\ref{thm:one-output} resolves $r=1$, and the padding reductions above cover every fixed $r\ge3$.
For $r=2$, Corollary~\ref{cor:bij-reduces-perfect} identifies the square subfamily of
$\textsc{Perfect-Disp}_2$ exactly with $\textsc{BIJ}_2$.  This observation does not yield a classification,
since $\textsc{BIJ}_2$ itself remains open, and the padding reductions from $\textsc{BIJ}_3$ cannot decrease
the output dimension from three to two.  Thus the decidability of $\textsc{Perfect-Disp}_2$ remains open.
\end{remark}

\subsection{The unconditional asymptotic theorem}
\label{subsec:unconditional-asymptotic}

Fix an integer $d \ge 0$ and a threshold function $h_d: \mathbb{N} \to \mathbb{N}$ satisfying:
\begin{equation}
\label{eq:threshold-conditions}
\exists N_0\ \forall n \ge N_0:\ n^d < h_d(n), \qquad \text{and} \qquad h_d(n) \in o(n^{d+1}).
\end{equation}
For example, $h_d(n) = n^d + 1$ satisfies \eqref{eq:threshold-conditions}; so does
$h_d(n)=\lfloor n^d \log n \rfloor$, where the logarithm may have any fixed base greater than one.

\begin{theorem}[Polynomial-time decidability of asymptotic thresholds]
\label{thm:ptime-threshold}
Fix $d \ge 0$ and a threshold function $h_d$ satisfying \eqref{eq:threshold-conditions}. Then the problem:
\begin{quote}
\textbf{Input:} A dispersion instance $\mathbf t$.\\
\textbf{Question:} Does there exist $N$ such that for all $n \ge N$: $\Disp_n(\mathbf t) \ge h_d(n)$?
\end{quote}
is decidable in polynomial time in the size of $\mathbf t$.
\end{theorem}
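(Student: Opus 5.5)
The plan is to show that the eventual inequality $\Disp_n(\mathbf t)\ge h_d(n)$ holds if and only if $D(\mathbf t)=\rho(\mathbf t)\ge d+1$. The decision procedure then computes $\rho(\mathbf t)$ by Algorithm~\ref{alg:disp-exponent} in time polynomial in $\instsize{\mathbf t}$ and compares it with the fixed integer $d$. Everything therefore rests on the two explicit bounds of Theorem~\ref{thm:disp-exponent}, used non-asymptotically.

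\emph{Case $\rho\ge d+1$.} Part~\textit{(b)} of Theorem~\ref{thm:disp-exponent} gives $\Disp_n(\mathbf t)\ge (2s)^{-\rho}n^\rho\ge (2s)^{-\rho}n^{d+1}$ for all $n\ge 2s$, where $s=|\Gamma_{\mathrm{sub}}(\mathbf t)|$. Since $h_d(n)=o(n^{d+1})$, there is $N_1$ with $h_d(n)\le (2s)^{-\rho}n^{d+1}$ for all $n\ge N_1$. Taking $N=\max(N_1,2s)$ gives $\Disp_n(\mathbf t)\ge h_d(n)$ for all $n\ge N$, so the instance is a yes-instance. The witness $N$ depends on $\mathbf t$, but the question only asks whether some $N$ exists, so this causes no difficulty.

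\emph{Case $\rho\le d$.} Here the exact constant-one bound of part~\textit{(a)} is essential: $\Disp_n(\mathbf t)\le n^\rho\le n^d$ for every $n\ge1$. The first condition in \eqref{eq:threshold-conditions} gives $n^d<h_d(n)$ for all $n\ge N_0$. Hence $\Disp_n(\mathbf t)<h_d(n)$ for every $n\ge N_0$, and no $N$ can work. This direction is where care is needed. A bound of the form $\Theta(n^\rho)$ with an unspecified constant would fail at thresholds like $h_d(n)=n^d+1$, because a constant $c>1$ could put $cn^d$ above $h_d$. I would stress that the term-cut argument yields exactly $n^{|C|}$ with no constant. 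The monotonicity $n^\rho\le n^d$ needs $n\ge1$, which holds. The case $\rho=0$ is covered by the same inequality: all outputs are then built from constants, so $\Disp_n=1\le n^d<h_d(n)$.

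\emph{Complexity.} Combining the two cases gives the characterization. The running time is that of building the vertex-split network and running a polynomial max-flow algorithm, as in the remark following Algorithm~\ref{alg:disp-exponent}, followed by a single integer comparison with the fixed $d$. Since $d$ and $h_d$ are fixed parameters, $h_d$ never needs to be evaluated, and there is no encoding issue. I expect no step to be a real obstacle beyond correctly invoking the sharp upper bound in the second case.
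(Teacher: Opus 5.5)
Your proposal is correct and follows the paper's proof: the answer is ``yes'' exactly when $D(\mathbf t)=\rho(\mathbf t)\ge d+1$, using the lower bound of Theorem~\ref{thm:disp-exponent}\textit{(b)} against $h_d=o(n^{d+1})$ in one direction and the constant-one cut bound $\Disp_n(\mathbf t)\le n^{\rho}\le n^d<h_d(n)$ in the other, followed by the max-flow computation. The only difference is cosmetic: you use the explicit constant $(2s)^{-\rho}$ where the paper uses an unspecified $c>0$ from $\Theta(n^{D(\mathbf t)})$.
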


\begin{proof}
By Theorem~\ref{thm:disp-exponent}, we have
\[
\Disp_n(\mathbf t)=\Theta(n^{D(\mathbf t)})
\]
where $D(\mathbf t)\in\mathbb Z_{\ge 0}$ is computable in polynomial time.

We claim that the answer is ``yes'' if and only if
\[
D(\mathbf t)\ge d+1.
\]

Assume first that $D(\mathbf t)\ge d+1$.
Then there exist constants $c>0$ and $N_1$ such that for all $n\ge N_1$,
\[
\Disp_n(\mathbf t)\ge c\,n^{D(\mathbf t)}\ge c\,n^{d+1}.
\]
Since $h_d(n)\in o(n^{d+1})$, there exists $N_2$ such that for all $n\ge N_2$,
\[
h_d(n)\le c\,n^{d+1}.
\]
Hence for all $n\ge \max\{N_1,N_2\}$ we have
\[
\Disp_n(\mathbf t)\ge h_d(n).
\]

Conversely, assume that $D(\mathbf t)\le d$.
The upper-bound part of Theorem~\ref{thm:disp-exponent} gives
\[
\Disp_n(\mathbf t)\le n^{D(\mathbf t)}\le n^d
\]
for all $n$.
By \eqref{eq:threshold-conditions}, there exists $N_0$ such that for all $n\ge N_0$,
\[
n^d<h_d(n).
\]
Therefore for all $n\ge N_0$,
\[
\Disp_n(\mathbf t)\le n^d<h_d(n),
\]
so the asymptotic threshold condition fails.
Notice that this direction uses the cut bound with constant exactly one.  A generic upper estimate
$\Disp_n(\mathbf t)\le Cn^d$ with $C>1$ would not suffice for the threshold $n^d+1$.

Thus the decision procedure is:
\begin{enumerate}[leftmargin=*]
\item compute $D(\mathbf t)$ using the vertex-split max-flow algorithm of Section~\ref{subsec:compute-exponent};
\item return ``yes'' iff $D(\mathbf t)\ge d+1$.
\end{enumerate}
This runs in polynomial time.
\end{proof}

\begin{remark}
Theorem~\ref{thm:ptime-threshold} is unconditional.
The key difference from the exact side is that it asks about eventual asymptotic growth rather than the existence of
a finite alphabet size at which full codomain volume is attained.
\end{remark}

\subsection{Summary of the complexity landscape}
\label{subsec:summary-landscape}

\begin{theorem}[Exact-vs-rate complexity landscape]
\label{thm:landscape}
The following hold for single-sorted dispersion:
\begin{enumerate}[label=\textit{(\roman*)}, leftmargin=*]
\item the perfect-alphabet spectrum is multiplicatively closed; a nonempty spectrum implies full rate, and the
      converse fails;
\item $\textsc{Perfect-Disp}_1$ is decidable in linear time, with spectrum either empty or all sizes $n\ge2$;
\item the square restriction of $\textsc{Perfect-Disp}$ is exactly $\textsc{BIJ}$;
\item for every fixed $r\ge 3$, there is a polynomial-time many-one reduction
\[
\textsc{BIJ}_3 \le_m \textsc{Perfect-Disp}_r.
\]
Consequently, undecidability of $\textsc{BIJ}_3$ would imply undecidability of
$\textsc{Perfect-Disp}_r$;
\item $\textsc{Full-Rate-Disp}$ is decidable in polynomial time, and for every fixed $d\ge 0$ and every
threshold function $h_d$ satisfying
\eqref{eq:threshold-conditions}, the asymptotic threshold problem
$\textsc{Asymp-Threshold}_{d,h_d}$ is decidable in polynomial time.
\end{enumerate}
\end{theorem}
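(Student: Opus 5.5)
The landscape theorem is a summary, so I will prove it by assembling results already established in this section and the previous one, item by item; there is no new combinatorial content.

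For \textit{(i)}, I invoke Theorem~\ref{thm:perfect-spectrum-structure} directly. Multiplicative closure is part \textit{(i)}, which comes from the product-alphabet construction of Lemma~\ref{lem:product-surjectivity}. The implication from a nonempty spectrum to full rate is part \textit{(ii)}, which evaluates along the subsequence $n^m$ and uses existence of the limit $D(\mathbf t)$. The failure of the converse is part \textit{(iii)}, using the diamond: Lemma~\ref{lem:diamond-maxflow} gives $D=4$, and Lemma~\ref{lem:diamond-not-bij} gives an empty spectrum.

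For \textit{(ii)}, I cite Theorem~\ref{thm:one-output}. It gives the dichotomy $\PerfSpec(\mathbf t)\in\{\varnothing,\{2,3,\dots\}\}$, decided by whether some input variable occurs in $t_1$, and that occurrence test is a linear scan.

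For \textit{(iii)}, I cite Lemma~\ref{lem:square-perfect-equals-bij} and Corollary~\ref{cor:bij-reduces-perfect}. On a square instance, the existence of $n\ge2$ with $\Disp_n(\mathbf t)=n^k$ is equivalent to the existence of a bijective interpretation on some finite $A$ with $|A|\ge2$, and that is exactly the $\textsc{BIJ}$ question.

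For \textit{(iv)}, fix $r\ge3$ and apply Theorem~\ref{thm:reduction-nr} with $k=r$; Theorem~\ref{thm:reduction-kk} gives the same result. The padding $\mathbf p\mapsto(p_1,p_2,p_3,x_4,\dots,x_r)$ runs in linear time. By Lemma~\ref{lem:padding-outputs} it preserves surjectivity interpretation-by-interpretation, so it is a many-one reduction into $\textsc{Perfect-Disp}_r$. A many-one reduction transfers undecidability from the source problem to the target, which gives the conditional consequence.

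For \textit{(v)}, Corollary~\ref{cor:full-rate-ptime} gives polynomial-time decidability of $\textsc{Full-Rate-Disp}$: compute $D(\mathbf t)$ by Algorithm~\ref{alg:disp-exponent} and compare it with $r$. Theorem~\ref{thm:ptime-threshold} gives the threshold statement, which rests on the equivalence ``eventually $\Disp_n(\mathbf t)\ge h_d(n)$ iff $D(\mathbf t)\ge d+1$''.

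The only point needing care is one of bookkeeping in \textit{(iv)} and \textit{(v)}, and I treat it explicitly. In \textit{(iv)}, $\textsc{Perfect-Disp}_r$ fixes only $r$ and not $k$, so the reduction's output, which has $k=r$ inputs, is a legitimate instance of that problem. In \textit{(v)}, the threshold direction relies on the constant-one upper bound $\Disp_n(\mathbf t)\le n^{D(\mathbf t)}$ from Theorem~\ref{thm:disp-exponent}\textit{(a)}; without the exact constant one, the lower threshold $h_d(n)=n^d+1$ could not be separated.
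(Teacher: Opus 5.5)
Your proposal is correct and matches the paper's proof: the theorem is assembled item by item from Theorem~\ref{thm:perfect-spectrum-structure}, Theorem~\ref{thm:one-output}, Corollary~\ref{cor:bij-reduces-perfect}, Theorems~\ref{thm:reduction-kk} and~\ref{thm:reduction-nr}, and Corollary~\ref{cor:full-rate-ptime} together with Theorem~\ref{thm:ptime-threshold}. Your two bookkeeping remarks are accurate and agree with the paper: $\textsc{Perfect-Disp}_r$ leaves the number of inputs $k$ free, and the threshold direction needs the constant-one cut bound.
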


\begin{proof}
Part \textit{(i)} is Theorem~\ref{thm:perfect-spectrum-structure}; part \textit{(ii)} is
Theorem~\ref{thm:one-output}; part \textit{(iii)} is Corollary~\ref{cor:bij-reduces-perfect}; part
\textit{(iv)} follows from Theorems~\ref{thm:reduction-kk} and~\ref{thm:reduction-nr}; and part \textit{(v)}
is Corollary~\ref{cor:full-rate-ptime} together with Theorem~\ref{thm:ptime-threshold}.
\end{proof}

\section{Square bijectivity: tractable fragments and the open frontier}
\label{sec:bij-status}

This section proves two positive fragments of the square problem before isolating the unresolved frontier.  The
one-dimensional problem is completely classified, explicit two-dimensional tuples exhibit nontrivial perfect
spectra, and the scalar-linear fragment has an exact determinant criterion.  None of these results supplies a
decision procedure for unrestricted $\textsc{BIJ}$ or for $\textsc{BIJ}_k$ with fixed $k\ge2$.

\subsection{Low-dimensional spectra}

\begin{corollary}[The one-dimensional square problem]
\label{cor:bij-one}
A single term $t(x)$ induces a bijection under some finite interpretation if and only if $x$ occurs in $t$.
If $x$ occurs, every alphabet size $n\ge2$ admits a witness; otherwise none does.
\end{corollary}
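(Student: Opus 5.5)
This corollary is the case $k=r=1$ of Theorem~\ref{thm:one-output}, read through Lemma~\ref{lem:square-perfect-equals-bij}. A one-variable term $t(x)$ is a square instance with one input and one output. By Lemma~\ref{lem:square-perfect-equals-bij}, for a fixed alphabet of size $n\ge2$ the following are equivalent: some interpretation makes $t^{\mathcal I}:A\to A$ bijective, some interpretation makes it surjective, and $n\in\PerfSpec(t)$. So the witness statement for a given $n$ is exactly membership of $n$ in the spectrum.

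I would then apply Theorem~\ref{thm:one-output} with $k=1$. Since $x$ is the only input variable, ``some input variable occurs in $t_1$'' means exactly ``$x$ occurs in $t$''.
\begin{itemize}
\item If $x$ occurs in $t$, the spectrum is $\{2,3,4,\ldots\}$, so every $n\ge2$ admits a bijective witness.
\item If $x$ does not occur, $t$ is a closed term. It evaluates to a constant under every interpretation, so its image has size $1<n$. The spectrum is therefore empty and no alphabet admits a witness.
\end{itemize}
Together these two cases give both the ``if and only if'' and the spectrum description.

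There is no real obstacle here; the one point to check is that the slice construction in Theorem~\ref{thm:one-output} still applies when $k=1$. In that case the slice, where all variables other than $x_j$ equal $c$, is all of $A$. The construction then gives $t^{\mathcal I}=x$ identically, which is directly a bijection $A\to A$. So the surjective-to-bijective step is immediate here, and it also follows from finiteness.
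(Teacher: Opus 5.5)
Your proposal is correct and follows the paper's proof, which is simply the square ($k=r=1$) specialization of Theorem~\ref{thm:one-output}. Lemma~\ref{lem:square-perfect-equals-bij} is stated with an existential over $n$, but its proof works at each fixed $n$, so your per-alphabet reading is justified. Your check that the slice construction gives $t^{\mathcal I}=x$ on all of $A$ when $k=1$ is a correct added detail.
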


\begin{proof}
This is the square specialization of Theorem~\ref{thm:one-output}.
\end{proof}

For $k=2$, the general decision problem is open, but exact spectra are already nontrivial.  Consider
\[
\boldsymbol\tau(x,y)=\bigl(f(f(x,y),f(x,x)),f(x,x)\bigr),
\]
and
\[
\boldsymbol\sigma(x,y)=\bigl(f(x,y),f(f(x,x),f(y,y))\bigr),
\]

\begin{proposition}[Two exact least-alphabet results]
\label{prop:small-spectra}
The least element of $\PerfSpec(\boldsymbol\tau)$ is $3$, and the least element of
$\PerfSpec(\boldsymbol\sigma)$ is $4$.  Consequently,
\[
\{3^m:m\ge1\}\subseteq\PerfSpec(\boldsymbol\tau),
\qquad
\{4^m:m\ge1\}\subseteq\PerfSpec(\boldsymbol\sigma).
\]
\end{proposition}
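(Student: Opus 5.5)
The plan is to prove each least-element claim in two halves: exhibit an explicit witness at the claimed size, and rule out every smaller size $n\ge2$. The displayed inclusions then follow at once from Theorem~\ref{thm:perfect-spectrum-structure}\textit{(i)}, since the spectrum is closed under powers.

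\textbf{The tuple $\boldsymbol\tau$.} I would first derive a structural characterization. Write $d(x):=f(x,x)$, so that $\boldsymbol\tau(x,y)=\bigl(f(f(x,y),d(x)),\,d(x)\bigr)$, and assume $\boldsymbol\tau$ is bijective on $A^2$. The fibre of the second output over $c$ is $d^{-1}(c)\times A$, of size $n|d^{-1}(c)|$. The first output can separate at most $n$ points of that fibre, so every fibre of $d$ is a singleton and $d$ is a permutation. With $x=d^{-1}(c)$ fixed, the map $y\mapsto f(f(x,y),c)$ must be a permutation of $A$. Hence each row $f(x,\cdot)$ is a permutation, and the column map $z\mapsto f(z,c)$ is a permutation, for every $c$ because $d$ is onto. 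So $f$ is a Latin square whose diagonal is a permutation. The converse also holds: for such $f$ the first output is a composition of two permutations on each fibre.

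For $n=2$ this fails, because both Latin squares of order two have constant diagonal. For $n=3$, take $f(x,y)=x+y \bmod 3$; its diagonal $2x$ is a permutation. Hence $\min\PerfSpec(\boldsymbol\tau)=3$.

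\textbf{The tuple $\boldsymbol\sigma$.} Here $\boldsymbol\sigma(x,y)=\bigl(f(x,y),\,f(d(x),d(y))\bigr)$ with the same $d$.

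\emph{Witness at $n=4$.} I would search among $\mathbb F_2$-linear interpretations on $A=\mathbb F_2^2$, with $f(x,y)=Mx+Ny$ for $2\times2$ matrices $M,N$. Then $d=M+N$, and $\boldsymbol\sigma$ becomes the block matrix
\[
\begin{pmatrix} M & N\\ (M+N)M & (M+N)N\end{pmatrix}.
\]
I would need this $4\times4$ matrix to be invertible over $\mathbb F_2$. The scalar or commuting choices are useless: if $M=I$, subtracting $(I+N)$ times the first block row kills the second row entirely. The witness must therefore use noncommuting $M,N$, for instance one invertible matrix and one nilpotent or projection matrix. Checking the Schur complement $(M+N)N-(M+N)MM^{-1}N$ for a few such pairs should produce an invertible instance. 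If no linear choice works, I would fall back to a direct search over all $4^{16}$ tables, pruned by the fibre structure below.

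\emph{Exclusion of $n=2,3$.} I would first extract necessary conditions in the same way as for $\boldsymbol\tau$. Each fibre of $f$ must have exactly $n$ points. On each such fibre the map $(x,y)\mapsto f(d(x),d(y))$ must be injective, so $d\times d$ is injective on every fibre of $f$. This prunes the candidates heavily: $f$ is a balanced binary operation, and the pairs it identifies must stay distinguished after applying $d$. For $n=2$ the balanced tables are few enough to check by hand. For $n=3$ I would use symmetry reduction under relabelling $A$ (conjugating $f$ by a permutation preserves bijectivity) and then finish with a finite check of the remaining $3^9$ tables.

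\textbf{Main obstacle.} The hard step is the $\boldsymbol\sigma$ analysis: finding an explicit size-four witness and, especially, giving a clean non-exhaustive proof that size three is impossible. I expect the impossibility argument to rest ultimately on a finite verification, which is also the natural place for the Lean check.
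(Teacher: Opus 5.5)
Your fibre analysis of $\boldsymbol\tau$ is correct and takes a different route from the paper. The paper simply cites exhaustive searches over all $n^{n^2}$ tables at $n=2,3$ and exhibits $f(x,y)=x+y$ on $\mathbb F_3$. Your argument shows that $\boldsymbol\tau$ is bijective exactly when $f$ is a Latin square whose diagonal is a permutation. That settles $n=2$ by hand, and it describes the whole spectrum as the set of orders admitting a Latin square with a transversal. Your exclusion of $n=2,3$ for $\boldsymbol\sigma$ rests on the same finite check as the paper, with your balancing conditions serving only as pruning, so that part matches.

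The gap is the size-four witness for $\boldsymbol\sigma$. You never produce one, and the linear search you propose is set up incorrectly. With $f(u,v)=Mu+Nv$ and $d=M+N$, one has $f(d(x),d(y))=M(M+N)x+N(M+N)y$. So the second block row is $\bigl(M(M+N),\,N(M+N)\bigr)$, not $\bigl((M+N)M,\,(M+N)N\bigr)$. Your matrix has second block row equal to $(M+N)$ times the first block row, so it is singular for every $M,N$. Your Schur complement $(M+N)N-(M+N)MM^{-1}N$ is identically zero. The search as written therefore cannot succeed, which leaves only the unexecuted fallback over $4^{16}$ tables.

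With the products in the correct order the idea works. Take $M=\begin{pmatrix}0&1\\1&0\end{pmatrix}$ and $N=\begin{pmatrix}1&0\\0&0\end{pmatrix}$ on $\mathbb F_2^2$. Then $f(x,y)=(x_2+y_1,\,x_1)$, $d(x)=(x_1+x_2,\,x_1)$, and
\[
\boldsymbol\sigma(x,y)=\bigl((x_2+y_1,\,x_1),\,(x_1+y_1+y_2,\,x_1+x_2)\bigr).
\]
From this output you recover $x_1$, then $x_2$, then $y_1$, then $y_2$ in turn, so the map is bijective and $4\in\PerfSpec(\boldsymbol\sigma)$. The paper instead lists an explicit $4\times4$ operation table. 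This witness is matrix-linear but not scalar-linear, consistent with your correct observation that commuting $M,N$ always fail.
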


\begin{proof}
For each alphabet size $n=2,3$, a complete check consists of enumerating the $n^{n^2}$ binary operation tables
$f:A^2\to A$ and, for every table, testing the $n^2$ input pairs for distinct outputs.  This is only $16$
tables for $n=2$ and $19{,}683$ tables for $n=3$.  The exhaustive checks give
$2\notin\PerfSpec(\boldsymbol\tau)$ and
$2,3\notin\PerfSpec(\boldsymbol\sigma)$.

For $\boldsymbol\tau$ on $\mathbb F_3$, take $f(x,y)=x+y$.  Then
\[
\boldsymbol\tau(x,y)=(y,2x),
\]
which is bijective, so $3\in\PerfSpec(\boldsymbol\tau)$.  For $\boldsymbol\sigma$, a witness on
$\{0,1,2,3\}$ is the following table, with rows indexed by the first argument and columns by the second:
\[
\begin{array}{c|cccc}
f & 0&1&2&3\\ \hline
0 & 2&2&3&2\\
1 & 1&0&1&3\\
2 & 0&1&1&2\\
3 & 3&0&0&3
\end{array}
\]
Direct substitution verifies that the resulting map $A^2\to A^2$ is bijective.  Thus
$4\in\PerfSpec(\boldsymbol\sigma)$.  The assertions about powers follow from
Theorem~\ref{thm:perfect-spectrum-structure}\textit{(i)}.
\end{proof}

\subsection{A complete scalar-linear classification}
\label{subsec:linear-decidable}
Interpret each $r$-ary function symbol by
\[
s(v_1,\dots,v_r)=\lambda_1v_1+\cdots+\lambda_rv_r
\]
with scalar coefficients in a field $K$.
Then every term becomes a linear form in the input variables.
Introducing commuting indeterminates $\lambda_{s,j}$ for the argument-slots of the symbols in the signature, one
defines recursively a symbolic coefficient matrix
\[
C_{\mathbf t}\in M_k\bigl(\mathbb Z[\lambda_{s,j}]\bigr)
\]
for every square tuple $\mathbf t=(t_1,\dots,t_k)$, and writes
\[
\Delta_{\mathbf t}:=\det C_{\mathbf t}.
\]

\begin{theorem}[Scalar-linear determinant criterion]
\label{prop:scalar-det-criterion-main}
For every field $K$, the tuple $\mathbf t$ has a scalar-linear witness over $K$ if and only if there exists an
assignment of the scalar coefficients in $K$ for which
\[
\Delta_{\mathbf t}(a)\neq 0\quad\text{in }K.
\]
\end{theorem}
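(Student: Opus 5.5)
The plan is to make the recursive definition of $C_{\mathbf t}$ explicit and prove a specialization identity; the criterion then follows from linear algebra over a finite field. Since witnesses live on finite alphabets, ``scalar-linear witness over $K$'' should be read for finite $K$, with alphabet $A=K$ and each $r$-ary symbol $s$ interpreted as $s(v_1,\dots,v_r)=\sum_j a_{s,j}v_j$ for an assignment $a=(a_{s,j})\in K^{\#\text{slots}}$. Each term $u$ gets a row vector $c_u\in\mathbb Z[\lambda_{s,j}]^k$ by recursion on the shared term DAG. For a variable, $c_{x_i}=e_i$. For a composite term, $c_{s(u_1,\dots,u_r)}=\sum_j \lambda_{s,j}\,c_{u_j}$. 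A constant ($r=0$) is interpreted as the empty sum $0$, so its vector is $0$. $C_{\mathbf t}$ is the matrix with rows $c_{t_1},\dots,c_{t_k}$. This recursion respects sharing: the same symbol $s$ always uses the same indeterminates $\lambda_{s,j}$, matching the fact that an interpretation assigns one operation per symbol, not one per occurrence.

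\emph{Step 1 (evaluation lemma).} By structural induction on $u$, for every assignment $a$ and every input $\mathbf v\in K^k$,
\[
u^{\mathcal I_a}(\mathbf v)=\varphi_a(c_u)\cdot\mathbf v,
\]
where $\varphi_a:\mathbb Z[\lambda]\to K$ is the evaluation ring homomorphism, which factors through the prime ring of $K$. The base case is immediate. The inductive step uses linearity: $\sum_j a_{s,j}(\varphi_a(c_{u_j})\cdot\mathbf v)=\varphi_a(\sum_j\lambda_{s,j}c_{u_j})\cdot \mathbf v$. Hence $\Theta^{\mathcal I_a}$ is the linear map $\mathbf v\mapsto \varphi_a(C_{\mathbf t})\mathbf v$. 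Because the determinant is a polynomial with integer coefficients in the entries and $\varphi_a$ is a ring homomorphism, $\det\varphi_a(C_{\mathbf t})=\varphi_a(\Delta_{\mathbf t})=\Delta_{\mathbf t}(a)$.

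\emph{Step 2 (equivalence).} A linear endomorphism of the finite-dimensional space $K^k$ is bijective iff its matrix is invertible, iff its determinant is nonzero. So $\Theta^{\mathcal I_a}$ is bijective iff $\Delta_{\mathbf t}(a)\ne0$ in $K$. Quantifying existentially over $a$ gives both directions at once. The condition $|A|\ge2$ holds automatically since $|K|\ge2$.

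The main point requiring care is Step 1's bookkeeping rather than any deep argument. Three things need to be handled: the reuse of a symbol across occurrences, constants, and terms in which some variable is absent (that variable's column is simply zero). None of these obstruct the induction. One should also record the reduction modulo the characteristic explicitly: $\Delta_{\mathbf t}\in\mathbb Z[\lambda]$ is evaluated through $\mathbb Z\to K$. This is what later lets the criterion be specialized to fixed fields, fixed characteristic, or arbitrary finite fields.
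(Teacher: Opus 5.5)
Your proposal is correct and takes the same route as the paper: expand each term into a linear form through the recursive coefficient matrix, identify the induced map with $C_{\mathbf t}(a)$, and use that bijectivity is equivalent to $\det C_{\mathbf t}(a)\neq 0$. You make the recursion, the constants, and the identity $\det\varphi_a(C_{\mathbf t})=\varphi_a(\Delta_{\mathbf t})$ explicit where the paper leaves them implicit, and your restriction to finite $K$ is harmless, since Step 2 holds over every field, which is how the paper states the criterion.
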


\begin{proof}
Under a scalar-linear interpretation, each term $t_i(\mathbf x)$ expands by induction to a linear form
\[
t_i(\mathbf x)=\sum_{j=1}^k c_{ij}\,x_j,
\]
where the coefficients $c_{ij}$ are obtained from the recursive construction of $C_{\mathbf t}$.
Hence the induced map on $K^k$ is represented by the matrix $C_{\mathbf t}(a)$ obtained by evaluating the symbolic
coefficients at the chosen scalars $a=(\lambda_{s,j})$.
Bijectivity is therefore equivalent to invertibility of this matrix, i.e. to
\[
\det C_{\mathbf t}(a)\neq 0.
\]
\end{proof}

\begin{corollary}[Finite-field classification of scalar-linear witnesses]
\label{cor:scalar-linear-classification}
The following hold.
\begin{enumerate}[label=\textit{(\alph*)}, leftmargin=*]
\item Over any fixed finite field $K$, scalar-linear bijectivity is decidable.
\item For a fixed prime $p$, a scalar-linear witness exists over some finite extension of $\mathbb F_p$ if and
      only if $\Delta_{\mathbf t}$ is nonzero modulo $p$.
\item A scalar-linear witness exists over some finite field, of arbitrary characteristic, if and only if
      $\Delta_{\mathbf t}$ is a nonzero polynomial in $\mathbb Z[\lambda_{s,j}]$.
\end{enumerate}
\end{corollary}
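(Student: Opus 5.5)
All three parts rest on the determinant criterion of Theorem~\ref{prop:scalar-det-criterion-main}. That theorem turns existence of a scalar-linear witness over a field $K$ into existence of a point $a\in K^N$ with $\Delta_{\mathbf t}(a)\neq0$. Here $N$ is the number of coefficient indeterminates $\lambda_{s,j}$, and $\Delta_{\mathbf t}$ is read through the canonical ring map $\mathbb Z\to K$. First we check that the recursive construction of $C_{\mathbf t}$ is effective and uses only integer coefficients, so $\Delta_{\mathbf t}\in\mathbb Z[\lambda_{s,j}]$ is computable. Its reductions modulo $p$ are then computable too. Since the polynomial could be large for DAG inputs, we claim only decidability, not a complexity bound.

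For \textit{(a)}, $K$ is a fixed finite field, so $K^N$ is finite. We enumerate all assignments and evaluate $\Delta_{\mathbf t}$ at each, or equivalently evaluate $\det C_{\mathbf t}(a)$ directly over $K$. This decides existence by the criterion.

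For \textit{(b)}, let $\bar\Delta$ denote the image of $\Delta_{\mathbf t}$ in $\mathbb F_p[\lambda_{s,j}]$. If $\bar\Delta=0$, then it vanishes at every point of every extension $\mathbb F_{p^e}$, so no witness exists. Conversely, suppose $\bar\Delta\neq0$, with degree at most $D$ in each variable. Choose $e$ with $p^e>D$. We then use the standard fact that a nonzero polynomial over a field has a nonzero value on $S^N$ whenever $|S|$ exceeds its degree in each variable. This follows by induction on $N$, or from Schwartz--Zippel. Applied with $S=\mathbb F_{p^e}$, it gives a point with $\bar\Delta(a)\neq0$, hence a witness over $\mathbb F_{p^e}$.

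For \textit{(c)}, if $\Delta_{\mathbf t}=0$ in $\mathbb Z[\lambda]$, then its image in every $\mathbb F_q[\lambda]$ is zero, so no witness exists. If $\Delta_{\mathbf t}\neq0$, pick any nonzero integer coefficient $c$ and any prime $p\nmid c$. Then $\Delta_{\mathbf t}\not\equiv0\pmod p$, and part \textit{(b)} produces a witness over some $\mathbb F_{p^e}$. Decidability follows because nonvanishing of an integer polynomial is decidable by expansion.

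The only delicate step is the passage from $\bar\Delta\neq0$ as a formal polynomial to a nonvanishing point. Over $\mathbb F_p$ itself this can fail: $\lambda^p-\lambda$ is nonzero but vanishes on all of $\mathbb F_p$. This is exactly why \textit{(b)} quantifies over extensions, and the degree-versus-field-size lemma handles it. We should also note that a square tuple with an output built only from constants gives $\Delta_{\mathbf t}=0$, which is consistent with all three parts.
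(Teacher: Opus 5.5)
Your proposal is correct and follows the paper's proof essentially step for step: exhaustive search over coefficient assignments for \textit{(a)}; for \textit{(b)}, nonvanishing of a nonzero reduction over an extension of $\mathbb F_p$ whose size exceeds the degree in each variable, with trivial vanishing when the reduction is zero; and for \textit{(c)}, reduction modulo a prime that does not divide some nonzero coefficient. The only nit is that Schwartz--Zippel in its usual form needs $|S|$ to exceed the total degree rather than the per-variable degree, but you can take $e$ as large as you like, or use the inductive argument you also cite, so nothing changes.
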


\begin{proof}
Part \textit{(a)} is finite search over the coefficient assignments.  For \textit{(b)}, a nonzero polynomial
over $\mathbb F_p$ has a nonzero evaluation over a sufficiently large finite extension: choose an extension
whose cardinality exceeds the degree in every variable.  Conversely, an identically zero reduction cannot
have a nonzero evaluation.  For \textit{(c)}, if the integer polynomial is nonzero, choose a prime $p$ that
does not divide one of its nonzero coefficients and apply \textit{(b)}; the converse is immediate.
\end{proof}

\subsection{The matrix-linear frontier}
\label{subsec:matrix-linear}
Restrict attention to \emph{matrix-linear} witnesses: interpret each $r$-ary function symbol by a linear map
$s(v_1,\dots,v_r)=A_1v_1+\cdots+A_rv_r$ where each $A_i\in M_d(K)$ for some field $K$ and some $d\ge 1$.
The induced square map $\mathbf t^\A$ is then a linear map on $(K^d)^k$, and bijectivity is equivalent to
invertibility of an $dk\times dk$ block matrix whose entries are noncommutative polynomials in the coefficient
matrices.

Higman linearisation suggests converting this block matrix to an affine linear pencil and then testing fullness
over a free skew field.  Deterministic polynomial-time algorithms for the resulting nc-rank problem are known
\parencite{garg2016deterministic,ivanyos2018constructive}.  A complete application here would still have to
spell out the linearisation and the passages between base fields, extensions, and characteristics.  We do not
undertake those steps in this paper, and make no theorem-level claim about the matrix-linear decision problem or
its characteristic spectrum.  This perspective is included only as a possible route for future work.

\subsection{Why undecidability is not straightforward}
\label{subsec:why-hard}
The most natural approach to proving undecidability of $\textsc{BIJ}_k$ would be to reduce from a known
undecidable problem by encoding equality constraints as bijectivity conditions.  However, bijectivity is naturally
good at expressing \emph{non-collapse} (injectivity, recoverability, orthogonality) but not at forcing pointwise
equalities.  This ``polarity mismatch'' is a persistent obstacle: a local full-image constraint that is supposed
to force $u=v$ can typically be defeated by duplicating a successful row at a different element $v\neq u$.

The closest conceptual match is the undecidability of finite developability for permutoids \parencite{bridson2017undecidability}, which
asks whether a finite system of partial bijections can be completed to honest permutations on a finite set.
However, even a successful encoding via a Cameron-atlas reduction would, in its naive form, be compatible with
field-based matrix-linear witnesses (since finite permutations lift to permutation matrices).  Any undecidability
proof would therefore need to combine an atlas encoding with a nonlinear obstruction that blocks the linear witnesses.

\begin{remark}[The proof sketch in the earlier version]
\label{rem:earlier-proof-sketch}
We have not been able to formalise the counter-machine proof sketch from the first arXiv version into a complete
rigorous proof.  That argument was explicitly headed \enquote{Detailed proof sketch (structure suitable for
formalisation)}.  It described the intended simulation, but did not prove the forcing step needed to show that
an arbitrary bijective interpretation of the auxiliary symbols follows the intended machine dynamics.
Accordingly, we withdraw the earlier claim that the finite square-bijectivity decision problem is
undecidable, together with the unconditional undecidability claims for perfect dispersion that depended on it.
No result in the present paper uses the sketch.  The decidability of $\textsc{BIJ}$ and of $\textsc{BIJ}_k$ for
fixed $k\ge2$ remains open.
\end{remark}

\subsection{Current assessment}
The problems $\textsc{BIJ}$ and $\textsc{BIJ}_k$ for fixed $k\ge 2$ remain open.  The decidable linear
fragments, the structural constraints on bijectivity spectra (multiplicative closure, universal-sentence origin,
and a purely functional language), and the polarity mismatch all obstruct the most direct reduction strategies.
None of these facts proves decidability.  Conversely, no sound finite-model compiler establishing
undecidability is currently available.  The correct conclusion is therefore an open-problem statement, not a
directional claim.

\section*{Declaration of generative AI and AI-assisted technologies in the manuscript preparation process}
During the preparation of this work, the author used Anthropic Claude and OpenAI Codex to assist with manuscript
organisation, language editing, critical auditing, and the preparation and review of the Lean formalisation.
The author reviewed and edited all resulting content as needed and takes full responsibility for the content of
the publication.  The machine-verification statements in Section~\ref{subsec:verification} refer to checking by
the Lean kernel.

\bibliographystyle{plainnat}
\bibliography{termcoding_submission}

\end{document}